\let\oldFootnote\footnote
\newcommand\nextToken\relax
\renewcommand\footnote[1]{%
    \oldFootnote{#1}\futurelet\nextToken\isFootnote}
\newcommand\isFootnote{%
    \ifx\footnote\nextToken\textsuperscript{,}\fi}
\definecolor{mybluecolor}{RGB}{50,106,218}
\definecolor{myredcolor}{RGB}{176,53,53}
\definecolor{mygreencolor}{RGB}{93,172,0}
\definecolor{myyellowcolor}{RGB}{255,163,34}
\definecolor{mypurplecolor}{RGB}{86,35,132}
\definecolor{mytealcolor}{RGB}{30,161,165}
\newcommand{\funcCall}[2]{{{\mathit{#1}}\left({#2}\right)}}
\newcommand{\plotdata}[8]{\begin{tikzpicture}
\begin{semilogxaxis}[
height=4.5cm,
width=.48\linewidth,
axis x line=bottom,
axis y line=left,
xlabel=log size,
ylabel={#1},
xlabel near ticks,
ylabel near ticks,
y tick label style={/pgf/number format/.cd, fixed, fixed zerofill, precision=2, 
/tikz/.cd, font=\footnotesize},
x tick label style={font=\footnotesize},
label style={font=\footnotesize},
legend style={font=\footnotesize, legend pos=#8},
grid=both
]
\pgfplotstableread[col sep=comma]{fig/plots/data.csv}{\table}
\addplot[very thick,mark=*,color=#2] table [x index=0,y index=#6,col sep=comma] 
{\table};
\addlegendentryexpanded{#4}
\addplot[very thick,mark=diamond*,color=#3] table [x index=0,y index=#7,col 
sep=comma] {\table};
\addlegendentryexpanded{#5}
\end{semilogxaxis}
\end{tikzpicture}
}
\newcommand{\plottime}[9]{\begin{tikzpicture}
\begin{loglogaxis}[ 
height=4.5cm,
width=.48\linewidth,
axis x line=bottom,
axis y line=left,
xlabel=log size,
ylabel=$\mathit{time\,(ms)}$,
xlabel near ticks,
ylabel near ticks,
y tick label style={font=\footnotesize},
x tick label style={font=\footnotesize},
label style={font=\footnotesize},
legend style={font=\footnotesize,at={(0.5,-0.2)},anchor=north,legend columns=-1},
grid=both
]
\pgfplotstableread[col sep=comma]{fig/plots/time.csv}{\table}
\addplot[very thick,mark=*,color=#1] table [x index=0,y index=#7,col sep=comma] {\table};
\addlegendentryexpanded{#4}
\addplot[very thick,mark=diamond*,color=#2] table [x index=0,y index=#8,col sep=comma] {\table};
\addlegendentryexpanded{#5}
\addplot[very thick,mark=square*,color=#3] table [x index=0,y index=#9,col sep=comma] {\table};
\addlegendentryexpanded{#6}
\end{loglogaxis} 
\end{tikzpicture}
}
\newcommand{\MONTH}{%
  \ifcase\the\month
  \or January
  \or February
  \or March
  \or April
  \or May
  \or June
  \or July
  \or August
  \or September
  \or October
  \or November
  \or December
  \fi}
\begin{document}

\title[]{Monotone Precision and Recall Measures for Comparing Executions and Specifications of Dynamic Systems}

\author{Artem Polyvyanyy}
\email{artem.polyvyanyy@unimelb.edu.au}
\orcid{0000-0002-7672-1643}
\affiliation{%
  \institution{The University of Melbourne}
  \streetaddress{Level 8, Doug McDonell Building}
  \city{Parkville}
  \state{VIC}
  \postcode{3010}
  \country{Australia}
}

\author{Andreas Solti}
\email{solti@ai.wu.ac.at}
\orcid{0000-0002-0537-6598}
\affiliation{\institution{Vienna University of Economics and Business}\country{Austria}}

\author{Matthias Weidlich}
\email{matthias.weidlich@hu-berlin.de}
\orcid{0000-0003-3325-7227}
\affiliation{\institution{Humboldt University of Berlin}\country{Germany}}

\author{Claudio Di Ciccio}
\email{diciccio@di.uniroma1.it}
\orcid{0000-0001-5570-0475}
\affiliation{\institution{Sapienza University of Rome}\country{Italy}}

\author{Jan Mendling}
\email{mendling@ai.wu.ac.at}
\orcid{0000-0002-7260-524X}
\affiliation{\institution{Vienna University of Economics and Business}\country{Austria}}

\renewcommand{\shortauthors}{A. Polyvyanyy, A. Solti, M.Weidlich, C. Di Ciccio, and J. Mendling}

\begin{abstract}
The behavioural comparison of systems is an important concern of software engineering research. For example, the areas of \emph{specification discovery} and \emph{specification mining} are concerned with measuring the consistency between a collection of execution traces and a program specification. This problem is also tackled in \emph{process mining} with the help of measures that describe the quality of a process specification automatically discovered from execution logs. Though various measures have been proposed, it was recently demonstrated that they neither fulfil essential properties, such as \emph{monotonicity}, nor can they handle infinite behaviour. 
In this paper, we address this research problem by introducing a new framework for the definition of behavioural quotients. We proof that corresponding quotients guarantee desired properties that existing measures have failed to support. We demonstrate the application of the quotients for capturing precision and recall measures between a collection of recorded executions and a system specification. We use a prototypical implementation of these measures to contrast their monotonic assessment with measures that have been defined in prior research. 
\end{abstract}

\begin{CCSXML}
<ccs2012>
<concept>
<concept_id>10003752</concept_id>
<concept_desc>Theory of computation</concept_desc>
<concept_significance>500</concept_significance>
</concept>
<concept>
<concept_id>10003752.10003766</concept_id>
<concept_desc>Theory of computation~Formal languages and automata theory</concept_desc>
<concept_significance>300</concept_significance>
</concept>
<concept>
<concept_id>10003752.10003766.10003776</concept_id>
<concept_desc>Theory of computation~Regular languages</concept_desc>
<concept_significance>300</concept_significance>
</concept>
<concept>
<concept_id>10011007</concept_id>
<concept_desc>Software and its engineering</concept_desc>
<concept_significance>500</concept_significance>
</concept>
<concept>
<concept_id>10011007.10010940.10010992</concept_id>
<concept_desc>Software and its engineering~Software functional properties</concept_desc>
<concept_significance>500</concept_significance>
</concept>
<concept>
<concept_id>10011007.10011074.10011099</concept_id>
<concept_desc>Software and its engineering~Software verification and validation</concept_desc>
<concept_significance>300</concept_significance>
</concept>
<concept>
<concept_id>10011007.10011074.10011111</concept_id>
<concept_desc>Software and its engineering~Software post-development issues</concept_desc>
<concept_significance>300</concept_significance>
</concept>
<concept>
<concept_id>10002950.10003712</concept_id>
<concept_desc>Mathematics of computing~Information theory</concept_desc>
<concept_significance>500</concept_significance>
</concept>
</ccs2012>
\end{CCSXML}

\ccsdesc[500]{Theory of computation}
\ccsdesc[300]{Theory of computation~Formal languages and automata theory}
\ccsdesc[300]{Theory of computation~Regular languages}
\ccsdesc[500]{Software and its engineering}
\ccsdesc[500]{Software and its engineering~Software functional properties}
\ccsdesc[300]{Software and its engineering~Software verification and validation}
\ccsdesc[300]{Software and its engineering~Software post-development issues}
\ccsdesc[500]{Mathematics of computing~Information theory}

\keywords{System comparison, behavioural comparison, behavioural analysis, entropy, process mining, conformance checking, precision, recall, fitness, coverage.}

\maketitle

\section{Introduction}
\label{sec:introduction}

The analysis of dynamic systems is a focus of software engineering research~\cite{DBLP:journals/scp/Harel87,DBLP:journals/jss/Vogel-HeuserFST15}, and other related areas, for example business process management~\cite{Weske2012,Dumas.etal/2018:FundamentalsofBPM}, information systems~\cite{basu2002research,breuker2016comprehensible},
social science~\cite{abbott1990measuring,cornwell2015social}, and management science~\cite{pentland2003conceptualizing}.
Software engineering research is primarily concerned with the analysis of behaviours captured in software systems, program specifications, and execution traces.
This analysis often takes the form of behaviour comparison, with use cases
ranging from
specification discovery~\cite{Cook1998,Reiss2001,Mariani2005,Lo2007} and specification mining~\cite{Ammons2002,Lo2006a,Pradel.etal/ICSM2010:FrameworkForEvaluationOfSpecificationMinersFSMs,Santhiar2014}, through
conformance checking between requirements and specifications~\cite{Ali2013},
software evolution~\cite{DAmbros2008},
software test coverage~\cite{Berner2007,Tuya2016},
and black-box software 
testing~\cite{Walkinshaw.etal/FM2009,Weyuker/ATPLS1983},
 to
measurements of accuracy of the reverse-engineered specifications~\cite{Lo2006,Walkinshaw2013}.
For example, specification discovery and specification mining study ways to infer software specifications from program executions.
The quality of such inference techniques is often defined in terms of
measurements of discrepancies between the execution traces used as input and
the resulting program specifications.
\emph{Process mining}~\cite{Aalst16} integrates these perspectives by comparing the behaviour of a system as specified with the behaviour recorded during execution and has applications in
computationally-intensive theory development~\cite{berente2018data}.

A key challenge in the analysis of dynamic systems is the definition of
meaningful measures that express the degree to which \emph{different
system behaviours are in line with each other}.
Technically, such comparisons are formulated
in a \emph{relative} manner, defining a
\emph{quotient} of some aspect of one behaviour
over the same aspect of another behaviour.
For instance, the quotients of the behaviours of a system at different points in
time reveal how the system has changed. In process mining, in turn, the
quotient of the behaviour of a system as
recorded in a log over the behaviour as specified can be
used to analyse the trustworthiness of the latter. Yet, defining such quotients
is challenging: A recent
commentary on measures in process mining identifies a set of intuitive
properties and shows that none of the available measures
fulfils them~\cite{TaxLSFA17}.

We approach the above problem based on the notion of a \emph{formal language}.
This is a suitable starting point because the sequential (state-based)
behaviour of a dynamic system, e.g., a software system or information system,
can be modelled as a state machine or an
automaton~\cite{Cheng1993,Boerger2005}.
An action represents an atomic unit of work, which, depending on the
type of system, may be a program instruction, a Web service call, or a manual activity executed
by a human agent. The behaviour of a system, therefore, can
be represented by a \emph{language} that defines a set of words over its
actions. Then, each word is one possible execution (also known as a run,
trace, sequence, or process) of the system.
Alternatively, the comparison of the behaviours of dynamic systems was tackled in the literature using model structure~\cite{Walkinshaw2013} or abstract representations of the behaviours~\cite{Weidlich2011}.

Behavioural comparison based on quotients of languages faces two major challenges.
First and foremost, quotients have to satisfy essential properties in order to facilitate a reasonable interpretation. One
such property is \emph{monotonicity}: When increasing the amount of behaviour in the
numerator of a quotient while leaving the amount of behaviour in the denominator unchanged, the quotient
shall increase as well. Existing quotients as proposed, e.g., in the field of
process mining~\cite{Aalst16} to compare recorded and specified behaviour,
do not satisfy this well-motivated property~\cite{TaxLSFA17,Aalst18a}.
The second challenge relates to the definition of quotients in the presence of systems that describe infinite behaviours, i.e., the behaviours that consist of infinitely
many words.
In that case, quotients defined over standard aspects of languages, such as their cardinality, are not meaningful for
behavioural comparison.
In process mining, this issue has been avoided by using behavioural abstractions that capture a language by means of pairwise relations over its actions~\cite{Weidlich2011}.
Yet, such an abstraction does not capture the complete language semantics of a system~\cite{PolyvyanyyADG16} and, thus, introduces a bias into the behavioural
comparison. 
In software engineering, this issue is avoided by substituting the behaviour of a program specification with a finite collection of its simulated execution traces~\cite{Lo2006,Walkinshaw2013}.
Still, these approaches suffer from the problem of sampling the suitable finite portion of a possibly infinite behaviour~\cite{Walkinshaw2008}.

In this paper, we address the problem of \emph{how to define meaningful quotients for behavioural comparison of finite and infinite languages.}
To solve this problem, we define measures that quantify the relation between
the specified and recorded behaviours. Concretely, this article contributes:
\begin{compactenum}[(i)]
\item
A framework for the definition of behavioural quotients that guarantee desired properties.
\item
The definition of two quotients as instantiations of the framework that are grounded in
the cardinality of a language (for finite languages) and the entropy of an automaton (for finite and infinite languages).
\item
Application of the proposed quotients to define monotone precision and recall measures between the behaviour as recorded in an execution log of a system and the behaviour captured in a specification of the system.
\item
A publicly available implementation of the proposed precision and recall quotients.
\item
An evaluation using execution logs of real IT systems that contrasts the monotonicity of our precision and recall quotients with the state-of-the-art measures in process mining.
\end{compactenum}
\smallskip

\noindent
The remainder of this article is structured as follows:
\autoref{sec:back} describes the background of the research problem we address.
\autoref{sec:preliminaries} introduces formal preliminaries in terms of languages and automata.
The framework for the definition of quotients is introduced in \autoref{sec:framework}.
This section also includes two instantiations of the framework and a discussion of formal properties of the quotients.
In \autoref{sec:precision_recall}, we present quotients of precision and recall for comparisons of a collection of recorded system executions with a system specification.
\autoref{sec:implementation} discusses our open source implementation of quotients for comparing specifications and executions of systems.
The precision and recall quotients are compared to other measures in a series of experiments using real-world data in \autoref{sec:precision_recall_evaluation}.
\autoref{sec:related_work} discusses our contributions in the light of related work.
\autoref{sec:discussions} discusses threats to the validity of the reported conclusions, lessons we learned in the course of this work, and issues related to the adoption of the presented methods in software engineering practice. 
Finally, \autoref{sec:conclusion} concludes the paper.

\section{Background on Behavioural Comparison}
\label{sec:back}

The behaviour of dynamic systems can be captured by the help of languages over their actions. This comes with the benefit that their behavioural differences and commonalities can be analyzed by comparing the respective languages. Behavioural comparisons can be summarized using measures that quantify an aspect of a language, such as its cardinality, i.e., the number of words defined by the language. A ratio of such aspects facilitates a relative comparison of two languages by putting one behaviour is into perspective of some base behaviour. We refer to such a ratio as a
\emph{language quotient}, \ie
\smash{$\mathit{(language)\ quotient} := \frac{\mathit{measure}(\mathit{language}_1)}{\mathit{measure}(\mathit{language}_2)}.$}


\renewcommand{\arraystretch}{.6}

\begin{figure*}[h]
\vspace{-3mm}
\centering
\subfloat[Actions]{
   \includegraphics[scale=1] {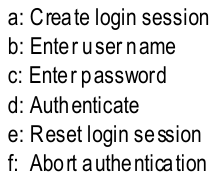}
   \label{fig:intro_example_actions}
 }
\subfloat[System $\mathcal{S}_1$]{
   \includegraphics[scale=1] {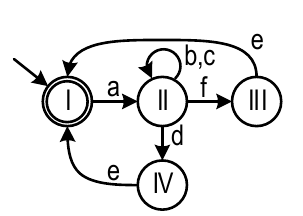}
   \label{fig:intro_example_aut_1}
 }
\subfloat[System $\mathcal{S}_2$]{
   \includegraphics[scale=1, trim=0 0 0 0] {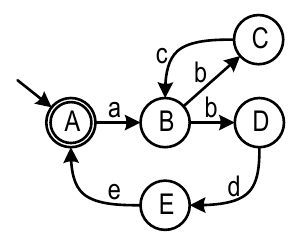}
   \label{fig:intro_example_aut_2}
 }
\subfloat[System $\mathcal{S}_3$]{
  \includegraphics[scale=1, trim=0 3mm 0 0] {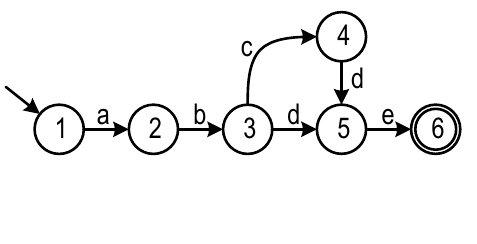}
  \label{fig:intro_example_aut_3}
}
\linebreak
\subfloat[Logs]{
  \label{fig:intro_logs}
  \resizebox{\textwidth}{!}{%
  $\begin{array}{l}
    \mathcal{L}_1:= [\sequence{a,b,d,e}, \sequence{a,b,c,b,c,d,e}]
    \qquad
    \mathcal{L}_2:= \mathcal{L}_1  \cupplus
      [\sequence{a,b,c,c,d,e}, \sequence{a,f,e},\sequence{a,f,e}]
    \qquad
    \mathcal{L}_3:= [\sequence{a,b,c,b,c,d,e}, \sequence{a,b,b,f}, \sequence{a,f,e}]
    \end{array}$
  }
}
\vspace{-1mm}
\caption{Exemplary systems and logs capturing a login process.}
\label{fig:intro_example}
\vspace{-2mm}
\end{figure*}

\renewcommand{\arraystretch}{1}

\begin{example}
\label{ex:running}
For illustration purposes, consider the scenario of a user logging into some 
application. 
\autoref{fig:intro_example_actions} lists the corresponding actions, such as \emph{creating a login session} or conducting the actual \emph{authentication}. Specific realisations of this scenario are given as finite automata in \figurenames~\ref{fig:intro_example_aut_1}--\ref{fig:intro_example_aut_3}. 
Albeit similar, the systems $\mathcal{S}_1$, $\mathcal{S}_2$, and $\mathcal{S}_3$ define different languages over the actions, denoted by $L(\mathcal{S}_1)$,
$L(\mathcal{S}_2)$, and $L(\mathcal{S}_3)$, respectively. 
Note that the languages of $\mathcal{S}_2$ and $\mathcal{S}_1$ are in a subset relation, \ie it holds that $L(\mathcal{S}_2)\subset L(\mathcal{S}_1)$.
Furthermore, \autoref{fig:intro_logs} depicts three logs, $\mathcal{L}_1$, $\mathcal{L}_2$, and $\mathcal{L}_3$, each representing recorded executions of
actual login processes. Each log $\mathcal{L}$ is a multiset of sequences over
actions and, thus, also induces a language $L(\mathcal{L})$. The
latter contains all words that occur at least once in the log.
\hfill\ensuremath{\lrcorner}
\end{example}

The three automata from our example may represent (i) different systems, (ii) 
different versions of the same system, or (iii) system specifications and their
implementations.
In any case, it is useful to quantify to which extent the automata describe the
same behaviour---this answers the question in how far (i) different systems
provide the same functionality; (ii) the functionality of a system has changed
over several versions; and (iii) a specification has been implemented
correctly and completely.

We note that similar questions emerge in the field
of process mining~\cite{Aalst16}, which targets the analysis of
information systems based on recorded executions of a
process. Given a specification and a log,
process mining strives for
quantifying the share of recorded behaviour that
is in line with the specification (\emph{fitness} or \emph{recall} of the log)
or the share of specified behaviour that is actually recorded (\emph{precision}
of the specification).

To address the above use cases, we essentially ask \emph{how much
one system extends the behaviour of another system.} 
For systems $\mathcal{S}_x$ and $\mathcal{S}_y$, such that
$L(\mathcal{S}_y)\subseteq L(\mathcal{S}_x)$, we may answer this question
with a quotient defined using language cardinality as a measurement function:

\smash{$\mathit{(language)\ extension}(\mathcal{S}_x,\mathcal{S}_y) := \frac{|L(\mathcal{S}_x)|}{|L(\mathcal{S}_y)|}.$}

A slightly different way to assess the relation between these systems, however,
is the question of \emph{how much of the behaviour of one system is covered by
another system.} 
To this end, set-algebraic operations over languages may be incorporated in the
definition of a quotient, as in the following definition:
\smash{$\mathit{(language)\ coverage}(\mathcal{S}_x,\mathcal{S}_y) := \frac{|L(\mathcal{S}_x) \cap L(\mathcal{S}_y)|}{|L(\mathcal{S}_x)|}.$}

The above quotients of language extension and coverage provide a
straight-forward means for behavioural comparison of systems, specifications of
systems, and logs. Yet, they are useful only if the applied measurement
function provides a meaningful mapping of a language into a numerical domain.
For the cardinality function used above, we argue that this is the case solely
for finite languages. For languages that define an infinite number of words,
the numerator or denominator of a quotient may become infinity. Leaving aside
the obvious definitional issues, any definition of a value for such a quotient
would not only be arbitrary, but would also result in a single value for all
infinite languages, regardless of their characteristics.

\begin{example}
Taking up Example~\ref{ex:running}, we may compute the \emph{language 
extension} 
using
cardinality as a measure for the logs $\mathcal{L}_1$ and $\mathcal{L}_2$,
capturing that $L(\mathcal{L}_2)$ contains twice as many words as
$L(\mathcal{L}_1)$. However, \emph{language extension} based on cardinality is
not meaningful for any pair of languages of systems $\mathcal{S}_1$,
$\mathcal{S}_2$, and $\mathcal{S}_3$, since $L(\mathcal{S}_1)$ and
$L(\mathcal{S}_2)$ are infinite. In the same vein, computing the \emph{language
  coverage} of a specification and a log, to assess the fitness of the log or 
the
precision of the specification, is not meaningful for the systems
$\mathcal{S}_1$ and $\mathcal{S}_2$, and any of the logs.
\hfill\ensuremath{\lrcorner}
\end{example}

Beyond the challenge posed by infinite languages, we note that quotients have 
to satisfy specific properties. We illustrate these properties using the 
examples introduced above.

\begin{example}
The languages of automata $\mathcal{S}_2$ and $\mathcal{S}_1$ are in the 
subset relation, which should
be reflected in the respective quotients of language extension.
For example, given any log $\mathcal{L}$ such that $L(\mathcal{L}) \subseteq 
L(\mathcal{S}_2)$, it should hold that a quotient of $L(\mathcal{L})$ to 
$L(\mathcal{S}_1)$ should yield a smaller value than a quotient of 
$L(\mathcal{L})$ to $L(\mathcal{S}_2)$.
Since language $L(\mathcal{S}_1)$ contains $L(\mathcal{S}_2)$ and is strictly 
larger, the additional behaviour shall lower the value of the respective 
ratio.
\hfill\ensuremath{\lrcorner}
\end{example}

Desired properties of quotients such as those discussed above translate into
requirements on the measurement functions that capture a particular aspect of
languages. As we will discuss 
in
the remainder, monotonicity of the
measurement function and the existence of a supremum that bounds the
measurement space are of particular relevance in this context. The former
means that adding behaviour to a system strictly
increases (or strictly decreases) the measure, whereas the latter implies that
a specific value is defined as empty behaviour.

Many measures for behavioural comparison proposed in the literature	neglect
such properties, raising debates on how to interpret the obtained results. In
the domain of process mining, e.g., it
was recently shown that none of the existing measures to assess the precision
of a specification against a log satisfies a set of
well-motivated properties~\cite{TaxLSFA17,Aalst18a}.

Against this background, the fundamental
challenge of using quotients for
behavioural comparison is to come up with a framework for their meaningful
definition. That is, the framework should provide guarantees on the quotients to
satisfy a collection of desirable properties.

\section{Preliminaries}
\label{sec:preliminaries}

This section presents formal notions used to support the discussions in the subsequent sections.

\subsection{Multisets, Sequences, and Languages} 
\label{sec:math}

A \emph{multiset}, or a \emph{bag}, is a generalization of a set, \ie 
a collection that can contain multiple instances of the same element.
By $\mathcal{B}(A)$, we denote the set of all finite multisets over some set $A$. 
For some multiset $B \in \mathcal{B}(A)$, $B(a)$ denotes the multiplicity of 
element $a$ in $B$. 
For example, $B_1:=[]$, $B_2:=[b,a,a]$, and $B_3:=[a^2,b]$ are multisets over the set $\{a,b\}$.
Multiset $B_1$ is \emph{empty}, \ie it contains no elements, whereas $B_2(a)=2=B_3(a)$, $B_2(b)=1=B_3(b)$, and, hence, it holds that $B_2=B_3$.
The standard set operations have been extended to deal with multisets as follows.
If element $a$ is a member of multiset $B$, this is denoted by $a \in B$; 
otherwise, 
one writes $a \not\in B$.
The union of two multisets $C$ and $D$, denoted by $C \cupplus D$, is the multiset that contains all elements of $C$ and $D$ such that the multiplicity of an element in the resulting multiset is equal to the sum of multiplicities of this element in $C$ and $D$.
For example, $[b] \cupplus B_2 = [a^2,b^2]$.
Also note that $\mathcal{L}_2$ in \autoref{fig:intro_logs} is the union of $\mathcal{L}_1$ and the multiset of three sequences with two instances of sequence $\sequence{a,f,e}$; more info on sequences is provided below.
The difference of two multisets $C$ and $D$, denoted by $C \setminus D$, is the multiset that for each element $x \in C$ contains $\mathit{max}(0,C(x)-D(x))$ occurrences of $x$.
For example, it holds that $B_3 \setminus B_2 = B_1$, and $B_3 \setminus [b] = [a,a]$.
Given a multiset $B \in \mathcal{B}(A)$ over set $A$, by $\mathit{Set}(B)$ we refer to the set that contains all and only elements in $B$, \ie $\mathit{Set}(B):=\set{b \in A}{b \in B}$.

A \emph{sequence} is an ordered collection of elements.
By $\sigma:=\sequence{a_1,a_2,\ldots,a_n} \in A^*$, we denote a sequence over some set $A$ of length $n \in \Nzero$, $a_i \in A$, $i \in [1..\,n]$, where $[j..\,k]:=\set{x \in \Nzero}{j \leq x \leq k}$, $j,k \in \Nzero$.\footnote{By $\mathbb{N}$ and $\mathbb{N}_0$, we denote the set of all natural numbers excluding and including zero, respectively.}
By $|\sigma|:=n$, we denote the length of the sequence.
By $\sigma_{[i]}$, $i \in [1..\,n]$, we refer to the $i$-th element of $\sigma$, \ie $\sigma_{[i]} = a_i$.
Given a sequence $\sigma$ and a set $K$, by $\sigma|_K$, we denote a sequence obtained from $\sigma$ by deleting all elements of $\sigma$ that are not members of $K$ without changing the order of the remaining elements.
For example, it holds that $\sequence{\texttt{a},\texttt{b},\texttt{d},\texttt{c},\texttt{a}}\!|_{\{\texttt{b},\texttt{c}\}}=\sequence{\texttt{b},\texttt{c}}$.
Given two sequences $\sigma$ and $\sigma'$, by $\sigma \circ \sigma'$, we denote the \emph{concatenation} of $\sigma$ and $\sigma'$, \ie the sequence obtained by appending $\sigma'$ to the end of $\sigma$.
For example, $\sequence{\texttt{a},\texttt{b},\texttt{a}} \circ \sequence{} \circ \sequence{\texttt{b},\texttt{a}} = \sequence{\texttt{a},\texttt{b},\texttt{a},\texttt{b},\texttt{a}}$, where $\sequence{}$ is the empty sequence.
For two sets of sequences $X_1$ and $X_2$ over $A$, $X_1 \circ X_2:=\set{\sigma \in A^*}{\exists\, \sigma_1 \in X_1 \exists\, \sigma_2 \in X_2 : \sigma = \sigma_1 \circ \sigma_2}$.
By $\mathit{suffix}(\sigma,i)$, $i \in \mathbb{N}$, we denote the suffix of $\sigma$ starting from and including position $i$.
\artemDONE{
For example, $\mathcal{L}_1$ in \autoref{fig:intro_logs} contains 
sequences 
$\sigma_1:=\sequence{\texttt{a},\texttt{b},\texttt{d},\texttt{e}}$
and
$\sigma_2:=\sequence{\texttt{a},\texttt{b},\texttt{c},\texttt{b},\texttt{c},\texttt{d},\texttt{e}}$.
It holds that
$\mathit{suffix}(\sigma_1,3)=\sequence{\texttt{d},\texttt{e}}$
 and
$\mathit{suffix}(\sigma_2,6)=\sequence{\texttt{d},\texttt{e}}$.
}

If $\sigma:=\sequence{a_1,a_2,\ldots,a_n} \in A^*$ is a sequence over $A$ and $f$ is a function over $A$, then $f(\sigma):=\sequence{f(a_1),f(a_2),\ldots,f(a_n)}$.
Similarly, if $A' \subseteq A$, then $f(A'):=\set{f(a)}{a \in A'}$.

An \emph{alphabet} is any nonempty finite set. The elements of an alphabet are its \emph{labels}, or \emph{symbols}. By $\Xi$, we denote a universe of symbols.
\artemDONE{
For example, \autoref{fig:intro_example_actions} specifies alphabet 
$\Sigma := \{\texttt{a}, \texttt{b}, \texttt{c}, \texttt{d}, \texttt{e}, 
\texttt{f} \}$.
}
A \emph{word} over an alphabet is a finite sequence of its symbols. 
A (formal) \emph{language} over an alphabet $\Sigma$ is a set of words over $\Sigma$.

\subsection{Finite Automata}
\label{sec:automata}

We deal with a common notion of a finite automaton~\cite{Hopcroft2007}.
Let $\Xi$ be a universe of labels and let $\tau \in \Xi$ be a special \emph{silent} label.

\begin{define}{Nondeterministic finite automaton}{def:NFA}{\quad\\}
A \emph{nondeterministic finite automaton} (NFA) is a 5-tuple $(Q,\Lambda,\delta,q_0,A)$, where 
$Q$ is a finite nonempty set of \emph{states},
$\Lambda \subset \Xi$ is a set of \emph{labels}, such that $Q$ and $\Xi$ are disjoint,
$\delta : Q \times (\Lambda \cup \{\tau\}) \rightarrow \powerset(Q)$ is the \emph{transition function}, where $\tau \not\in Q \cup \Lambda$,
$q_0 \in Q$ is the \emph{start state}, and 
$A \subseteq Q$ is the \emph{set of accept states}.\footnote{Given a set $A$, by $\powerset(A)$, we denote the powerset of $A$.}
\end{define}

\noindent
An NFA induces a set of computations.

\begin{define}{Computation}{def:computation}{\quad\\}
A \emph{computation} of an NFA $(Q,\Lambda,\delta,q_0,A)$ is either the empty word or a word $s:= \sequence{s_1,s_2,\ldots,s_n}$, $n \in \mathbb{N}$, where every $s_i$ is a member of $\Lambda \cup \{\tau\}$, $i \in [1\, ..\, n]$, and there exists a sequence of states $q:=\left\langle q_0,q_1,\ldots,q_n\right\rangle$, where every $q_j$ is a member of the set of states $Q$, $j \in [1\, ..\, n]$, such that for every $k \in [1\, ..\, n]$ it holds that $q_{k} \in \delta(q_{k-1},s_{k})$.
\end{define}

\noindent
We say that \emph{$s$ leads to $q_n$}. 
By convention, the empty word leads to the start state.
An NFA $B:=(Q,\Lambda,\delta,q_0,A)$ \emph{accepts} a word $s$ \ifaof $s$ is a computation of $B$ that leads to an accept state $q$ of $B$.

\begin{define}{Language of an NFA}{def:NFA:language}{\quad\\}
The \emph{language} of an NFA $B:=(Q,\Lambda,\delta,q_0,A)$, is denoted by $\lang{B}$, and is the set of words that $B$ accepts, \ie $\lang{B}:=\set{s \in \Lambda^*}{\exists\, r \in (\Lambda \cup \{\tau\})^* : \left( (B\,\hspace{1mm}\mathit{accepts}\,\hspace{1mm}r)\,\land\,(s=r|_{\Lambda}) \right)}$.
\end{define}

\noindent
We say that $B$ \emph{recognises} $\lang{B}$.
In an NFA, the transition function takes a state and label to produce the set of possible next states, while in a deterministic finite automaton the transition function takes a state and label and produces the next state. 

\begin{define}{Deterministic finite automaton}{def:DFA}{\quad\\}
A \emph{deterministic finite automaton} (DFA) is an NFA $(Q,\Lambda,\delta,q_0,A)$ such that 
for every state $q \in Q$ it holds that $\delta(q,\tau)=\emptyset$ and 
for every state $q \in Q$ and for every label $s \in \Lambda$ it holds that $|\delta(q,s)| \leq 1$.
\end{define}

An NFA $(Q,\Lambda,\delta,q_0,A)$ is \emph{ergodic} if its underlying graph is strongly irreducible, \ie for all $(x,y) \in Q \times Q$ there exists a sequence of states $\sequence{q_1,\ldots,q_n} \in Q^*$, $n \in \mathbb{N}$, for which it holds that for every $k \in [1\,..\,n-1]$ there exists $\lambda \in \Lambda \cup \{\tau\}$ such that $q_{k+1} \in \delta(q_k,\lambda)$, $q_1=x$, and $q_n=y$. 

A language $L \subseteq \Xi^*$ is \emph{regular} \ifaof it is the language of an NFA.
A language $L \subseteq \Xi^*$ is \emph{irreducible} if, given two words $w_1,w_2 \in L$, there exists a word $w \in \Xi^*$ such that the concatenation $w_1 \circ w \circ w_2$ is in $L$.
A regular language $L$ is irreducible \ifaof it is the language of an ergodic NFA \cite{Ceccherini-SilbersteinMS03}.

An NFA $B:=(Q,\Lambda,\delta,q_0,A)$ is \emph{$\tau$-free} \ifaof for all $q \in Q$ it holds that $\delta(q,\tau)=\emptyset$.
By definition, every DFA is \emph{$\tau$-free}.
Given an NFA $B$, one can always construct a DFA $B'$ that recognises the language of $B$~\cite{Hopcroft2007}.

\begin{example}
We illustrate the above notions using the automaton in 
\autoref{fig:intro_example_aut_2}, which  is defined according to 
our model as $\mathcal{S}_2:=(Q,\Lambda,\delta,q_0,A)$, with states $Q := 
\{\texttt{A},\texttt{B},\texttt{C},\texttt{D},\texttt{E}\}$, labels $\Lambda 
:= \{\texttt{a},\texttt{b},\texttt{c},\texttt{d},\texttt{e}\}$, transition 
function $\delta := \{ ((\texttt{A},\texttt{a}),\{\texttt{B}\}), 
((\texttt{B},\texttt{b}),\{\texttt{C},\texttt{D}\}),
((\texttt{C},\texttt{c}),\{\texttt{B}\}), 
((\texttt{D},\texttt{d}),\{\texttt{E}\}), 
((\texttt{E},\texttt{e}),\{\texttt{A}\}) \}$, 
start state $q_0:=\texttt{A}$, and accept states $A:=\{\texttt{A}\}$. This 
automaton is a $\tau$-free NFA. However, a DFA that recognises the language 
of $\mathcal{S}_2$ may be constructed, as illustrated by \autoref{fig:dfa}. 
\hfill\ensuremath{\lrcorner}
\end{example}

The discussions in~\autoref{sec:framework} and~\autoref{sec:precision_recall} rely on the use of DFAs.
However, software systems and their executions may induce NFA with silent transitions, as observed in the dataset used in the evaluation reported in~\autoref{sec:scalability:evaluation}. 
The transformation of an NFA into an equivalent DFA is an inherent step of the approach which impacts its performance.
Hence, we introduce NFAs here and refer to the transformation from NFAs into DFA explicitly in~\autoref{sec:implementation}.

\begin{figure}
  \centering
\includegraphics[scale=1] {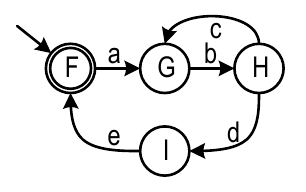}
\vspace{-3mm}
\caption{DFA $\mathcal{S}_4$ that recognises the language of $\mathcal{S}_2$ in~\autoref{fig:intro_example_aut_2}.}
\vspace{-2mm}
\label{fig:dfa}
\end{figure}

\section{A Framework for Language Quotients}
\label{sec:framework}

This section introduces a framework for behavioural comparison of systems using language quotients.
As detailed in \autoref{sec:framework_definition}, a language quotient is defined based on a
measurement function over the languages of systems.
In \autoref{sec:framework_properties},
we demonstrate that the proposed quotients satisfy desirable properties for
behavioural comparison of systems.
Finally, in \autoref{sec:framework_instantiations}, we propose two
measurement functions for instantiating language quotients, one based on the cardinality of a language and one based on its topological entropy.

\subsection{Framework Definition}
\label{sec:framework_definition}

A behavioural comparison of systems is usually carried out based on aspects of their languages.
An aspect of a language can be captured by a measure $m : \powerset(\Xi^*)
\rightarrow \mathbb{R}_0^+$, which is a (set) function from the set of all languages over $\Xi$ to non-negative real
numbers.\footnote{By $\mathbb{R}_0^+$, we denote the set of all non-negative real
numbers.}
Two desirable properties of a measure are:
\begin{compactitem}
\item
A measure can be monotonic.
A measure $m$ \emph{is (strictly monotonically) increasing}
\ifaof for all $U \subset \Xi^*$ and $V \subseteq \Xi^*$ such that $U
\subset V$, it holds that $m(U) < m(V)$.
\item
A measure can map the infimum of its domain to the infimum of its codomain.
In this line, we define that a measure $m$ \emph{starts at zero}
\ifaof $m(\emptyset)=0$.
\end{compactitem}
We say that a measure over languages is a \emph{language measure} \ifaof it is increasing and starts
at zero.%
\footnote{Thus, a language measure satisfies the properties of \emph{non-negativity} and defines the empty set to be a \emph{null set} (see~\cite{tao2013introduction} for details). However, it is not required to be \emph{countable} or \emph{finite additive}, as these properties are not exploited in the subsequent analysis of this article. Note that if a language measure $m$ is \emph{countably additive}, $(\Xi^*,\powerset(\Xi^*),m)$ defines a \emph{measure space}, as it is studied in mathematical analysis.}

A language quotient sets aspects of languages into relation as follows:

\begin{define}{Language quotient}{def:language:quotient}{\quad\\}
Given two languages $L_1$ and $L_2$, and a language measure $m$, the
\emph{language quotient} of $L_1$ over $L_2$ induced by $m$ is the fraction of the measure of $L_1$ over the measure of $L_2$:
$$
\mathit{quotient}_m(L_1,L_2):=\frac{m(L_1)}{m(L_2)}.
\vspace{-3mm}
$$
\end{define}

\noindent
\emph{Nomen est omen}, a language quotient is defined over languages, not
systems. The rationale behind this formalisation is that the framework of
language quotients, once instantiated with a specific
measure, may be applied for diverse algebraic operations;
examples include quotients that are defined over the intersection, union, or difference of languages,
(see the notion of \emph{language coverage} in \autoref{sec:introduction} for illustration).
In \autoref{sec:precision_recall_definition}, we provide further examples of quotients over the intersection of languages that are useful in the context of process mining.

\subsection{Properties of Language Quotients}
\label{sec:framework_properties}

Language quotients enjoy useful properties that rest on the properties of a language measure.
One can compare quotients with the same numerators as follows.

\begin{lem}{Fixed numerator quotients}{lem:fixed:num}{\quad\\}
If $L_1, L_2, L_3 \!\subseteq\! \Xi^*$ are languages such that $L_1$ is nonempty, $L_1 \subset L_2$, and $L_2 \subset L_3$, then it holds that $\mathit{quotient}_m(L_1,L_3) < \mathit{quotient}_m(L_1,L_2)$, where $m$ is a language measure.
\end{lem}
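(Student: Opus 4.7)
\medskip
\noindent
The plan is to reduce the statement to an elementary fact about fractions with equal numerators: if $0 < x < y$ and $a > 0$, then $a/y < a/x$. So the work is in verifying that the numerator $m(L_1)$ is well-defined and that both denominators $m(L_2)$ and $m(L_3)$ are strictly positive so that division is legitimate and the inequality is preserved.

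\medskip
\noindent
First I would unfold the hypothesis that $m$ is a language measure into its two constituent parts: $m$ is strictly monotonically increasing on proper language inclusions, and $m(\emptyset)=0$. From $L_1$ nonempty and $\emptyset \subset L_1$, strict monotonicity gives $m(L_1) > m(\emptyset) = 0$, so the numerator is a positive real. Then I would chain the inclusions $\emptyset \subset L_1 \subset L_2 \subset L_3$ (each strict by assumption) and apply strict monotonicity twice to obtain
\[
0 = m(\emptyset) < m(L_1) < m(L_2) < m(L_3).
\]
In particular, $m(L_2)$ and $m(L_3)$ are both strictly positive, so the quotients $\mathit{quotient}_m(L_1,L_2)$ and $\mathit{quotient}_m(L_1,L_3)$ from \autoref{def:language:quotient} are both defined.

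\medskip
\noindent
Finally, I would invoke the elementary fact that dividing a fixed positive real $m(L_1)$ by a strictly larger positive real $m(L_3) > m(L_2)$ yields a strictly smaller value, \ie $m(L_1)/m(L_3) < m(L_1)/m(L_2)$, which by \autoref{def:language:quotient} is exactly $\mathit{quotient}_m(L_1,L_3) < \mathit{quotient}_m(L_1,L_2)$, as required. I do not expect any genuine obstacle: the only subtlety is making sure the positivity of the denominators is derived from the nonemptiness of $L_1$ together with the ``starts at zero'' and strictly increasing properties of a language measure, rather than assumed implicitly.
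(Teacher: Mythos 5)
Your proposal is correct and uses essentially the same argument as the paper: both rest on deriving $0 < m(L_1)$ and $0 < m(L_2) < m(L_3)$ from the ``starts at zero'' and strictly increasing properties, and then applying the elementary fact that a fixed positive numerator over a strictly larger positive denominator gives a strictly smaller quotient. The only difference is presentational --- the paper phrases it as a short proof by contradiction while you argue directly and are somewhat more explicit about deducing $m(L_1)>0$ from the nonemptiness of $L_1$.
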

\begin{proof}
Let us assume that $L_1 \neq \emptyset$, $L_1 \subset L_2$, and $L_2 \subset L_3$, but it holds that $\mathit{quotient}_m(L_1,L_2) \leq \mathit{quotient}_m(L_1,L_3)$.
Because $m$ starts at zero and is increasing, it holds that $0 < m(L_2) < m(L_3)$.
Because $m(L_1)>0$, we reach a contradiction.
\end{proof}

\noindent
The statement of the lemma is shown schematically in \autoref{fig:quotients:properties} (top row).
If $L_2$ and $L_3$ are languages of two systems that extend the behaviour of a third system that recognises language $L_1$, then, using the quotients, one can conclude that the system that recognises $L_3$ \emph{extends} the behaviour of the system that recognises $L_1$ more than does the system that recognizes $L_2$.
The difference between the extension behaviours is captured by $\mathit{quotient}_m(L_1,L_3) - \mathit{quotient}_m(L_1,L_2)$.
The meaning of the difference depends on the meaning of language measure $m$ used to instantiate the quotients.
If $m$ measures the cardinality of a language, then the difference stands for the fraction of the behaviour with which $L_3$ extends $L_1$ more than does $L_2$.

\begin{figure}[t!]
	\begin{center}
		\includegraphics[scale=0.25]{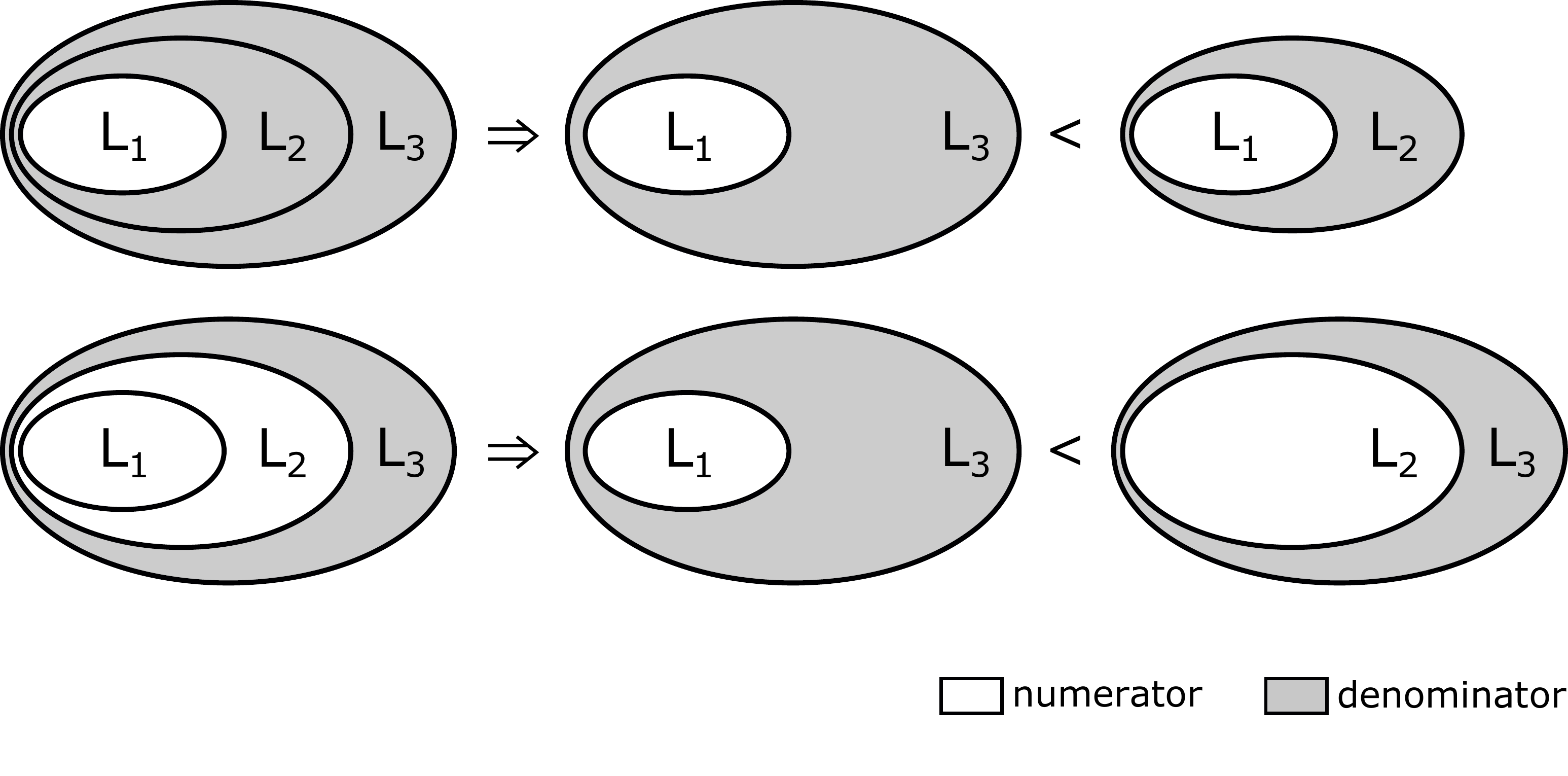}
	\vspace{-3mm}
		\caption{Schematic representation of: \lemmaname~\ref{lem:fixed:num} (top row) and \lemmaname~\ref{lem:fixed:den} (bottom row).}
		\label{fig:quotients:properties}
	\end{center}
	\vspace{-2mm}
\end{figure}

Moreover, language quotients with the same denominators can be compared as below.

\begin{lem}{Fixed denominator quotients}{lem:fixed:den}{\quad\\}
If $L_1, L_2, L_3 \subseteq \Xi^*$ are languages such that $L_1 \subset L_2$ and $L_2 \subset L_3$, then it holds that $\mathit{quotient}_m(L_1,L_3) < \mathit{quotient}_m(L_2,L_3)$, where $m$ is a language measure.
\end{lem}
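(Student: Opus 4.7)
The plan is to mirror the proof of \lemmaname~\ref{lem:fixed:num} essentially verbatim, exploiting once more the two defining properties of a language measure (starts at zero and strict monotonicity). The statement is simpler than \lemmaname~\ref{lem:fixed:num} because here there is no assumption that $L_1$ is nonempty: the denominator $m(L_3)$ is the same on both sides of the desired inequality, so the only thing one must verify is that $m(L_3) > 0$, after which the inequality reduces to a comparison of the numerators.

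My first step is therefore to establish $m(L_3) > 0$. By hypothesis, $L_1 \subset L_2 \subset L_3$, so $\emptyset \subseteq L_1 \subset L_2$, which in particular implies $\emptyset \subset L_2$. Applying strict monotonicity of $m$ together with $m(\emptyset)=0$ yields $0 = m(\emptyset) < m(L_2)$, and strict monotonicity applied once more to $L_2 \subset L_3$ gives $m(L_2) < m(L_3)$. Chaining these, $m(L_3) > 0$.

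Next, I obtain the inequality of numerators. Strict monotonicity applied to $L_1 \subset L_2$ gives $m(L_1) < m(L_2)$. Dividing both sides by the strictly positive quantity $m(L_3)$ preserves the inequality, yielding
\[
\frac{m(L_1)}{m(L_3)} < \frac{m(L_2)}{m(L_3)},
\]
which, by \definitionname~\ref{def:language:quotient}, is exactly $\mathit{quotient}_m(L_1,L_3) < \mathit{quotient}_m(L_2,L_3)$. Alternatively, one can phrase the argument by contradiction, in full analogy with \lemmaname~\ref{lem:fixed:num}: assume $\mathit{quotient}_m(L_2,L_3) \leq \mathit{quotient}_m(L_1,L_3)$, use $m(L_3) > 0$ to clear denominators, and contradict $m(L_1) < m(L_2)$. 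There is no genuine obstacle in this proof; the only subtle point is remembering that strict monotonicity, not merely monotonicity, is what forces each of $m(L_1) < m(L_2)$ and $0 < m(L_3)$, and that the latter is needed so that division by $m(L_3)$ is well-defined and order-preserving.
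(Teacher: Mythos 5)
Your proof is correct and follows essentially the same route as the paper's: the paper argues by contradiction, using that $m$ starts at zero and is increasing to get $0 \leq m(L_1) < m(L_2)$ and $m(L_3) > 0$, which is exactly the numerator comparison plus positive-denominator division that you spell out directly (and you even note the contradiction phrasing as an equivalent alternative). No gap.
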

\begin{proof}
Assume that $L_1 \subset L_2$ and $L_2 \subset L_3$ but it holds that $\mathit{quotient}_m(L_2,L_3) \leq \mathit{quotient}_m(L_1,L_3)$.
Because $m$ starts at zero and is increasing, it holds that $0 \leq m(L_1) < m(L_2)$.
Because $m(L_3)>0$, we reach a contradiction.
\end{proof}

\noindent
The statement of the lemma is visualized schematically in \autoref{fig:quotients:properties} (bottom row).
For example, if $L_3$ is a language of a specification of a system, and $L_1$ and $L_2$ are languages of its two implementations, then, based on the quotients, one can conclude that the implementation that recognises $L_2$ is more complete than the implementation that recognizes $L_1$.
In other words, $L_2$ has better \emph{coverage} of the specification than $L_1$.
The extent to which the implementation that recognises $L_2$ is more complete can be quantified by $\mathit{quotient}_m(L_2,L_3) - \mathit{quotient}_m(L_1,L_3)$.
The meaning of the difference, again, depends on the meaning of language measure $m$ used to instantiate the quotients.

If one fixes the numerator, like in the case of comparing the amounts to which various systems extend a given behaviour, the quotients are bounded below.

\begin{corol}{Fixed numerator quotients}{cor:fixed:num}
If $L_1, L_2 \subset \Xi^*$ are languages such that $L_1 \subset L_2$, then it holds that $\mathit{quotient}_m(L_1,\Xi^*) < \mathit{quotient}_m(L_1,L_2)$, where $m$ is a language measure.
\end{corol}

\noindent
\corolname~\ref{cor:fixed:num} follows immediately from \lemmaname~\ref{lem:fixed:num}, as it holds that $L_1 \subset L_2$ and $L_2 \subset \Xi^*$.

\subsection{Framework Instantiations}
\label{sec:framework_instantiations}

This section proposes two language quotients, as instantiations of \definitionname~\ref{def:language:quotient}
using specific measurement functions.
Thus, these quotients have all the properties proposed in~\autoref{sec:framework_properties}.
The first quotient is based on the cardinality of a language, whereas the
other one is grounded in the notion of topological entropy.

\mypar{Cardinality quotient}
As language $L$ is a set of words, its \emph{cardinality}, denoted by $|L|$, is a property that can serve as the basis for behavioural comparison.
Clearly, cardinality is a language measure, \ie it \emph{is increasing} and \emph{starts at zero}.
By defining a language quotient based on this measure, we obtain the
cardinality quotient:

\begin{define}{Cardinality quotient}{def:cardinality:quotient}
The \emph{cardinality quotient} of language $L_1$ over language $L_2$ is the fraction of the cardinality of $L_1$ over the cardinality of $L_2$, \ie
$
\mathit{quotient}_{\mathit{car}}(L_1,L_2):=\frac{|L_1|}{|L_2|}.
$
\end{define}

\noindent
The cardinality quotient captures the ratio of the sizes of two
languages.
It is well-defined only for $L_2\neq \emptyset$.
Note that this is a definitional issue that may be addressed explicitly (e.g., defining
$\mathit{quotient}_{\mathit{car}}(L_1,L_2):=0$ if $L_2=\emptyset$). A more severe problem is
the computation of the quotient for infinite languages.
Given an alphabet, finite by definition, a regular language may define a
countably infinite set of words~\cite{Sipser2012}. For example,
the cardinality of
an irreducible regular language is infinity. Again, one may address the
resulting definitional issues explicitly, e.g., by adopting that a constant
divided by infinity is equal to zero and that infinity divided by infinity is
equal to one. However, any such convention is not useful for behavioural
comparison in the context of regular languages. For instance, the
\emph{language extension} and \emph{language coverage}, see
\autoref{sec:back}, would be equal to one for any pair of ergodic
automata, such as those in \autoref{fig:intro_example_aut_1}
and \autoref{fig:intro_example_aut_2}.
We thus conclude that cardinality quotients provide a suitable means for behavioural comparison solely for finite languages.

\mypar{Eigenvalue quotient}
To obtain language quotients that are useful for comparing infinite languages, we instantiate them with a measure based on the topological entropy.
Intuitively, the topological entropy of a language captures the increase in variability of the words of the language as their length goes to infinity.

Given a language $L$, let $C_n(L)$, $n \in \Nzero$, be the set of all the words in $L$ of length $n$, \ie $C_n(L):=\set{x \in L}{|x|=n}$. 
Then, the \emph{topological entropy} of $L$ is defined as follows (see~\cite{Parry64,Ceccherini-SilbersteinMS03} for details)\footnote{Given a sequence $(x_n)$, $\limsup\limits_{n \to \infty}{x_n}$ is the \emph{limit superior} of $(x_n)$ and is defined by 
$\inf \set{\sup \set{x_m}{m \geq n}}{n \geq 0}$.}:
\[
\ent{L} := \limsup\limits_{n \to \infty}{\frac{\log{|C_n(L)|}}{n}}.
\]

\renewcommand{\arraystretch}{.6}

\begin{figure*}[h]
\vspace{-3mm}
\centering
\subfloat[$B_1$]{
   \includegraphics[scale=1] {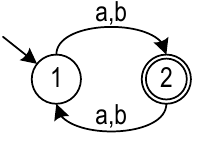}
   \label{fig:counter:example:B1}
 }
\hspace{15mm}
\subfloat[$B_2$]{
   \includegraphics[scale=1] {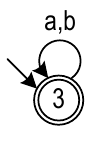}
   \label{fig:counter:example:B2}
 }
\hspace{15mm}
\subfloat[$\hat{B_1}$]{
   \includegraphics[scale=1] {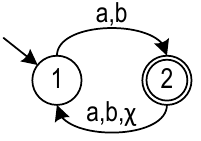}
   \label{fig:counter:example:B1:fix}
 }
\hspace{15mm}
\subfloat[$\hat{B_2}$]{
   \includegraphics[scale=1] {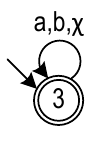} 
   \label{fig:counter:example:B2:fix}
 }
\vspace{-1mm}
\caption{Four DFAs.}
\label{fig:fix}
\vspace{-2mm}
\end{figure*}

\renewcommand{\arraystretch}{1}

\noindent
Topological entropy characterises the complexity of a language and is closely related to the properties of the DFAs that recognise this language.
For a DFA $B:=(Q,\Lambda,\delta,q_0,A)$, 
with $C_n(B)$, $n \in \Nzero$, we denote the set of all the words in $\lang{B}$ of length $n$,
\ie $C_n(B):=\set{x \in \lang{B}}{|x|=n}$.
Then, the topological entropy of $B$ is defined as the topological entropy of the language that it recognises~\cite{Ceccherini-SilbersteinMS03}:
\[
\ent{L(B)} = \ent{B} := \limsup\limits_{n \to \infty}{\frac{\log{|C_n(B)|}}{n}}.
\]

\noindent
The topological entropy of a DFA, and thus of its language, is further related to the
structure of the automaton.
Below, we shall deal with square non-negative matrices $G:=\{g_{ij}\}$, $i,j \in [1\,..\,n]$, $n \in \mathbb{N}$, \ie $g_{ij} \geq 0$ for all $i,j \in [1\,..\,n]$.
The adjacency matrix of a DFA $(Q,\Lambda,\delta,q_0,A)$, where $Q:=\{q_0,q_1,...,q_n\}$, $n \in \Nzero$, is a square matrix $G:=\{g_{ij}\}$, $i,j \in [1\,..\,|Q|]$, such that
$g_{ij}:=|\set{q_j \in \delta(q_i,\lambda)}{\lambda \in \Lambda}|$, for all $i,j \in [1\,..\,|Q|]$.\footnote{Recall from \autoref{sec:automata} that every DFA is $\tau$-free.}
The topological entropy of an ergodic DFA $B$, \ie $\ent{B}$, is given by the logarithm
of the Perron-Frobenius eigenvalue of its adjacency matrix, which is a unique
largest real eigenvalue of the adjacency matrix of
$B$~\cite{Ceccherini-SilbersteinMS03}.
Note that an adjacency matrix of an ergodic DFA $B$ has an eigenvalue $r$ such
that $r$ is real, $r>0$, and $r \geq |\lambda|$ for any eigenvalue $\lambda$
of the adjacency matrix of $B$~\cite[\theoremname~1.5]{Seneta2006Non-NegativeMatricesand}.
The relation between the entropy of a language and the entropy of an ergodic
DFA recognising this language, as outlined above, is important for
computational reasons, as it provides us with a straight-forward approach
to compute the entropy of a language, via the Perron-Frobenius theory.

Topological entropy is \emph{not} an increasing measure over regular languages. 
It is neither an increasing measure over irreducible regular languages. 
Indeed, for two ergodic automata $B_1$ and $B_2$ shown in \autoref{fig:counter:example:B1} and \autoref{fig:counter:example:B2}, respectively,
it holds that 
$\lang{B_1} \subset \lang{B_2}$ and $\ent{\lang{B_1}}=1.0=\ent{\lang{B_2}}$; note that logarithm base two was used to compute the entropy.

Let $U \subset V$ be two regular languages over alphabet $\Psi \subset \Xi$ such that $U \subset V$.
Let $\hat{U}$ and $\hat{V}$ denote the languages $(U \circ \{\sequence{\chi}\})^* \circ U$ and $(V \circ \{\sequence{\chi}\})^* \circ V$, $\chi \in \Xi \setminus \Psi$, respectively. 
We say that \emph{$\hat{U}$ and $\hat{V}$ are the results of short-circuiting $U$ and $V$ with $\chi$}.
Note that given an automaton $B$ it is straight-forward to construct an automaton that recognizes the short-circuited version of $\lang{B}$.
This can be achieved by inserting fresh transitions in $B$, each labelled with $\chi$, from each accept state of $B$ to its start state to obtain automaton $\hat{B}$.
For example, automata $\hat{B_1}$ and $\hat{B_2}$ from \autoref{fig:counter:example:B1:fix} and \autoref{fig:counter:example:B2:fix}, respectively, recognise the short-circuited versions of languages $\lang{B_1}$ and $\lang{B_2}$, where $B_1$ and $B_2$ are shown in \autoref{fig:counter:example:B1} and \autoref{fig:counter:example:B2}.
Note that any augmented in this way automaton is guaranteed to be ergodic and, thus, the language such an automaton recognises is irreducible.

Let $(u_n)_{n=1}^{\infty}$ and $(v_n)_{n=1}^{\infty}$ be two sequences such that:
\[
u_n := \frac{\log{|C_n(\hat{U})|}}{n} \;\;\;\;\;\;\;\;\text{and}\;\;\;\;\;\;\;\; v_n := \frac{\log{|C_n(\hat{V})|}}{n}.
\]

\noindent
For every $n \in \mathbb{N}_0$ it holds that $C_n(\hat{U}) \subseteq C_n(\hat{V})$ because $\hat{U} \subset \hat{V}$.
Let $w_u \in U$ and $w_v \in V \setminus U$ be two words.
Let $(\alpha_i)_{i=1}^{\infty}$ and $(\beta_i)_{i=1}^{\infty}$ be two sequences such that $\alpha_1:=|w_v|$, $\beta_1:=|w_u|$,
\[
\alpha_j:=\alpha_{j-1}+\frac{\mathit{LCM}(|w_u|+1,|w_v|+1)}{|w_u|+1},\;\text{and}\;\beta_j:=\beta_{j-1}+\frac{\mathit{LCM}(|w_u|+1,|w_v|+1)}{|w_v|+1},\;j>1.\footnote{By $\mathit{LCM}(a,b)$ we denote the the \emph{least common multiple} of two integers $a$ and $b$.}
\]

\noindent
Then, for any $k\in\mathbb{N}$ it holds by induction that $|(w_u \circ \sequence{\chi})^{\alpha_k} \circ w_u|=|(w_v \circ \sequence{\chi})^{\beta_k} \circ w_v|$.\footnote{Given a word $w$, by $w^k$, $k \in \mathbb{N}$, we denote concatenation of $k$ instances of $w$, \eg $(\texttt{ab})^3=\texttt{ababab}$.}
Note that $w_u \in V$ and $w_v \not\in U$.
Hence, for any $n \in \mathbb{N}$, $\sup \set{u_m}{m \geq n} < \sup \set{v_m}{m \geq n}$ and, consequently:
\[
\inf \set{\sup \set{u_m}{m \geq n}}{n \geq 0} \;<\; \inf \set{\sup \set{v_m}{m \geq n}}{n \geq 0}.
\]

\noindent
That is, it holds that $\ent{\hat{U}} < \ent{\hat{V}}$.
As a consequence of this fact, it holds that $\ent{\hat{B_1}}=1.2925$ and $\ent{\hat{B_2}}=1.585$, \ie $\ent{\hat{B_1}}<\ent{\hat{B_2}}$, where $\hat{B_1}$ and $\hat{B_2}$ are automata shown in \autoref{fig:counter:example:B1:fix} and \autoref{fig:counter:example:B2:fix}, respectively; again, logarithm base two was used to compute the entropy.

Given a measure over languages, one can obtain a corresponding short-circuit measure as follows.

\begin{define}{Short-circuit measure}{def:short:circuit:measure}{\quad\\}
A \emph{short-circuit measure} over languages over alphabet $\Psi \subset \Xi$
induced by a measure over languages $m : \powerset(\Xi^*) \rightarrow \mathbb{R}_0^+$ is
the (set) function $m^\bullet : \powerset(\Psi^*) \rightarrow \mathbb{R}_0^+$ defined by
$m^\bullet(L):=m((L \circ \{\sequence{\chi}\})^* \circ L)$, where $L$ is a language over $\Psi$, \ie $L \subseteq \Psi^*$, and $\chi \in \Xi \setminus \Psi$ is a short-circuit symbol.
\end{define}

\noindent
By $\eig{L}$, where $L$ is a language, we denote the Perron-Frobenius eigenvalue of the adjacency matrix of a DFA that recognises $L$, and call it the \emph{eigenvalue measure} of $L$.
We also say that $\eig{L}$ is the \emph{eigenvalue} of $L$.
We accept that the adjacency matrix of a DFA that induces the empty language is the zero square matrix of order one.
Hence, the above definitions and observations lead to the next conclusion.

\begin{lem}{Language measures}{lem:language:measures}{\quad\\} %
The short-circuit topological entropy ($\mathit{ent}^\bullet$) and the short-circuit eigenvalue measure ($\mathit{eig}^\bullet$) are language measures, \ie they are increasing and start at zero.
\end{lem}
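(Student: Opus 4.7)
The plan is to verify, for each of $\mathit{ent}^\bullet$ and $\mathit{eig}^\bullet$, the two defining conditions of a language measure from \autoref{sec:framework_definition}: starting at zero, and strict monotonicity.

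For the starts-at-zero property, I would first observe that short-circuiting the empty language by \definitionname~\ref{def:short:circuit:measure} again yields $\emptyset$, since $\emptyset \circ X = \emptyset$ for every set $X$. By the convention adopted in the paragraph immediately preceding the lemma — that the adjacency matrix of a DFA recognising $\emptyset$ is the $1 \times 1$ zero matrix — the Perron--Frobenius eigenvalue equals $0$, so $\mathit{eig}^\bullet(\emptyset) = 0$; together with the natural convention $\ent{\emptyset} = 0$ (so that the codomain $\mathbb{R}_0^+$ is respected), we also obtain $\mathit{ent}^\bullet(\emptyset) = 0$.

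For strict monotonicity of $\mathit{ent}^\bullet$, the bulk of the work has already been done in the paragraphs leading up to the lemma; the plan is to consolidate that material. Given $U \subset V \subseteq \Psi^*$ with $\Psi \subset \Xi$, the short-circuited languages $\hat{U} \subset \hat{V}$ are irreducible, since by the short-circuit construction their recognising DFAs can be made ergodic. Picking $w_u \in U$ and $w_v \in V \setminus U$ and using the sequences $(\alpha_k)_{k \in \mathbb{N}}$ and $(\beta_k)_{k \in \mathbb{N}}$ already displayed to align lengths, one obtains for each $k$ a pair of equal-length words — one in $\hat{U}$, one in $\hat{V} \setminus \hat{U}$. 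Combined with the inclusion $C_n(\hat{U}) \subseteq C_n(\hat{V})$ at every $n$, this yields $\sup \set{u_m}{m \geq n} < \sup \set{v_m}{m \geq n}$ for every $n$, and hence $\ent{\hat{U}} < \ent{\hat{V}}$.

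For $\mathit{eig}^\bullet$, I would reduce to the previous case via the identity $\ent{B} = \log \eig{B}$ that holds for any ergodic DFA $B$, as quoted in the text preceding the lemma. Since the short-circuited automata recognising $\hat{U}$ and $\hat{V}$ are both ergodic, this identity applies on both sides; combining it with $\ent{\hat{U}} < \ent{\hat{V}}$ from the previous step and the strict monotonicity of $\log$ on $\mathbb{R}_0^+$ delivers $\eig{\hat{U}} < \eig{\hat{V}}$, as required. The main obstacle, and the step that deserves the most care, is the counting argument for entropy: even once one has matched-length pairs of words witnessing $|C_n(\hat{V})| > |C_n(\hat{U})|$ at a cofinal family of lengths, lifting this gap through a $\limsup$ is not automatic, since $\limsup$ is insensitive to strict pointwise dominance on sparse sets. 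The pre-lemma construction is engineered precisely to sidestep this pitfall by matching lengths so that the strict count gap survives both the $\sup$-over-tails and the subsequent $\inf$; transcribing this cleanly is the delicate point of the proof.
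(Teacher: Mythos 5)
Your proposal follows essentially the same route as the paper: the strict monotonicity of $\mathit{ent}^\bullet$ is exactly the pre-lemma construction with $w_u$, $w_v$ and the length-aligning sequences $(\alpha_k)$, $(\beta_k)$, the starts-at-zero property rests on the same conventions (short-circuiting $\emptyset$ yields $\emptyset$, and the order-one zero adjacency matrix), and the eigenvalue case is reduced to the entropy case via $\ent{B}=\log\eig{B}$ and the strict monotonicity of the logarithm, which is precisely the remark the paper makes immediately after the lemma. Your closing caveat about pushing a strict count gap through the $\limsup$ (i.e., that strict inequality of the tail suprema does not by itself force strict inequality of their infima) is a fair observation about a step the paper itself glosses over, but it does not change the fact that the two arguments coincide.
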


\noindent
The fact that the short-circuit eigenvalue measure is increasing follows immediately from the facts that 
(i) the short-circuit topological entropy is a language measure and (ii) the logarithm is a strictly increasing function.
Finally, to avoid decisions of which logarithm base to use when computing the entropy, in what follows, we define and use language quotients induced by the eigenvalue measure.

\begin{define}{Eigenvalue quotient}{def:entropy:quotient}{\quad\\}
Given two regular languages $L_1$ and $L_2$, the \emph{eigenvalue quotient} of $L_1$ over $L_2$ is 
the fraction of 
the short-circuit eigenvalue measure of $L_1$ over the short-circuit eigenvalue measure of $L_2$, \ie
\[
\mathit{quotient}_{\mathit{eig}^\bullet}(L_1,L_2):=\frac{\mathit{eig}^\bullet(L_1)}{\mathit{eig}^\bullet(L_2)}.
\vspace{-5mm}
\]
\end{define}

\begin{figure}[t]
  \vspace{-2mm}
  \centering
  \includegraphics[scale=1] {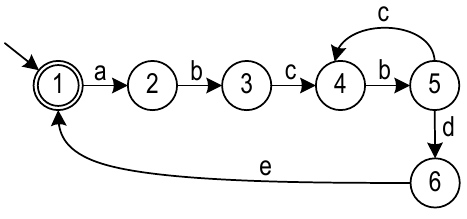}
	\vspace{-3mm}
  \caption{DFA $\mathcal{S}_5$.}
  \label{fig:sect4:S5}
	\vspace{-2mm}
\end{figure}

\noindent
\begin{example}
Consider automata $\mathcal{S}_1$, $\mathcal{S}_4$, and $\mathcal{S}_5$ in \autoref{fig:intro_example_aut_1}, \autoref{fig:dfa}, and \autoref{fig:sect4:S5}.
These three automata are ergodic and it holds that $L(\mathcal{S}_5) \subset L(\mathcal{S}_4)$ and $L(\mathcal{S}_4) \subset L(\mathcal{S}_1)$.
It holds that
$\mathit{quotient}_{\mathit{eig}^\bullet}(L(\mathcal{S}_4),\linebreak L(\mathcal{S}_1))=0.6$,
$\mathit{quotient}_{\mathit{eig}^\bullet}(L(\mathcal{S}_5),L(\mathcal{S}_1))=0.5525$,
$\mathit{quotient}_{\mathit{eig}^\bullet}(L(\mathcal{S}_5), L(\mathcal{S}_4))=0.9208$, and\linebreak
$\mathit{quotient}_{\mathit{eig}^\bullet}(L(\mathcal{S}_5),\Phi^*)=0.2321$, where $\Phi$ is the set $\{\texttt{a},\texttt{b},\texttt{c},\texttt{d},\texttt{e}\}$.
Indeed, it holds that:
\begin{compactenum}[(i)]
\item $\mathit{quotient}_{\mathit{eig}^\bullet}(L(\mathcal{S}_5),L(\mathcal{S}_1)) < \mathit{quotient}_{\mathit{eig}^\bullet}(L(\mathcal{S}_5),L(\mathcal{S}_4))$ --- see \lemmaname~\ref{lem:fixed:num};
\item $\mathit{quotient}_{\mathit{eig}^\bullet}(L(\mathcal{S}_5),L(\mathcal{S}_1)) < \mathit{quotient}_{\mathit{eig}^\bullet}(L(\mathcal{S}_4),L(\mathcal{S}_1))$ --- see \lemmaname~\ref{lem:fixed:den}; and
\item $\mathit{quotient}_{\mathit{eig}^\bullet}(L(\mathcal{S}_5),\Phi^*) < \mathit{quotient}_{\mathit{eig}^\bullet}(L(\mathcal{S}_5),L(\mathcal{S}_4))$ --- see \corolname~\ref{cor:fixed:num}.
\end{compactenum}
\medskip

\begin{figure}[t]
\vspace{-3mm}
\centering
\subfloat[]{
$
\footnotesize
\renewcommand\arraystretch{1.2}
\begin{array}{c|cccc}
               & \mathrm{F} & \mathrm{G} & \mathrm{H} & \mathrm{I} \\
\hline
\mathrm{F}   & 1 & 1 & 0 & 0 \\
\mathrm{G}   & 0 & 0 & 1 & 0 \\
\mathrm{H}   & 0 & 1 & 0 & 1 \\
\mathrm{I}   & 1 & 0 & 0 & 0
\end{array}
$
\label{fig:matrix:S4}
}
\hspace{9mm}
\subfloat[]{
$
\footnotesize
\renewcommand\arraystretch{1.2}
\begin{array}{c|cccccc}
             & \mathrm{1} & \mathrm{2} & \mathrm{3} & \mathrm{4} & \mathrm{5} & \mathrm{6} \\
\hline
\mathrm{1}   & 1 & 1 & 0 & 0 & 0 & 0 \\
\mathrm{2}   & 0 & 0 & 1 & 0 & 0 & 0 \\
\mathrm{3}   & 0 & 0 & 0 & 1 & 0 & 0 \\
\mathrm{4}   & 0 & 0 & 0 & 0 & 1 & 0 \\
\mathrm{5}   & 0 & 0 & 0 & 1 & 0 & 1 \\
\mathrm{6}   & 1 & 0 & 0 & 0 & 0 & 0
\end{array}
$
\label{fig:matrix:S5}
}
\vspace{-3mm}
\caption{Adjacency matrices of the short-circuit, \ie after insertions of the $\chi$ transitions, versions of DFAs from (a) \autoref{fig:dfa} and (b) \autoref{fig:sect4:S5}.}
\label{fig:two:matrices}
\vspace{-2mm}
\end{figure}

\noindent
To show that $\mathit{quotient}_{\mathit{eig}^\bullet}(L(\mathcal{S}_5),L(\mathcal{S}_4))$ indeed equals to $0.9208$, \autoref{fig:matrix:S4} and \autoref{fig:matrix:S5} show adjacency matrices of $\mathcal{S}_4$ and $\mathcal{S}_5$, respectively.
Note that the Perron-Frobenius eigenvalue of the matrix in \autoref{fig:matrix:S4} is $1.5129$, 
while the Perron-Frobenius eigenvalue of the matrix in \autoref{fig:matrix:S5} is $1.3931$. 
\hfill\ensuremath{\lrcorner}
\end{example}

\section{Precision and Recall}
\label{sec:precision_recall}

Language quotients provide a general means for behavioural comparison.
To demonstrate the use of the quotients, this section proposes and discusses their application in process mining~\cite{Aalst16}.
One of the problems studied in process mining is the problem of \emph{process discovery}.
Given a log of recorded executions of a system, a discovery technique constructs a specification of the system that ``represents'' the behaviour captured in the log.
As a system may execute same sequences of actions multiple times, its log is a multiset of words that encode the executions.

\begin{define}{Log}{def:event:log}{\quad\\}
A \emph{log} is a finite multiset over a language.
\end{define}

\noindent
An element of a log is a \emph{trace}, whereas an element of a trace is an
\emph{event} of the trace. Given a log $\mathcal{L}$, 
$L(\mathcal{L}):=\mathit{Set}(\mathcal{L})$ is the
\emph{language} of $\mathcal{L}$.

\begin{example}
\artemDONE{For example, logs $\mathcal{L}_1$, $\mathcal{L}_2$, and 
$\mathcal{L}_3$, listed
	in \autoref{fig:intro_logs} contain two, five, and three traces, 
	respectively.
	Note that $\mathcal{L}_2$ contains trace $\sequence{a,f,e}$ twice, which
	denotes that this sequence of actions was recorded in the log two times.}
\hfill\ensuremath{\lrcorner}
\end{example}

The quality of a generated process specification is typically evaluated using \emph{precision}, \emph{fitness} (a specific type of recall), \emph{simplicity}, and \emph{generalization}~\cite{Aalst16}.
We use the framework of behavioural quotients to define precision and recall of specifications w.r.t. logs
(\autoref{sec:precision_recall_definition}).
We demonstrate that our precision and recall quotients satisfy important requirements for precision and recall measures 
(\autoref{sec:precision_recall_properties}).

\subsection{Definition of Precision and Recall}
\label{sec:precision_recall_definition}

This section proposes two quotients for comparing behaviours captured in a given log and DFA, namely \emph{precision} and \emph{recall} of the DFA w.r.t. the log.
These quotients are inspired by the precision and recall measures that have proved to be useful in information retrieval, binary classification, and pattern recognition.
The precision and recall measures proposed here can be used to measure
precision and fitness, respectively, of specifications discovered from logs.

In information retrieval, given a set of relevant documents and a set of retrieved documents, \emph{precision} is the fraction of relevant retrieved documents over the retrieved documents.
Given a log and a DFA, we propose to measure how precisely a DFA
(specification)
describes a log as the fraction of executions recorded in the log and specified
in the DFA over all the executions (of which there can be infinitely many)
specified in the DFA.

\begin{define}{Precision of DFA w.r.t. log}{def:pm:precision}{\quad\\}
Given a log $\mathcal{L}$ and a DFA $B$,
the \emph{precision} of $B$ w.r.t. $\mathcal{L}$ induced by a language measure $m$ is denoted by $\mathit{precision}_m(B,\mathcal{L})$ and is the language quotient induced by $m$ of the intersection of the languages of $B$ and $\mathcal{L}$ over the language of $B$, \ie
$
\mathit{precision}_m(B,\mathcal{L}):=\mathit{quotient}_m(L(B) \cap L(\mathcal{L}),L(B)).
$
\end{define}

\noindent
Precision is the ratio of the measure of traces of the log that are also computations of the DFA (specified and recorded behaviour) to the measure of all the computations of the DFA (specified behaviour).

\begin{example}
The precision of automaton $\mathcal{S}_3$ in
\autoref{fig:intro_example_aut_3} w.r.t. log $\mathcal{L}_2$ in
\autoref{fig:intro_logs} induced by the cardinality of a language is computed
as follows:
\smash{$\mathit{precision}_\mathit{car}(\mathcal{S}_3,\mathcal{L}_2) = \frac{|L(\mathcal{S}_3) \cap L(\mathcal{L}_2)|}{|L(\mathcal{S}_3)|} = \frac{1}{2}$};
the languages of $\mathcal{S}_3$ and $\mathcal{L}_2$ share one word, $\sigma:=\sequence{\texttt{a},\texttt{b},\texttt{d},\texttt{e}}$, while the language of $\mathcal{S}_3$ has two words: $\sigma$ and $\sequence{\texttt{a},\texttt{b},\texttt{c},\texttt{d},\texttt{e}}$.
\hfill\ensuremath{\lrcorner}
\end{example}

In information retrieval, given a set of relevant documents and a set of retrieved documents, \emph{recall} is the fraction of relevant retrieved documents over the relevant documents.
Given a log and a DFA, we measure how well the DFA captures the
behaviour of the log as the fraction of executions recorded in the
log and specified in the DFA over all the behaviour recorded in the log.

\begin{define}{Recall of DFA w.r.t. log}{def:pm:recall}{\quad\\}
Given a log $\mathcal{L}$ and a DFA $B$,
the \emph{recall} of $B$ w.r.t. $\mathcal{L}$ induced by a language measure $m$ is denoted by $\mathit{recall}_m(B,\mathcal{L})$ and is the language quotient induced by $m$ of the intersection of the languages of $B$ and $\mathcal{L}$ over the language of $\mathcal{L}$, \ie
$
\mathit{recall}_m(B,\mathcal{L}):=\mathit{quotient}_m(L(B) \cap L(\mathcal{L}),L(\mathcal{L})).
$
\end{define}

Recall is therefore the ratio of the measure of traces of the log that are also
computations of the DFA (specified and recorded behaviour) to the measure of
the
traces of the log (recorded behaviour).

\begin{example}
For example, the recall of automaton $\mathcal{S}_3$ in
\autoref{fig:intro_example_aut_3} w.r.t. log $\mathcal{L}_2$ in
\autoref{fig:intro_logs} induced by the cardinality of a language is computed
as follows:
\smash{$\mathit{recall}_\mathit{car}(\mathcal{S}_3,\mathcal{L}_2) = \frac{|L(\mathcal{S}_3) \cap L(\mathcal{L}_2)|}{|L(\mathcal{L}_2)|} = \frac{1}{4}.$}
This result is easy to verify by checking that the language of $\mathcal{L}_2$ 
consists of four words.
\hfill\ensuremath{\lrcorner}
\end{example}

The notions of fitness and recall of a DFA w.r.t. a log  take a language measure as a parameter.
The language of a log is finite.
If the language of the DFA is also finite, one can instantiate the precision and recall with the cardinality of a language, as proposed above.
The language of a DFA is, however, often infinite.
To overcome this limitation, we suggest instantiating the precision and recall with the short-circuit eigenvalue measure, as illustrated below; see \autoref{sec:framework_instantiations} for details on the short-circuit eigenvalue measure.

\begin{figure*}[t]
\vspace{-2mm}
\centering
\subfloat[]{
   \includegraphics[scale=0.95,trim=0 -1mm 0 0] {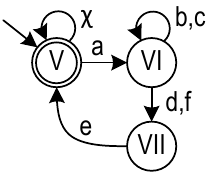}
   \label{fig:sect5:S1}
}
\hspace{13mm}
\subfloat[]{
   \includegraphics[scale=0.95] {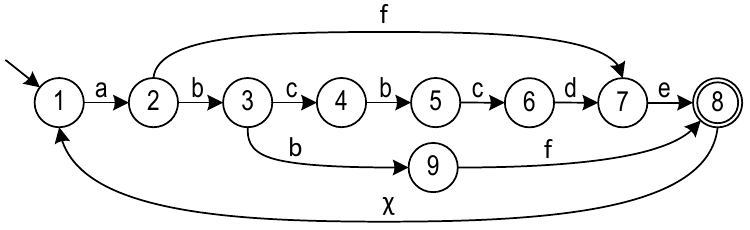}
   \label{fig:sect5:L3}
}
\hspace{10mm}
\subfloat[]{
\begin{minipage}[adjusting]{.2\linewidth}
\centering
\vspace{-16mm}
$
\footnotesize
\renewcommand\arraystretch{1.2}
\begin{array}{c|ccc}
& \mathrm{V} & \mathrm{VI} & \mathrm{VII} \\
\hline
\mathrm{V}   & 1 & 1 & 0 \\
\mathrm{VI}  & 0 & 2 & 2 \\
\mathrm{VII} & 1 & 0 & 0
\end{array}
$
\label{fig:sect5:S1:matrix}
\end{minipage}
}
\hspace{13mm}
\subfloat[]{
   \includegraphics[scale=0.95,trim=0 -2mm 0 3mm]{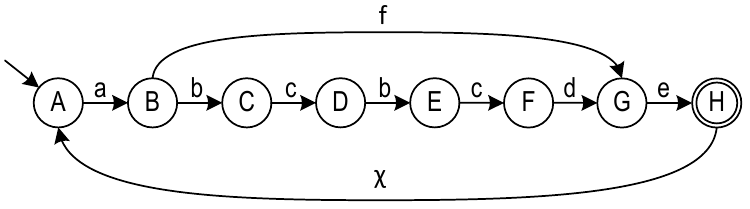}
   \label{fig:sect5:S1:cap:L3}
}
\vspace{-2mm}
\caption{Three DFAs and the adjacency matrix of the automaton in (a).}
\label{fig:automata:short-circuiting}
\vspace{-4mm}
\end{figure*}

\begin{example}
Consider automaton $\mathcal{S}_1$ in \autoref{fig:intro_example_aut_1} and log $\mathcal{L}_3$ in \autoref{fig:intro_logs}.
\autoref{fig:sect5:S1} and \autoref{fig:sect5:L3} show automata with languages
$(L(\mathcal{S}_1) \circ \{\sequence{\chi}\})^* \circ L(\mathcal{S}_1)$ and $(L(\mathcal{L}_3) \circ \{\sequence{\chi}\})^* \circ L(\mathcal{L}_3)$, respectively,
whereas \autoref{fig:sect5:S1:cap:L3} shows an automaton with language $((L(\mathcal{S}_1) \cap L(\mathcal{L}_3)) \circ \{\sequence{\chi}\})^* \circ (L(\mathcal{S}_1) \cap L(\mathcal{L}_3))$.
It is easy to see that given a DFA $B:=(Q,\Lambda,\delta,q_0,A)$, it holds that $L(B')=(L(B) \circ \{\sequence{\chi}\})^* \circ L(B)$, where $B':=(Q,\Lambda \cup \{\chi\},\delta \cup (A \times \{q_0\}),q_0,A)$.
Note that the automaton in \autoref{fig:sect5:S1} was obtained from the automaton in \autoref{fig:intro_example_aut_1} using this simple transformation and subsequent minimization~\cite{Hopcroft71}.
Such minimization is possible because any automaton with the language of interest, in this case $(L(\mathcal{S}_1) \circ \{\sequence{\chi}\})^* \circ L(\mathcal{S}_1)$, suffices.
\autoref{fig:sect5:S1:matrix} shows the adjacency matrix of the automaton in \autoref{fig:sect5:S1}. The Perron-Frobenius eigenvalue of this matrix is 2.521.
The Perron-Frobenius eigenvalues of the adjacency matrices of automata in \autoref{fig:sect5:L3} and \autoref{fig:sect5:S1:cap:L3} are 1.226 and 1.128, respectively.
Thus, it holds that
$\mathit{precision}_{\mathit{eig}^\bullet}(\mathcal{S}_1,\mathcal{L}_3)=0.447$ 
and $\mathit{recall}_{\mathit{eig}^\bullet}(\mathcal{S}_1,\mathcal{L}_3)=0.92$.

\autoref{tab:prec:and:recall:values} lists all the precision and recall values induced by $\mathit{eig}^\bullet$ for each of the three DFAs in
\figurenames~\ref{fig:intro_example_aut_1}--\ref{fig:intro_example_aut_3} w.r.t. every log in \autoref{fig:intro_logs}. 
The presented values obey all the properties discussed in \autoref{sec:framework_properties} and in the subsequent section.

\begin{table}[h]
\centering
\caption{Precision and recall values.}
\vspace{-3mm}
\begin{tabular}{c c c c} 
\toprule
Automaton & Log & {Precision} & {Recall} \\
\midrule
$\mathcal{S}_1$ & $\mathcal{L}_1$ & 0.442 & 1.0 \\
$\mathcal{S}_1$ & $\mathcal{L}_2$ & 0.506 & 1.0 \\
$\mathcal{S}_1$ & $\mathcal{L}_3$ & 0.447 & 0.92 \\
$\mathcal{S}_2$ & $\mathcal{L}_1$ & 0.661 & 0.897 \\
$\mathcal{S}_2$ & $\mathcal{L}_2$ & 0.661 & 0.784 \\
$\mathcal{S}_2$ & $\mathcal{L}_3$ & 0 & 0 \\
$\mathcal{S}_3$ & $\mathcal{L}_1$ & 0.881 & 0.897 \\
$\mathcal{S}_3$ & $\mathcal{L}_2$ & 0.881 & 0.784 \\
$\mathcal{S}_3$ & $\mathcal{L}_3$ & 0 & 0 \\
\bottomrule
\end{tabular}
\label{tab:prec:and:recall:values}
\end{table}
\vspace{-5mm}
\hfill\ensuremath{\lrcorner}
\end{example}

\subsection{Properties of Precision and Recall}
\label{sec:precision_recall_properties}

This section discusses important properties of precision and recall quotients, as defined in \autoref{sec:precision_recall_definition}, instantiated with some language measure, \eg the short-circuit eigenvalue measure ($\mathit{eig}^\bullet$) from \autoref{sec:framework_instantiations}.
As the precision and recall quotients are language quotients, they inherit all the properties discussed in \autoref{sec:framework_properties}.
Next, we present these properties for the precision quotient; note that one can derive corresponding properties for the recall quotient analogously.

\begin{propose}{Precision monotonicity over designs}{prp:precision:1}
Let $\mathcal{L}$ be an event log and let $B_1$ and $B_2$ be two DFAs such that $\lang{\mathcal{L}} \subset \lang{B_1}$ and $\lang{B_1} \subset \lang{B_2}$.
Let $m$ be a language measure over regular languages.
Then, it holds that $\mathit{precision}_m(B_2,\mathcal{L}) < \mathit{precision}_m(B_1,\mathcal{L})$.
\end{propose}

\noindent
\propositionname~\ref{prp:precision:1} follows from \lemmaname~\ref{lem:fixed:num} by accepting that $L_1=\lang{\mathcal{L}}$, $L_2=\lang{B_1}$, and $L_3=\lang{B_2}$.

\begin{propose}{Precision monotonicity over executions}{prp:precision:2}
Let $\mathcal{L}_1$ and $\mathcal{L}_2$ be two event logs and let $B$ be a DFA such that $\lang{\mathcal{L}_1} \subset \lang{\mathcal{L}_2}$ and $\lang{\mathcal{L}_2} \subset \lang{B}$.
Let $m$ be a language measure over regular languages.
Then, it holds that $\mathit{precision}_m(B,\mathcal{L}_1) < \mathit{precision}_m(B,\mathcal{L}_2)$.
\end{propose}

\noindent
\propositionname~\ref{prp:precision:2} follows from \lemmaname~\ref{lem:fixed:den} by accepting that $L_1=\lang{\mathcal{L}_1}$, $L_2=\lang{\mathcal{L}_2}$, and $L_3=\lang{B}$.

\begin{propose}{Precision minimality}{prp:precision:3}
Let $\mathcal{L}$ be an event log and let $B$ be a DFA such that $\lang{\mathcal{L}} \subset \lang{B}$ and $\lang{B} \subset \Phi^*$, $\Phi^* \subset \Xi^*$.
Let $m$ be a language measure over regular languages.
Then, it holds that $\mathit{precision}_m(\Phi^*,\mathcal{L}) < \mathit{precision}_m(B,\mathcal{L})$.
\end{propose}

\noindent
\propositionname~\ref{prp:precision:3} follows from \corolname~\ref{cor:fixed:num} by accepting that $L_1=\lang{\mathcal{L}}$ and $L_2=\lang{B}$.

Next, we present further properties specific for the precision and recall of a DFA w.r.t. a log.
First, precision and recall take values from the interval that contains zero and one.

\begin{propose}{Precision interval}{prp:precision:interval}
Given a log $\mathcal{L}$, a DFA $B$, such that $L(B) \neq \emptyset$, and a
language measure $m$ over regular languages, it holds that $0 \leq
\mathit{precision}_m(B,\mathcal{L}) \leq 1$.
\end{propose}

\noindent
\propositionname~\ref{prp:precision:interval} follows from \definitionname~\ref{def:pm:precision} and the fact that $m$ is a language measure over regular languages.
Indeed, it holds that $L(B) \cap L(\mathcal{L}) \subseteq L(B)$ and, thus, $m(L(B) \cap L(\mathcal{L})) \leq m(L(B))$; note that $m$ is an increasing measure.

\begin{propose}{Recall interval}{prp:recall:interval}{\quad\\}
Given a log $\mathcal{L}$, such that $L(\mathcal{L}) \neq \emptyset$, a DFA
$B$, and a language measure $m$ over regular languages, it holds that $0 \leq
\mathit{recall}_m(B,\mathcal{L}) \leq 1$.
\end{propose}

\noindent
\propositionname~\ref{prp:recall:interval} holds because of \definitionname~\ref{def:pm:recall}, the facts that $L(B) \cap L(\mathcal{L}) \subseteq L(\mathcal{L})$, and, again, because $m$ is an increasing measure.

Second, precision and recall equal to one when the languages of a given DFA and log are in the containment relation.

\begin{propose}{Maximal precision}{prp:max:precision}{\quad\\}
Given a log $\mathcal{L}$, a DFA $B$, such that $L(B) \neq \emptyset$, and a
language measure $m$ over regular languages,
$L(B) \subseteq L(\mathcal{L})$ \ifaof $\mathit{precision}_m(B,\mathcal{L})=1$.
\end{propose}

\noindent
If $L(B) \subseteq L(\mathcal{L})$, then it holds that $\mathit{precision}_m(B,\mathcal{L}) = {m(L(B))}/{m(L(B))}=1$.
Conversely, if $\mathit{precision}_m(B,\mathcal{L})=1$, then $m(L(B) \cap L(\mathcal{L})) = m(L(B))$.
Then, it holds that $L(B) \subseteq L(\mathcal{L})$.

\begin{propose}{Maximal recall}{prp:max:recall}{\quad\\}
Given a log $\mathcal{L}$, such that $L(\mathcal{L}) \neq \emptyset$, a DFA
$B$, and a language measure $m$ over regular languages, $L(\mathcal{L})
\subseteq L(B)$ \ifaof $\mathit{recall}_m(B,\mathcal{L})=1$.
\end{propose}

\noindent
The proof of \propositionname~\ref{prp:max:recall} follows the structure of the proof of \propositionname~\ref{prp:max:precision} but swaps the roles of the languages of $B$ and $\mathcal{L}$.

Third, precision and recall both equal to one \ifaof the languages of the DFA and log are identical.

\begin{corol}{Maximal precision and recall}{cor:max}{\quad\\}
Given a log $\mathcal{L}$, $L(\mathcal{L}) \neq \emptyset$, a DFA $B$, $L(B)
\neq \emptyset$, and a language measure $m$ over regular languages,
$L(B) \!=\! L(\mathcal{L})$ \ifaof $\mathit{precision}_m(B,\mathcal{L})\!=\!1$ and $\mathit{recall}_m(B,\mathcal{L})\!=\!1$.
\end{corol}

\noindent
\corolname~\ref{cor:max} follows immediately from \propositionname~\ref{prp:max:precision} and \propositionname~\ref{prp:max:recall}.

Finally, precision and recall equal to zero when the languages of the DFA and log do not overlap.

\begin{propose}{Minimal precision}{prp:min:precision}{\quad\\}
Given a log $\mathcal{L}$, a DFA $B$, such that  $L(B) \neq \emptyset$, and a
language measure $m$ over regular languages,
$L(B) \cap L(\mathcal{L}) = \emptyset$ \ifaof $\mathit{precision}_m(B,\mathcal{L})=0$.
\end{propose}

\noindent
If $L(B) \cap L(\mathcal{L}) = \emptyset$, then $\mathit{precision}_m = m(\emptyset) / m(B) = 0$, as $m$ starts at zero.
Conversely, if $\mathit{precision}_m(B,\mathcal{L})=0$, then $m(L(B) \cap L(\mathcal{L}))=0$.
Then, $L(B) \cap L(\mathcal{L})=\emptyset$ because $m$ starts at zero and is increasing.

\begin{propose}{Minimal recall}{prp:min:recall}{\quad\\}
Given a log $\mathcal{L}$, such that $L(\mathcal{L}) \neq \emptyset$, a DFA
$B$, and a language measure $m$ over regular languages,
$L(B) \cap L(\mathcal{L}) = \emptyset$ \ifaof $\mathit{recall}_m(B,\mathcal{L})=0$.
\end{propose}

\noindent
The proof of \propositionname~\ref{prp:min:recall} follows the structure of the proof of \propositionname~\ref{prp:max:recall} but swaps the roles of the languages of $B$ and $\mathcal{L}$.


\section{Implementation}
\label{sec:implementation}

The eigenvalue-based recall and precision measures for comparing specification and executions have been implemented and are publicly available.\footnote{The source code for computing the measures and for performing the experiments reported in~\autoref{sec:precision_recall_evaluation} is available at~\url{https://github.com/andreas-solti/eigen-measure}. In addition, as part of the jBPT library~\cite{Polyvyanyy2013a} (refer to~\url{https://github.com/jbpt/codebase}, \texttt{jbpt-pm} module), we develop and maintain a tool with a command-line interface to compute measures that compare specifications of dynamic systems. As of today, the tool supports the computation of measures presented in this work and the measures based on the partial matching of executions presented in~\cite{Polyvyanyy2019}.}
Algorithm~\ref{alg:precision:and:recall} summarizes the steps for computing the measures in pseudocode.
As input, the algorithm takes two specifications.
One specification describes a collection of ``retrieved'' executions, while the other captures ``relevant'' executions.
For example, in the context of process mining, one can see executions encoded in an event log as relevant, \ie those that encode valuable information, while executions captured by the model discovered from the event log as retrieved, \ie constructed from the event log.

{
\scriptsize
\begin{algorithm}
\caption{PrecisionAndRecallForSpecificationsOfDynamicSystems\label{alg:precision:and:recall}}
\KwIn{Two NFAs $\mathit{ret}$ and $\mathit{rel}$ describing retrieved and relevant executions, respectively.}
\KwOut{A pair $(\mathit{prec},\mathit{rec})$, where $\mathit{prec}$ and $\mathit{rec}$ are, respectively, precision and recall for $\mathit{ret}$ and $\mathit{rel}$.}
\BlankLine
\tcp{Construct deterministic versions of $\mathit{ret}$ and $\mathit{rel}$}
\leIf{$\neg \textrm{\emph{IsDeterministic}}(\mathit{ret})$}{$\mathit{dRet} \leftarrow \textrm{Determinize}(\mathit{ret})$}{$\mathit{dRet} \leftarrow \mathit{ret}$}
\leIf{$\neg \textrm{\emph{IsDeterministic}}(\mathit{rel})$}{$\mathit{dRel} \leftarrow \textrm{Determinize}(\mathit{rel})$}{$\mathit{dRel} \leftarrow \mathit{rel}$}
$\mathit{mRet} \leftarrow \textrm{Minimize}(\mathit{dRet})$\tcc*[r]{Minimize automaton $\mathit{dRet}$}
$\mathit{mRel} \leftarrow \textrm{Minimize}(\mathit{dRel})$\tcc*[r]{Minimize automaton $\mathit{dRel}$}
$\mathit{scRet} \leftarrow \textrm{ShortCircuit}(\mathit{mRet})$\tcc*[r]{Short-circuit automaton $\mathit{mRet}$}
$\mathit{scRel} \leftarrow \textrm{ShortCircuit}(\mathit{mRel})$\tcc*[r]{Short-circuit automaton $\mathit{mRel}$}
$\mathit{intersection}   \leftarrow \textrm{Intersection}(\mathit{mRet},\mathit{mRel})$\tcc*[r]{Construct intersection of $\mathit{mRet}$ and $\mathit{mRel}$}
$\mathit{scIntersection} \leftarrow \textrm{ShortCircuit}(\mathit{intersection})$\tcc*[r]{Short-circuit automaton $\mathit{intersection}$}
\tcp{Compute Perron-Frobenius eigenvalues of the adjacency matrices of automata}
$\mathit{eigRet} \leftarrow \textrm{PerronFrobenius}(\mathit{scRet})$\;
$\mathit{eigRel} \leftarrow \textrm{PerronFrobenius}(\mathit{scRel})$\;
$\mathit{eigIntersection} \leftarrow \textrm{PerronFrobenius}(\mathit{scIntersection})$\;
\Return{$(\frac{\mathit{eigInersection}}{\mathit{eigRet}},\frac{\mathit{eigInersection}}{\mathit{eigRel}})$}\;
\end{algorithm}
}

\noindent
Lines~2 and~3 of the algorithm ensure that $\mathit{dRet}$ and $\mathit{dRel}$ are \emph{deterministic versions} of $\mathit{ret}$ and $\mathit{rel}$, respectively; that is $\mathit{dRet}$ and $\mathit{dRel}$ are DFAs that recognise the languages $L(\mathit{ret})$ and $L(\mathit{rel})$, respectively.
Given a $\tau$-free NFA, a DFA that recognises the language of the NFA always exists~\cite{Hopcroft2007} and can be constructed using the Rabin-Scott powerset construction method~\cite{RabinS59}, which has the worst-case time complexity of $O(2^n)$ where $n$ is the number of states in the NFA~\cite{Moore71}.
However, in practice, the DFA constructed from the NFA has about as many states as the NFA, but often more transitions~\cite{Hopcroft2007}.
If an NFA is not $\tau$-free, one still can always construct a DFA that recognises the language of the NFA~\cite{Hopcroft2007}.
The construction extends the powerset construction to account for silent transitions.
Hence, at lines~2 and 3 of Algorithm~\ref{alg:precision:and:recall}, both functions IsDeterministic and Determinize take an NFA as input.
Function IsDeterministic returns \emph{true} if the input NFA is a DFA, and otherwise, returns \emph{false}, while function Determinize constructs and returns a deterministic version of the input NFA.
In our tools, function Determinize implements the extended version of the Rabin-Scott powerset construction from~\cite{Hopcroft2007} that constructs $\tau$-free automata.

Lines~4 and~5 of the algorithm construct the minimal versions of the DFAs $\mathit{dRet}$ and $\mathit{dRel}$, respectively.
For every DFA $A$, there exists a unique (up to isomorphism) DFA with a minimum number of states that recognises the language of $A$, called the \emph{minimal version} of $A$~\cite{Hopcroft2007}.
There exist several algorithms that, given a DFA, construct its minimal version.
For example, the worst-case time complexity of the algorithm by Hopcroft~\cite{Hopcroft71} is $O(nm \log{(m)})$, where $n$ is the number of states and $m$ is the size of the alphabet.
Function Minimize used at lines~4 and~5 takes a DFA as input and returns its minimal version.
In our tools, function Minimize implements the algorithm by Hopcroft.
Note that the minimization step can be skipped as the computation of the topological entropy does not require a DFA to be minimal (see \autoref{sec:framework_instantiations}).
While performing the scalability evaluation reported in \autoref{sec:scalability:evaluation}, we noticed that it is faster to minimize a DFA and then compute the Perron-Frobenius eigenvalue of its adjacency matrix than to compute the eigenvalue of the original DFA.
A detailed study of this phenomenon, despite important, is out of the scope of this paper.
The computation times reported in \tablename~\ref{tab:evaluation:scalability} include the minimization times.

Next, lines 6--9 of Algorithm~\ref{alg:precision:and:recall} construct \emph{short-circuit versions} of automata $\mathit{mRet}$ (line~6), $\mathit{mRel}$ (line~7), and the intersection $\mathit{intersection}$ of $\mathit{mRet}$ and $\mathit{mRel}$ (line~9) constructed at line~8.
Given a DFA, its short-circuit version is obtained by adding, for each accept state $a$, a short-circuit transition from $a$ to the start state.
All the short-circuit transitions have a dedicated short-circuit label ($\chi$), which is not in the set of labels of the original automaton.
It is easy to see that the short-circuit version of a DFA is deterministic.
Function ShortCircuit at lines 6, 7, and 9 of the algorithm implements the above construction.
The intersection of regular languages is a well-known operation in automata theory and is implemented in function Intersection at line~8 of the algorithm.
Its time complexity is $O(nm)$, where $n$ and $m$ are the numbers of states in the intersected automata~\cite{Hopcroft2007}.

Lines 11--13 of Algorithm~\ref{alg:precision:and:recall} compute Perron-Frobenius eigenvalues~\cite{Ceccherini-SilbersteinMS03} of the adjacency matrices of $\mathit{scRet}$ (line~11), $\mathit{scRel}$ (line~12), and $\mathit{scIntersection}$ (line~13).
Function PerronFrobenius used at lines 11--13 of the algorithm takes a DFA as input, constructs its adjacency matrix, and returns a largest eigenvalue of the matrix.
Note that, typically, the adjacency matrix of a DFA is rather sparse.
Thus, we can handle very large DFAs on personal computers and compute eigenvalues of their adjacency matrices with the help of memory-friendly sparse data structures.

We use the Java library \texttt{Matrix Toolkit Java (MTJ)} that relies on the low level numerical Fortran-based libraries in \texttt{ARPACK}~\cite{lehoucq1998arpack} to compute eigenvalues.
The available version of the \texttt{MTJ} library was capable of handling symmetric matrices in \texttt{ARPACK}.
Note that the adjacency matrices of automata are usually not symmetric.
Thus, we extended \texttt{MTJ} to expose \texttt{dnaupd} and \texttt{dneupd} routines of \texttt{ARPACK} to compute eigenvalues of general matrices.
Our extension for computing a largest eigenvalue of a non-symmetric matrix is publicly available.\footnote{The source code is available at \url{https://github.com/andreas-solti/matrix-toolkits-java} and the Maven Central Repository.}
The underlying technique for computing a largest eigenvalue of a matrix is called ``implicitly restarted Arnoldi iterations''~\cite{lehoucq1998arpack} and has a low polynomial time complexity.
Note that the numerical methods for computing an eigenvalue of a general matrix converge but provide no guarantees of convergence in a fixed number of iterations.
Thus, we set the threshold of 300\,000 as the maximum number of allowed iterations for practical reasons.
The author of the software package states that: ``The question of determining a shift strategy that leads to a provable rapid rate of convergence is a difficult problem that continues to be researched''~\cite{lehoucq2001implicitly}.
In the rare cases of non-convergence of the computation, we use the estimated value of the eigenvalue obtained at the end of the computation.
The proposed method is not tied to the ARPACK implementation for computing a largest eigenvalue of a matrix.
Thus, the use of more recent results in eigenvalue computation~\cite{Dookhitram2009} and parallel algorithms for eigenvalue computation~\cite{Imakura2018} may improve the overall performance of our tools.

Finally, line~14 of Algorithm~\ref{alg:precision:and:recall} returns a pair with the precision and recall for the input automata as the first and second element, respectively, computed as quotients of the corresponding eigenvalues.
To compute precision and recall of a system with respect to a given event log, \ie the quotients presented in~\autoref{sec:precision_recall}, one can invoke Algorithm~\ref{alg:precision:and:recall} with NFA $\mathit{ret}$ encoding the system and NFA $\mathit{rel}$ encoding the event log.
Indeed, one can also invoke Algorithm~\ref{alg:precision:and:recall} with two NFAs that describe two systems to estimate the language coverage.

The worst time complexity of Algorithm~\ref{alg:precision:and:recall} is dominated by the exponential worst time complexity of the NFA determinization at lines 2 and 3.
Note, however, that in practice the automata generated from software code or business process models are readily deterministic.
Besides, it is easy to construct a DFA that encodes an event log, for example, as a prefix tree. 
\section{Experimental Evaluation}
\label{sec:precision_recall_evaluation}

The goal of the evaluation reported in this section is to demonstrate that the proposed eigenvalue-based measures for comparing specifications and collections of executions of systems advance the state-of-the-art and can be readily applied in practice. This is achieved by answering the below research questions; Tables~\ref{tab:approaches:labelsandrefs:precision}~and~\ref{tab:approaches:labelsandrefs:recall} list the approaches considered in our comparative evaluation, they are discussed in detail in~\autoref{sec:related_work}. 

\begin{table*}[h]
\vspace{-2mm}
\centering
\subfloat{
\begin{minipage}[t]{0.56\textwidth}
\centering
\begin{scriptsize}
\begin{tabular}{@{}ll@{}}
\toprule
Short label              & Full name and reference                                                                       \\ \midrule
\textsf{advBehAppropriateness}    & Advanced behavioural appropriateness~\cite{RozinatA08IS}                                      \\[0.5ex]
\textsf{alignmentPrecision}       & Alignment-based precision~\cite{AdriansyahDA11EDOC}                                           \\[0.5ex]
\textsf{antiAlignPrecision}       & Anti-alignments precision~\cite{DongenCC16}                                                   \\[0.5ex]
\textsf{bestAlignPrecision}       & Best optimal-alignments precision~\cite{Adriansyah.etal/ISEM2015:OneAndBestAlignPrecision}    \\[0.5ex]
\textsf{negativeEventPrecision}   & AGNEs specificity~\cite{Goedertier.etal/JoMLR2009:ProcessDiscoveryArtificialNegativeEvents}   \\[0.5ex]
\textsf{oneAlignPrecision}        & One optimal-alignment precision~\cite{Adriansyah.etal/ISEM2015:OneAndBestAlignPrecision}      \\[0.5ex]
\textsf{precisionEig}             & Eigenvalue-based precision (this paper)                                                       \\[0.5ex]
\textsf{precisionETC}             & ETC precision~\cite{Munoz-Gama.Carmona/CIDM2011:ETConformanceExt:StabilityConfidenceSeverity} \\[0.5ex]
\textsf{projectedPrecision}       & PCC precision~\cite{Leemans2016}                                                              \\[0.5ex]
\textsf{simpleBehAppropriateness} & Simple behavioural appropriateness~\cite{RozinatA08IS}                                        \\ \bottomrule
\end{tabular}

\end{scriptsize}
\vspace{2mm}
\caption{Precision measures.}
\label{tab:approaches:labelsandrefs:precision}
\end{minipage}
}
\subfloat{
\begin{minipage}[t]{0.42\textwidth}
\centering
\begin{scriptsize}
\begin{tabular}{@{}ll@{}}
\toprule
Short label         & Full name and reference                                                                \\ \midrule
\textsf{alignmentFitness}    & Alignment-based fitness~\cite{AalstAD12WIDM}                                           \\[0.5ex]
\textsf{negativeEventRecall} & AGNEs recall~\cite{Goedertier.etal/JoMLR2009:ProcessDiscoveryArtificialNegativeEvents} \\[0.5ex]
\textsf{tokenBasedFitness}   & Token-based fitness~\cite{RozinatA08IS}                                                \\[0.5ex]
\textsf{parsingMeasure}      & Continued parsing measure~\cite{Weijters2006}                                          \\[0.5ex]
\textsf{projectedRecall}     & PCC recall~\cite{Leemans2016}                                                          \\[0.5ex]
\textsf{properCompletion}    & Proper completion~\cite{RozinatA08IS}                                                  \\[0.5ex]
\textsf{recallEig}           & Eigenvalue-based recall (this paper)                                                   \\ \bottomrule
\end{tabular}

\end{scriptsize}
\caption{Fitness (recall) measures.}
\label{tab:approaches:labelsandrefs:recall}
\end{minipage}
}
\vspace{-10mm}
\end{table*}

\medskip
\begin{compactitem}
  \item[RQ1:] Are state-of-the-art precision and recall measures for process mining monotone?
	\item[RQ2:] Are the eigenvalue-based quotients applicable for comparing the behaviours of two systems?
	\item[RQ3:] Is the computation of eigenvalue-based quotients feasible for practical applications?
\end{compactitem}
\medskip

To answer RQ1, we studied which state-of-the-art precision and recall measures in process mining fulfill Lemmata~\ref{lem:fixed:num} and~\ref{lem:fixed:den}; we assume for this evaluation that specifications used in comparisons describe bounded systems, \ie systems that induce finite collections of reachable states.
In~\autoref{subsec:monotonicity:experiments}, we give a negative answer to RQ1 for the state-of-the-art precision measures; for each evaluated measure, we present at least one example that violates the property of monotonicity.

To answer RQ2, we compare specifications of software systems studied in~\cite{Gabel.Su/FSE2008:Javert} with specifications discovered from their sample executions by computing language coverage values.
We argue that such comparisons can be used to reveal insights on the quality of automated algorithms for discovering system specifications, see~\autoref{sec:comparing:specifications}, within reasonable time bounds, see~\autoref{sec:scalability:evaluation}.

Finally, to answer RQ3, we measured the wall-clock time of computing eigenvalue-based quotients for logs and specifications from real-world and synthetic datasets.
First, we computed the eigenvalue-based precision and recall for fifteen real-world event logs and specifications automatically discovered from these logs. 
The logs are publicly available\footnote{Event logs are published at: \url{https://data.4tu.nl/repository/collection:event_logs_real}} and encode executions of real-world IT systems executing 
business processes with real customers.
We observed that for most input pairs, each composed of a specification and a log, computations of both precision and recall measures are accomplished within ten minutes, and often much faster.
Second, we report the values and computation times of the eigenvalue-based quotients for inputs taken from a collection of real-world specifications, corresponding collections of sample executions of controlled sizes, and specifications automatically discovered from the sample executions.\footnote{The collection of the specifications used in the experiment (which also includes specifications that due to space considerations were not discussed in this article), as well as tools to generate sample executions and discovered models, and scripts to reproduce the experiment is available at: \url{https://github.com/andreas-solti/monotone-precision}.}

To perform the experiments, we used our implementation of the eigenvalue-based quotients, described in \autoref{sec:implementation}, and relied on the
Comprehensive Benchmark Framework (CoBeFra)~\cite{DBLP:conf/cidm/BrouckeWVB13} to compute other precision and recall measures.

\subsection{Comparing Executions and Specifications: Monotonicity of Precision and Recall}
\label{subsec:monotonicity:experiments}

For a given log, a monotonic precision measure should \emph{always} decrease when fresh behaviour is added to the specification. 
Conversely, a monotonic precision measure should \emph{always} increase when the excess behaviour is removed from the specification. 
We use three experimental setups that address these two phenomena and show that all the precision measures listed in \autoref{tab:approaches:labelsandrefs:precision}, except the eigenvalue-based measure, fail to demonstrate monotonicity for at least one of the setups.
We also compare and discuss precision and recall measurements computed for several real-world and synthetic datasets using the techniques listed in Tables~\ref{tab:approaches:labelsandrefs:precision} and~\ref{tab:approaches:labelsandrefs:recall}.

\subsubsection{Monotonicity of Precision Measures}

Next, we discuss the results of the three experimental setups that aim to study the monotonicity of the existing precision measures.
In the first setup, given the log that contains traces with up to two events \texttt{a} before event \texttt{b} and a perfectly fitting specification, we gradually add behaviour to the specification.
We use regular expressions\footnote{Notation \texttt{a\{<min>,<max>\}} is a short-hand for enumerating the minimal and maximal number of repetitions of symbol \texttt{a}. For example, \texttt{a\{0,2\}}$\,\circ\,$\texttt{b} specifies the language \texttt{\{<b>,<a,b>,<a,a,b>\}}.} to describe the languages of the log and specifications:

\smallskip
\begin{compactitem}
	\item[$L$] is the \emph{log} with the language \texttt{a\{0,2\}}$\,\circ\,$\texttt{b};
	\item[$M_x$] are the \emph{specifications} with language \texttt{a\{0,x\}$\,\circ\,$b}, $\texttt{x} \in [2\, ..\,20]$;
	\item[$M_\star$] is the \emph{specification} with language \texttt{a$^*\circ\,$b}.
\end{compactitem}
\smallskip

\begin{figure}[b]
\vspace{-3mm}
	\begin{center}
		\includegraphics[width=.85\columnwidth]{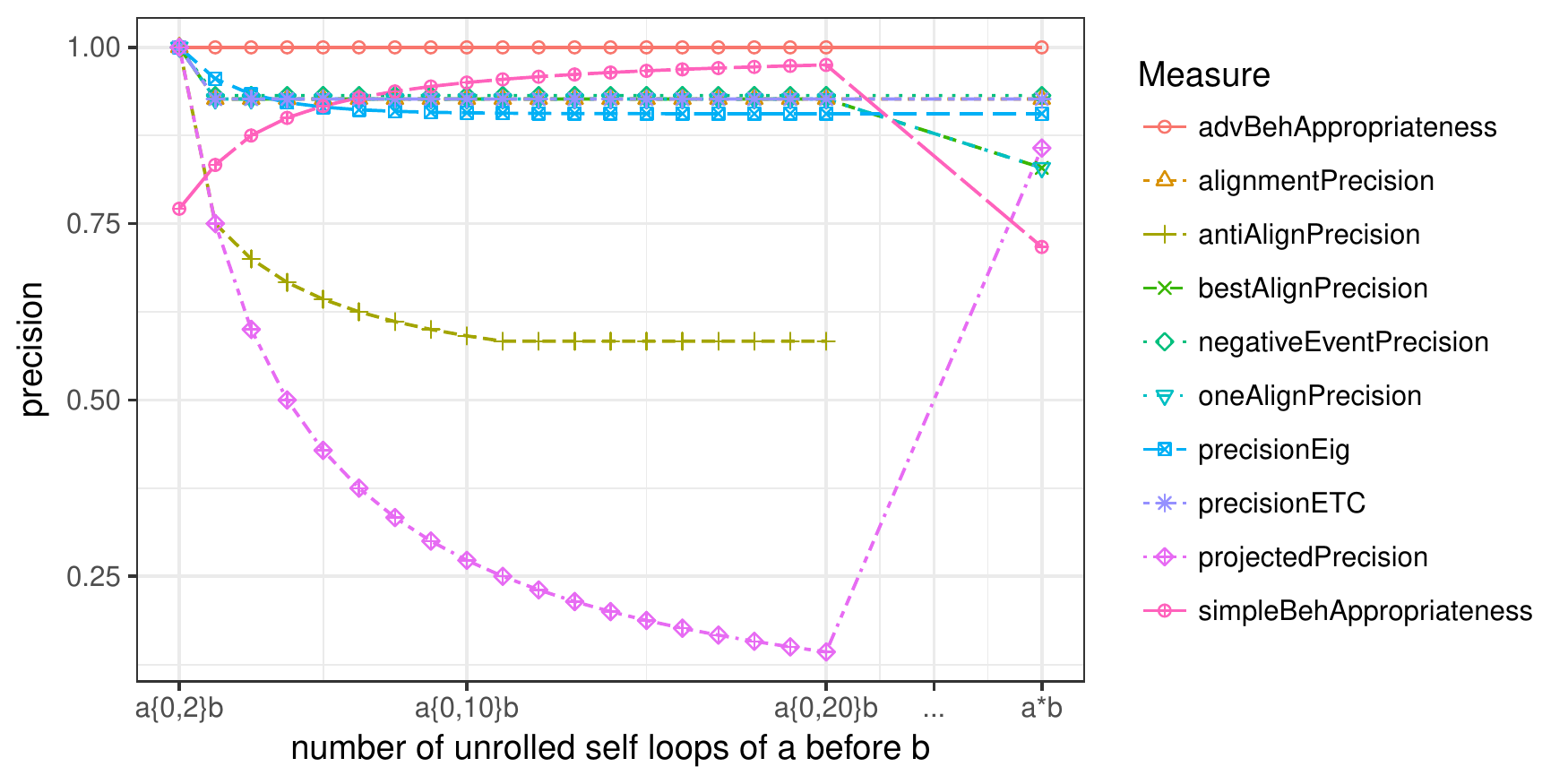}
		\vspace{-3mm}
		\caption{Increasing number of optional \texttt{a}'s before \texttt{b}. Starting with up to two \texttt{a}'s before \texttt{b}, stepwise allow more \texttt{a}'s up to the closure that allows an arbitrary number of \texttt{a}'s before \texttt{b}.}
		\label{fig:precision_recall_evaluation_unrolling_precision}
		\vspace{-2mm}
	\end{center}
\end{figure}

\autoref{fig:precision_recall_evaluation_unrolling_precision} shows the values of various precision measures on the y-axis, plotted for the different
specification languages reported on the x-axis: from $0$--$2$ possible repetitions of \texttt{a} before \texttt{b} up to $0$--$20$ repetitions. 
The last measurement on the end of the x-axis denotes the precision w.r.t. the specification that accepts an arbitrary number (i.e., $0$--$\infty$) of \texttt{a}'s before \texttt{b}, that is \texttt{a$^*\circ\,$b}. 
The values were recorded only if they were computed under the threshold of ten minutes.

The simple behavioural appropriateness measure~\cite{RozinatA08IS} shows a 
trend opposite to the other measures, as the precision values increase for more 
permissive specifications. 
Advanced behavioural appropriateness~\cite{RozinatA08IS} fails to recognise the ``growth'' of the specifications' languages. 
The anti-alignments precision~\cite{DongenCC16} demonstrates the correct trend, but has been unable to compute the precision for the \texttt{a$^*\circ\,$b} specification within the threshold time. 
PCC precision~\cite{Leemans2016} is strictly monotone in the region between up to 2 and up to 20 \texttt{a}'s before \texttt{b}, but violates the monotonicity in the step from the \texttt{a\{0,20\}$\,\circ\,$b} to \texttt{a$^*\circ\,$b} specification. 
The other measures show a similar trend starting at 1.00 for the perfectly fitting specification and decreasing but stabilizing quickly.
These measures, however, do not distinguish between the specifications \texttt{a\{0,y\}$\,\circ\,$b}, where $y \in [3\, ..\, 20]$. 
Our eigenvalue-based precision measure shows a steady stabilizing decline, \ie the more possible repetitions of \texttt{a} before \texttt{b} are allowed by the specification the smaller the precision value is.

Besides iteration, parallelism, captured via interleavings of actions, is another dimension that we investigate. 
We vary the number of permutations over a fixed alphabet of size 5. 
Each word is constructed by drawing five out of five available symbols without replacement, where the order matters. 
Hence, there are $5! = 120$ distinct permutations of symbols, i.e., 120 distinct words. 
A process specification that allows parallel execution of five activities also permits exactly 120 different executions. 
We expect a specification that enumerates all 120 permutations to be equally precise as another specification that uses a parallel building block that says that the same five activities can be done in any order. Thus, the second experimental setup uses the following log and specifications.

\smallskip
\begin{compactitem}
\item[$L_{5||}$] is the \emph{log} with language $\texttt{\{abcde,\,abced,\,abdec,\,abdce,\,abecd\}}$;
	\item[$M_{x||}$] is the collection of \emph{specifications} such that each specification describes all the five traces in $L_{5||}$, and further permutations of symbols \texttt{a}, \texttt{b}, \texttt{c}, \texttt{d}, and \texttt{e}, such that specification $M_{x||}$, $5 \leq x \leq 120$, describes $x$ distinct permutations, and for all $5 \leq x < y \leq 120$ it holds that $M_{y||}$ describes all the permutations described by $M_{x||}$;
	\item[$M_{||}$] is the \emph{specification} with the language of all 120 permutations of symbols \texttt{a}, \texttt{b}, \texttt{c}, \texttt{d}, and \texttt{e} implemented as a parallel block of five simultaneously enabled activities.
\end{compactitem}
\smallskip

\begin{figure}[t]
\vspace{-3mm}
	\begin{center}
		\includegraphics[width=.85\columnwidth]{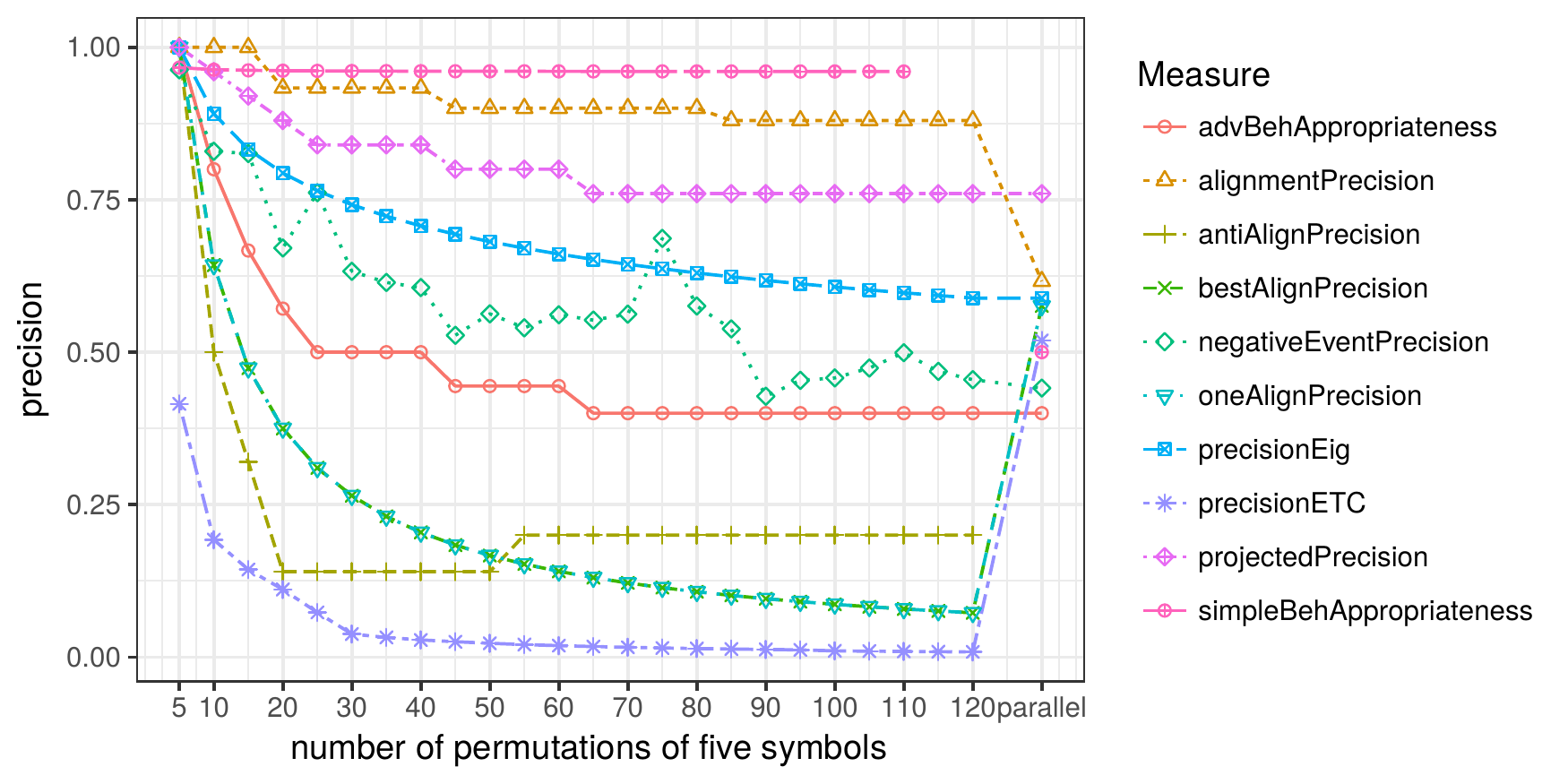}
\vspace{-3mm}
		\caption{The trends of precision measures for the log $\mathit{L}$ w.r.t. specifications that allow permutations of the same five symbols, including the precision of all the explicit $5! = 120$ permutations, and the precision of the language equivalent parallel specification with 120 implicitly allowed permutations.}
		\label{fig:precision_permutations}
	\end{center}
	\vspace{-4mm}
\end{figure}

Most existing precision measures listed in \autoref{tab:approaches:labelsandrefs:precision} show decreasing trends for log $L_{5||}$ and the collection of specifications $M_{x||}$, $5 \leq x \leq 120$, as it can be noticed in \autoref{fig:precision_permutations}. 
However, the specification that explicitly encodes all the 120 permutations often has a different precision value than the specification with five activities in 
parallel, although these two specifications describe the same language. 
Note that only three measures reported the same precision values for both these specifications, namely advanced behavioural appropriateness~\cite{RozinatA08IS}, PCC precision~\cite{Leemans2016}, and our eigenvalue-based measure.
The monotonicity in the second experimental setup is violated by 
the ETC precision~\cite{Munoz-Gama.Carmona/CIDM2011:ETConformanceExt:StabilityConfidenceSeverity},
one optimal-alignment precision~\cite{Adriansyah.etal/ISEM2015:OneAndBestAlignPrecision}, 
best optimal-alignments precision~\cite{Adriansyah.etal/ISEM2015:OneAndBestAlignPrecision}, 
anti-alignments precision~\cite{DongenCC16}, and 
AGNEs specificity~\cite{Goedertier.etal/JoMLR2009:ProcessDiscoveryArtificialNegativeEvents}.
We were unable to compute anti-alignments precision~\cite{DongenCC16} for the fully parallel specification within ten minutes. 
Also, we were unable, using available tools and the ten minutes time threshold, to compute simple behavioural appropriateness~\cite{RozinatA08IS} for the specifications that explicitly capture more than 110 permutations. 
Note that the value of simple behavioural appropriateness drops significantly for the parallel specification.

\begin{figure}[t]
	\vspace{-1mm}
	\begin{center}
		\includegraphics[width=.85\columnwidth]{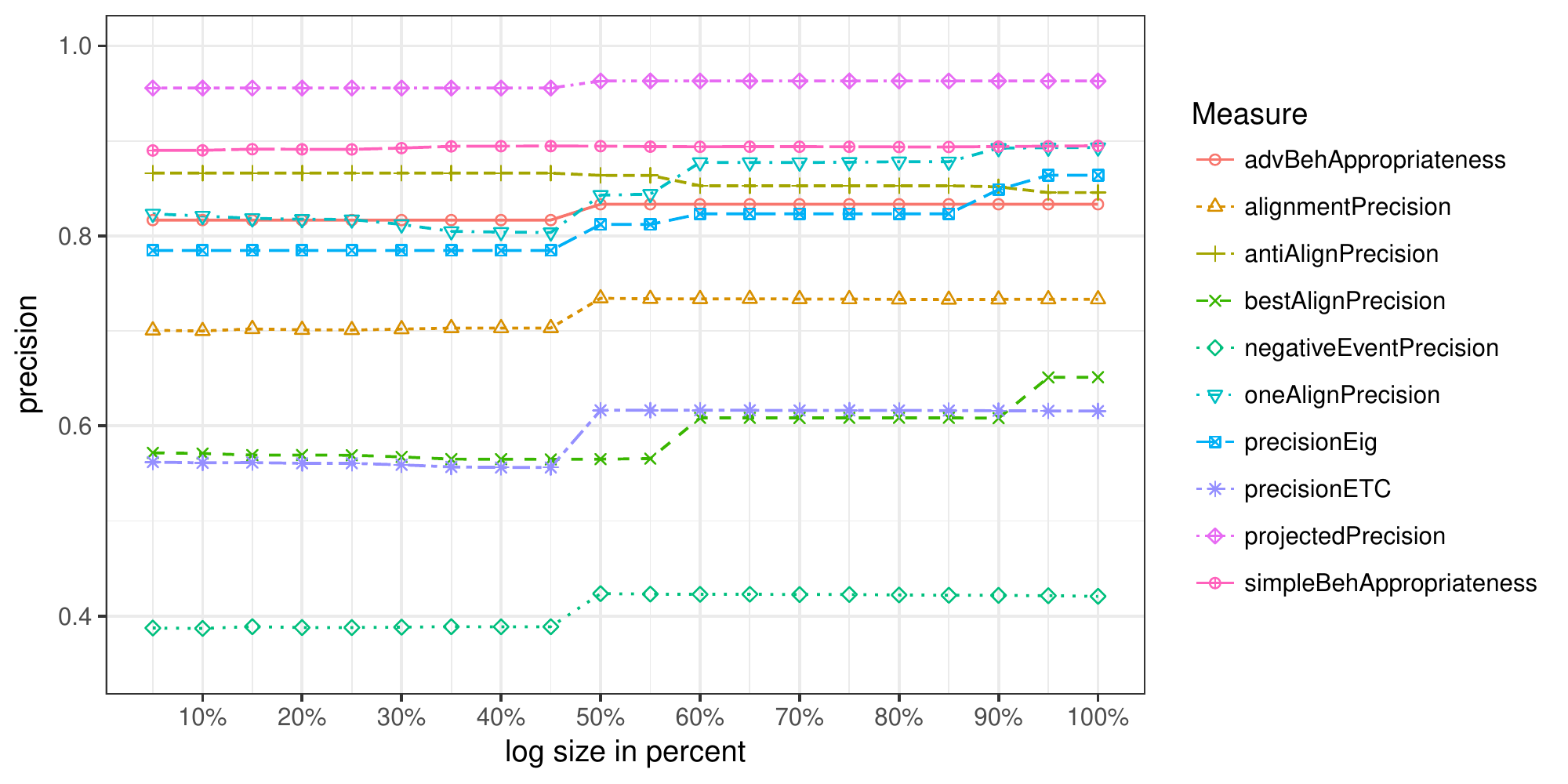}
		\vspace{-3mm}
		\caption{Expected increase in precision with more traces in the log of the main process (A) in \cite{bpic2012Data}.}
		\label{fig:precision_percent}
	\end{center}
	\vspace{-4mm}
\end{figure}

In the third experimental setup, we use the real-world log of the BPI Challenge 2012~\cite{bpic2012Data}. 
We discover a specification $M$ that can replay all the traces in the log using Inductive Miner~\cite{Leemans2016} with the infrequent and noise threshold parameters set to $0$. 
Then, we select five percent of random traces from the log to obtain sub-log $L_{5\%}$ and compute the precision of the specification w.r.t. the sub-log. 
We repeat this process for other sub-logs, each obtained by adding an additional five percent of random traces from the original log, such that $L_{5\%} \subset L_{10\%} \subset \ldots \subset L_{100\%}$; $L_{x\%}$ is a sub-log that contains $x$ percent of traces from the BPI Challenge 2012 log.
Because the specification fits the original log perfectly, it holds that $M$ describes all the traces in all the studied sub-logs. 
The measured precision values are reported in \autoref{fig:precision_percent}.
Note that when we increase the number of traces in the sub-log, new behaviour is not necessarily added. 
At each step, we can potentially end up adding only traces that the previous sub-log already contains.
To make the results accessible, we also created the plot shown in \autoref{fig:precision_percent_diff}. 
It depicts the differences between two consecutive precision values, \eg if the precision value increased by 0.1 when adding five percent of traces, we add a mark at 0.1. 
For a monotonically increasing measure, one should observe only non-negative differences; negative differences are emphasized with red triangles in the figure.

Only three precision measures demonstrate monotonicity for the third experimental setup.
These measures are advanced behavioural appropriateness~\cite{RozinatA08IS}, PCC precision~\cite{Leemans2016}, and our eigenvalue-based precision. 
Some negative values are due to the non-deterministic nature of corresponding precision measures, as discussed in~\cite{TaxLSFA17}. 
Also, there is a systematic error in the anti-alignments precision values~\cite{DongenCC16} that show an unexpected downward trend, despite the fact that the specification is fixed and the number of traces in the sub-logs increases in this experimental setup.
Note that the eigenvalue-based precision measure, as guaranteed by its properties, successfully recognises all the four changes in the behaviours encoded in the sub-logs.

To conclude, for each of the precision measures listed in \autoref{tab:approaches:labelsandrefs:precision} except the eigenvalue-based precision, we were able to construct at least one example that violates the monotonicity property captured in Lemmata~\ref{lem:fixed:num} and~\ref{lem:fixed:den}.

\begin{figure}[t]
  \vspace{-1mm}
	\begin{center}
		\includegraphics[width=.85\columnwidth]{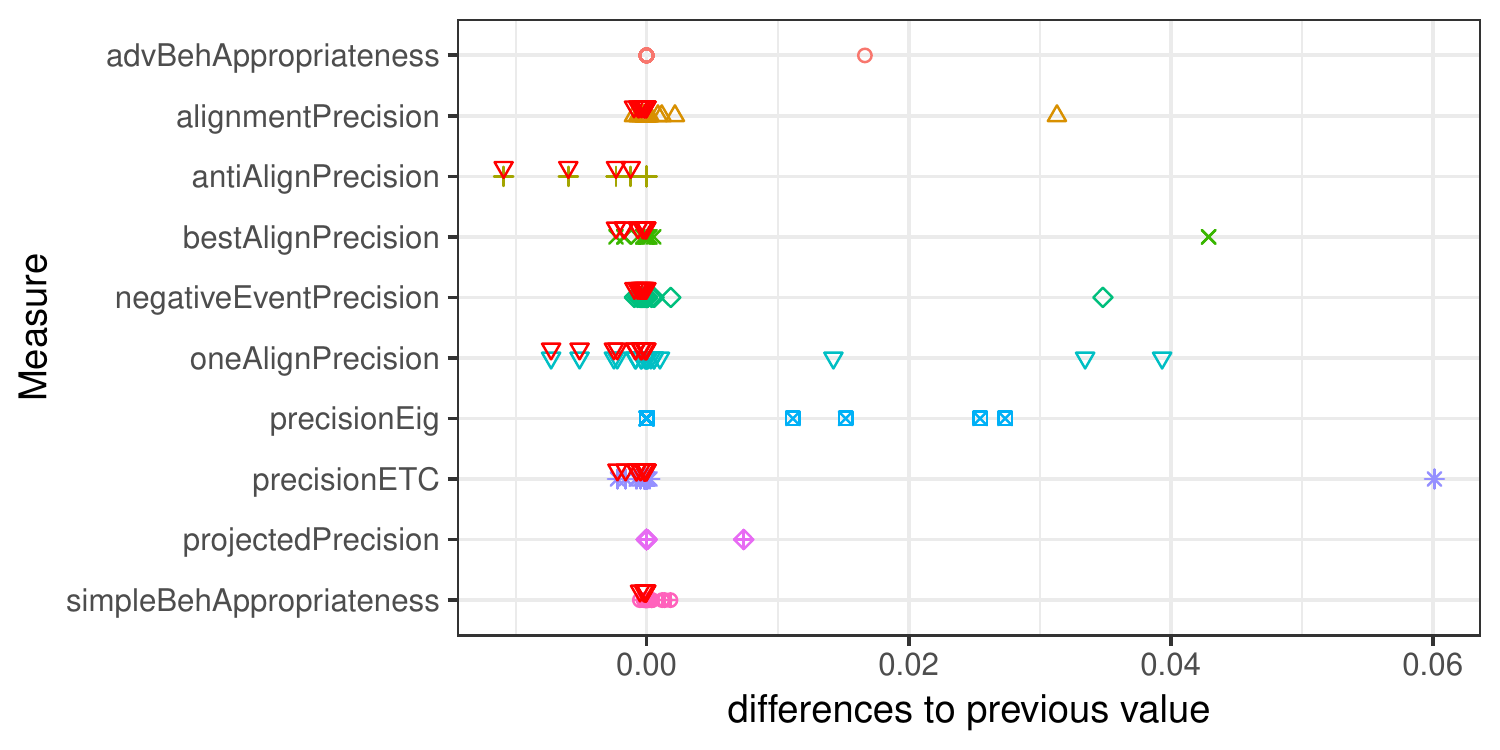}
		\vspace{-3mm}
		\caption{Each dot represents the relative increase or decrease in \autoref{fig:precision_percent} at each subsequent measurement step as the size of the log increases; red triangles encode relative decrease.}
		\label{fig:precision_percent_diff}
	\end{center}
	\vspace{-4mm}
\end{figure}

\subsubsection{Monotonicity of Recall Measures}

The recall of a specification w.r.t.\ a log is defined as the fraction of a measurement of the shared behaviour by a measurement of the behaviour in the log. 
In this case, both measurements capture finite behaviour, which makes the problem of computing the fraction less challenging than in the case of measuring precision.

\begin{figure}[b]
\vspace{-4mm}
	\begin{center}
		\includegraphics[width=.85\columnwidth]{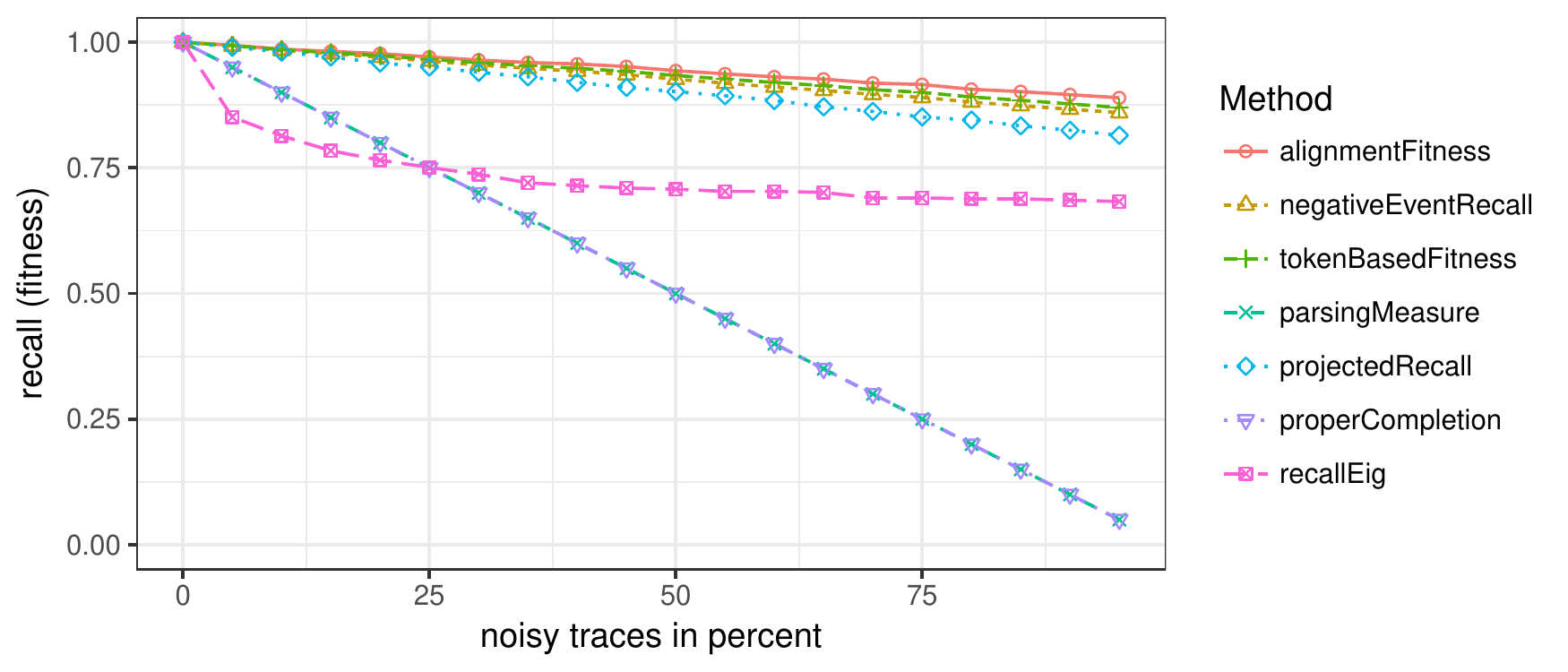}
		\vspace{-3mm}
		\caption{Recall measures for a sequential specification and increasing amount of noise in a fitting log.}
		\label{fig:recall_noise}
	\end{center}
	\vspace{-1mm}
\end{figure}

\autoref{fig:recall_noise} plots recall values for the measures listed in \autoref{tab:approaches:labelsandrefs:recall}.
The values were obtained in the following experimental setup. 
Given a sequential specification of ten activities and a fitting log with no noise, we start increasing the amount of noisy traces in the log. 
Here, noise is defined as removing, adding, or swapping events in the log, and the percentage shown on the x-axis reflects the relative number of traces affected by noise. 
Continued parsing measure~\cite{Weijters2006} and proper completion~\cite{RozinatA08IS} simply count the fraction of traces that are entirely fitting.
For example, continued parsing measure is based on a binary decision for each trace. 
Hence, small deviations between a trace and execution have the same impact on the measured value as significant differences.
In contrast, alignment-based fitness~\cite{AalstAD12WIDM}, AGNEs recall~\cite{Goedertier.etal/JoMLR2009:ProcessDiscoveryArtificialNegativeEvents}, and token-based fitness~\cite{RozinatA08IS} are sensitive to small discrepancies in traces in the log and executions of the specification, \ie they penalize minor deviations between traces and executions only slightly. 

The eigenvalue-based measures depend on the ``sizes'' of the languages of the compared log and specification. 
The above experimental setup shows that the behaviour in the log changes significantly with the insertion of first noisy traces, when the noise level is low. 
This leads to a rapid drop in recall as, indeed, the specification fails to capture the fresh behaviour, but only captures its deterministic sequential part. 
The increase in the number of noisy traces does not change the behaviour of the log at higher noise levels that much, as the probability that a new noisy trace has already been seen increases with the number of noisy traces.
In contrast, the other measures show a linear trend, as they do not take into account the \emph{size of the behaviour}, but ``count'' the \emph{number of fitting traces} w.r.t. the size of the log. 
As a consequence, traditional approaches treat the two cases listed in \autoref{tab:recall} equivalently, while our measure judges the recall for the situation described in the first row lower than that for the situation described in the second row, as the variance in the log is lower even though it has the same number of deviating traces.

\begin{table}[h]
\vspace{-1mm}
\centering
\caption{Precision and recall for specification that describes one execution $\sequence{\texttt{a},\texttt{b},\texttt{c}}$ and two logs $L_{\mathit{abc}(d|e)?}$ and $L_{\mathit{abc}(d)?}$. 
Log $L_{\mathit{abc}(d|e)?}$ consists of three traces: $\sequence{\texttt{a},\texttt{b},\texttt{c}}$, $\sequence{\texttt{a},\texttt{b},\texttt{c}, \texttt{d}}$, and $\sequence{\texttt{a},\texttt{b},\texttt{c}, \texttt{e}}$.
Log $L_{\mathit{abc}(d)?}$ consists of five traces: three occurrences of trace $\sequence{\texttt{a},\texttt{b},\texttt{c}}$ and two occurrences of trace $\sequence{\texttt{a},\texttt{b},\texttt{c},\texttt{d}}$.
}
	\vspace{-2mm}
	\label{tab:recall}
	\begin{tabular}{llll}
	\toprule
	Specification & Log & Precision & Recall\\
	\midrule
	$S_{\mathit{abc}}$ & $L_{\mathit{abc}(d|e)?}$ & 1.0 & 0.789 \\
	$S_{\mathit{abc}}$ & $L_{\mathit{abc}(d)?}$ & 1.0 & 0.856 \\
	\bottomrule
	\end{tabular}
	\vspace{-2mm}
\end{table}

While the amount of noisy traces increases linearly in this experiment, we are interested in the behaviour that is in both specification and log versus the behaviour in the log only. 
Our eigenvalue-based recall captures this non-linearity in the behaviour of the log.
Thus, we conclude that if one is interested in the measure of how much behaviour of a log is captured in a specification, our measure is more suitable.
However, if one is interested only in the fitting part of the log and does not need to distinguish between different deviations, the traditional fitness/recall measures are preferable. 
Latter linearly capture a decreasing number of fitting traces w.r.t. a given specification.

\subsection{Comparing Specifications: Coverage}
\label{sec:comparing:specifications}

The language coverage measure for two systems $\mathcal{S}_x$ and $\mathcal{S}_y$ was introduced in \autoref{sec:back} using the language cardinality measure.
To overcome the problem of measuring infinite languages of systems, according to the framework presented in \autoref{sec:framework_definition}, we instantiate the language coverage quotient with the short-circuit measure induced by the eigenvalue measure, \ie $\mathit{eig}^\bullet$, as follows:
$$
\funcCall{\mathit{coverage}_{\mathit{eig}^\bullet}}{\mathcal{S}_x,\mathcal{S}_y} := 
\frac{\funcCall{{\mathit{eig}^\bullet}}{\lang{\mathcal{S}_x} \,\cap\, \lang{\mathcal{S}_y}}}{\funcCall{{\mathit{eig}^\bullet}}{\lang{\mathcal{S}_x}}}.
$$

We demonstrate the use of $\smash{\mathit{coverage}_{\mathit{eig}^\bullet}}$ for measuring how well the behaviour of one software system \emph{covers} the behaviour of some other software system using the following experiment.
Given a specification of a software system $\mathcal{S}$, we simulate a collection of its executions, \ie a log, $\mathcal{L}$.
Next, we discover a specification $\mathcal{D}$ from $\mathcal{L}$.
Finally, we compute
$\smash{\mathit{coverage}_{\mathit{eig}^\bullet}(\mathcal{S},\mathcal{D})}$ and $\smash{\mathit{coverage}_{\mathit{eig}^\bullet}(\mathcal{D},\mathcal{S})}$.

\begin{figure*}[h]
	\vspace{-3mm}
	\centering
	\subfloat[Socket API~\cite{Ammons2002}]{
		\includegraphics[scale=.92, trim=0 -7mm 0 0] {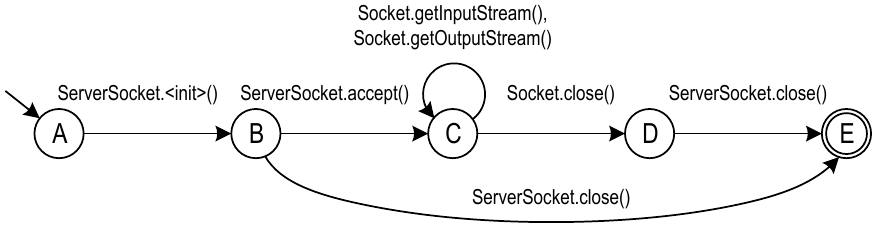}
		\label{fig:two:software:systems:5}
	}
	\hspace{2mm}
	\subfloat[Hibernate~\cite{Gabel.Su/FSE2008:Javert}]{
		\includegraphics[scale=.92] {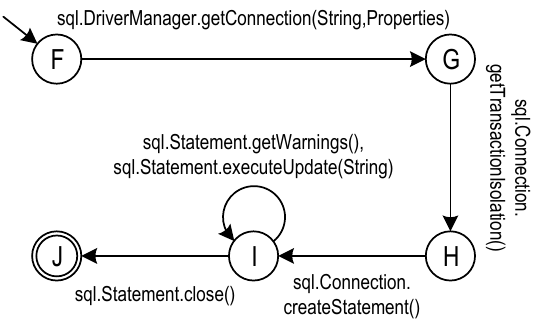}
		\label{fig:two:software:systems:13a}
	}
	\vspace{-3mm}
	\caption{Two specifications of software systems captured as DFAs.}
	\label{fig:two:software:systems}
	\vspace{-2mm}
\end{figure*}

\autoref{fig:comparing:specifications:5} plots language coverage values between the specification in \autoref{fig:two:software:systems:5} and the discovered specifications from various collections of its simulated traces.
Similarly, \autoref{fig:comparing:specifications:13a} plots language coverage values between the specification in \autoref{fig:two:software:systems:13a} and the corresponding discovered specifications.
Both specifications in \autoref{fig:two:software:systems} were used in~\cite{Gabel.Su/FSE2008:Javert} in the context of evaluating an algorithm for automatically discovering specifications from executions of software systems.
In particular, \autoref{fig:two:software:systems:5} captures the Socket API reproduced from~\cite{Ammons2002}, while \autoref{fig:two:software:systems:13a} describes a part of Hibernate functionality (see~\cite{Gabel.Su/FSE2008:Javert} for details).
The specifications were discovered using Inductive Miner~\cite{Leemans2016} with the infrequent and noise threshold parameters set to $0.2$.

In both plots in \autoref{fig:comparing:specifications}, each blue circle encodes $\smash{\funcCall{\mathit{coverage}_{\mathit{eig}^\bullet}}{\mathcal{S},\mathcal{D}}}$ for the corresponding specification $\mathcal{S}$ from \autoref{fig:two:software:systems} and some specification $\mathcal{D}$ automatically discovered from a log whose size (measured as the number of, not necessarily distinct, traces) is reflected on the x-axis; note that the x-axis uses a logarithmic scale.
Similarly, red diamonds encode the corresponding $\smash{\funcCall{\mathit{coverage}_{\mathit{eig}^\bullet}}{\mathcal{D},\mathcal{S}}}$ values.
Note that the simulated logs are in the subset relation, \ie each log $\mathcal{L}'$ is strictly contained in every log $\mathcal{L}''$ that has more traces than $\mathcal{L}'$.

\begin{figure*}[h]
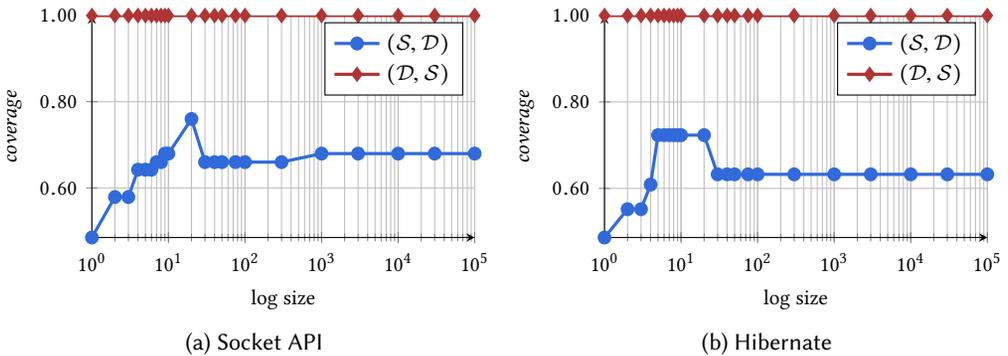

	\vspace{-3mm}
	\centering
	\subfloat[Socket API]
	{\plotdata{$\mathit{coverage}$}{mybluecolor}{myredcolor}
		{$(\mathcal{S},\mathcal{D})$}{$(\mathcal{D},\mathcal{S})$}{1}{2}{north east}
		\label{fig:comparing:specifications:5}}
	\subfloat[Hibernate]{\plotdata{$\mathit{coverage}$}{mybluecolor}{myredcolor}{$(\mathcal{S},\mathcal{D})$}{$(\mathcal{D},\mathcal{S})$}{5}{6}{north east}
		\label{fig:comparing:specifications:13a}}
	\vspace{-3mm}
	\caption{Language coverage between designed and discovered specifications of software systems.}
	\label{fig:comparing:specifications}
	\vspace{-2mm}
\end{figure*}

The fact that all the measured $\smash{\funcCall{\mathit{coverage}_{\mathit{eig}^\bullet}}{\mathcal{D},\mathcal{S}}}$ values are equal to 1.0 (see the red diamonds in \autoref{fig:comparing:specifications}), tells that the behaviours of all the discovered specifications are sub-behaviours of the corresponding specifications from \autoref{fig:two:software:systems}.
In general, one should expect that specifications discovered from more traces of a system should cover the behaviour of the system better, \ie the values denoted by blue circles in \autoref{fig:comparing:specifications} should increase with the increase in the size of the log. Any other trend suggests that the discovery algorithm ignores, or even loses, knowledge about the behaviour of the system with more observed traces available.

\subsection{Scalability}
\label{sec:scalability:evaluation}

Next, we study the scalability of the eigenvalue-based precision and recall measures using real-world and synthetic datasets.

\subsubsection{Scalability Evaluation on Real-World Data}

Practical language measures and quotients must be able to handle large languages. 
Hence, we measured the wall-clock time of the eigenvalue-based precision and recall computation for fifteen real-world logs and corresponding discovered specifications. 
The logs encode executions of real-world IT systems executing genuine business processes with real customers.
The logs are publicly available\footnote{Logs are available at: \url{https://data.4tu.nl/repository/collection:event_logs_real}} and of different complexities. 
The log with the least variation in traces is the BPI Challenge (BPIC) 2013 (open cases) log.
It can be encoded in a finite acyclic automaton with only 116 states. 
The BPIC 2017 log, on the other hand, translates to an automaton with 105\,387 states.

\begin{table*}[h]
\vspace{-1mm}
\begin{center}
\caption{Measurements on Ryzen 5 2600X with 64GB of RAM.}
\vspace{-2mm}
\label{tab:evaluation:scalability}
\begin{scriptsize}
	\begin{tabular}{@{}lrrrrrrrrrrrr@{}}
\toprule
&\multicolumn{3}{c}{Automaton size (\# states)}&\multicolumn{3}{c}{Largest eigenvalue}&Recall&Precision&\multicolumn{4}{c}{Wallclock-time (minutes)} \\
Log name&L&M&L $\cap$ M &L&M&L $\cap$ M& & & L&M&L $\cap$ M&total \\
\midrule
\textsf{BPIC'12}						&27\,943	&4		&27\,943	&1.40 & 22.00		& 1.40 		&1.000&0.063&5.69&0.00&5.69&11.38 \\
\textsf{BPIC'13-closed}			&280			&9		&57				&2.09 & 2.70		& 1.85 		&0.837&0.685&0.00&0.00&0.00&0.00 \\
\textsf{BPIC'13-incidents}	&4\,426		&4		&67				&2.20 & 3.19		& 1.78 		&0.731&0.558&0.23&0.00&0.02&0.25 \\
\textsf{BPIC'13-open}				&116			&8		&5				&2.71 & 2.08		& 1.75 		&0.559&0.840&0.00&0.00&0.00&0.00 \\
\textsf{BPIC'15-1}					&33\,090	&25		&12\,815	&1.62 & 390.24	& 1.45 		&0.771&0.004&8.65&0.00&1.08&9.72 \\
\textsf{BPIC'15-2}					&32\,060	&26		&25\,863	&1.64 & 389.67	& 1.63 		&0.983&0.004&4.45&0.00&3.66&8.11 \\
\textsf{BPIC'15-3}					&33\,353	&16		&33\,200	&1.68 & 374.26	& 1.68 		&1.000&0.004&0.86&0.00&0.87&1.73 \\
\textsf{BPIC'15-4}					&27\,566  &28   &27\,466  &1.71 & 322.20  & 1.71    &1.000&0.005&1.30&0.01&1.56&2.87 \\
\textsf{BPIC'15-5}					&36\,221  &14   &26\,609  &1.37 & 369.17  & 1.36    &0.996&0.004&3.00&0.00&0.85&3.85 \\
\textsf{BPIC'17}						&105\,387	&6		&105\,387	&1.39 & 22.45		& 1.39 		&1.000&0.062&80.71&0.00&80.71&161.43 \\
\textsf{WABO-1}							&23\,416	&17		&10\,585	&1.63 & 367.98	& 1.51 		&0.844&0.004&0.85&0.00&1.00&1.85 \\
\textsf{WABO-2}							&23\,930	&14		&23\,312	&1.49 & 369.26	& 1.39 		&0.820&0.004&0.87&0.00&0.65&1.52 \\
\textsf{WABO-3}							&23\,519	&35		&23\,519	&1.61 & 361.57	& 1.57 		&0.954&0.004&0.60&0.00&0.42&1.03 \\
\textsf{WABO-4}							&19\,984	&46		&19\,984	&1.54 & 297.59	& 1.54 		&1.000&0.005&2.97&0.00&2.97&5.93 \\
\textsf{WABO-5}							&25\,060  &4    &24\,981  &1.37 & 346.03  & 1.37    &1.000&0.004&0.43&0.00&1.43&1.86 \\
\bottomrule
\end{tabular}

\end{scriptsize}	
\end{center}
\vspace{-1mm}
\end{table*}

For each log, we discovered a specification using Inductive Miner~\cite{Leemans2016} configured with the default noise threshold of $0.2$. 
For each log and corresponding discovered specification, we applied our method by first constructing the respective finite automata and computing the eigenvalues of their short-circuited representations. 
The observed wall-clock times of the computations of the largest eigenvalues for the log $\textit{L}$, the specification $M$, and their intersection automaton $L \cap M$ are shown in \autoref{tab:evaluation:scalability}. 
As an indicator of the complexity, the number of states of the respective automata are listed in the table. 
Note that the specification automata are considerably smaller than the corresponding log automata.
Presumably, this is because the employed discovery algorithm constructs specifications that do not contain duplicate actions. 
The adjacency matrix of an automaton has size that is quadratic in the number of states in the automaton, which can pose practical difficulties when storing it on a computer.
However, adjacency matrices are usually sparse, which allowed us to use their memory-efficient representations.

The variance in measured wall-clock times is notable. 
The longest time to compute the eigenvalue-based precision and recall was taken for the BPIC 2017 log, whereas for most of the experimented logs, both precision and recall values were computed under ten minutes and often much faster.
The technique used for computing largest eigenvalues is called ``implicitly restarted Arnoldi iterations''~\cite{lehoucq1998arpack}.
Note that this numerical method for computing a largest eigenvalue of a general matrix always converges, but provides no guarantees of convergence in a fixed number of iterations. 
Thus, in our implementation, for practical reasons, we use the threshold of 300\,000 iterations for the maximum number of iterations.
For all the experimented logs, this threshold was sufficient to ensure the convergence of the computations.

\subsubsection{Scalability Evaluation on Synthetic Data}

In the evaluation on the real-world data, the characteristics of logs in terms of the variety and number of traces differed a lot. 
To perform a consistent analysis, next, we report on the values and computation times of the eigenvalue-based quotients for the simulated logs and specifications, both designed and discovered, discussed in \autoref{sec:comparing:specifications}.

\begin{figure*}[t]
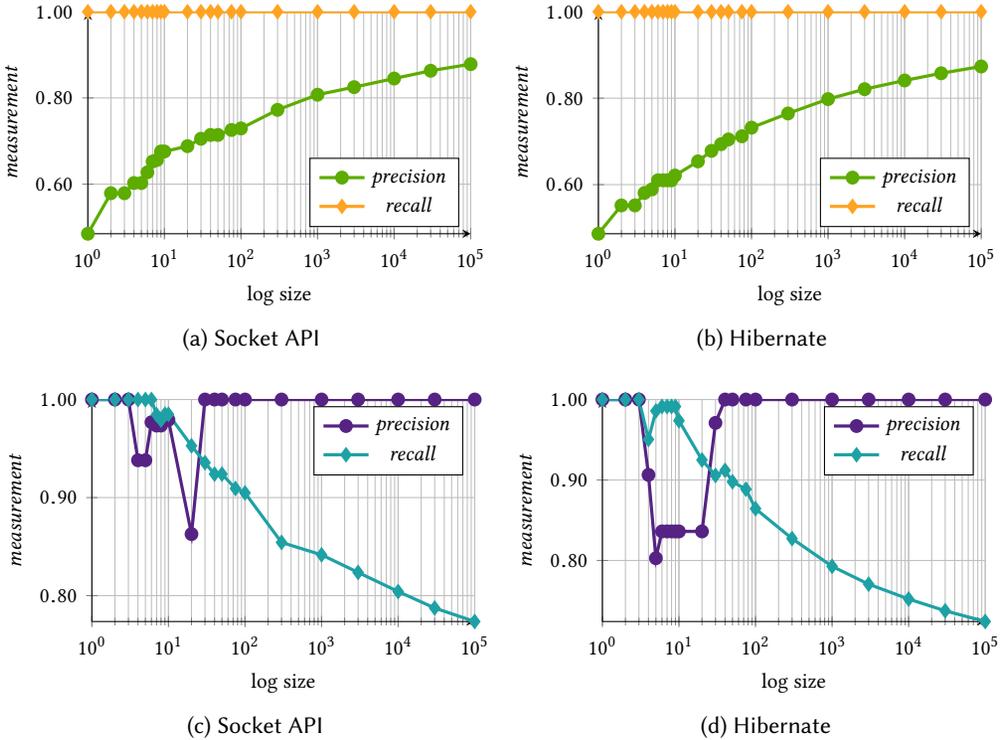

\vspace{-3mm}
\centering
\subfloat[Socket API]{\plotdata{$\mathit{measurement}$}{mygreencolor}{myyellowcolor}{$\mathit{precision}$}{$\mathit{recall}$}{7}{8}{south east}
\label{fig:comparing:model:and:log:5}}
\subfloat[Hibernate]{\plotdata{$\mathit{measurement}$}{mygreencolor}{myyellowcolor}{$\mathit{precision}$}{$\mathit{recall}$}{11}{12}{south east}
\label{fig:comparing:model:and:log:13a}}
\newline
\subfloat[Socket API]{\plotdata{$\mathit{measurement}$}{mypurplecolor}{mytealcolor}{$\mathit{precision}$}{$\mathit{recall}$}{13}{14}{north east}
\label{fig:comparing:dmodels:and:log:5}}
\subfloat[Hibernate]{\plotdata{$\mathit{measurement}$}{mypurplecolor}{mytealcolor}{$\mathit{precision}$}{$\mathit{recall}$}{17}{18}{north east}
\label{fig:comparing:dmodels:and:log:13a}}
\vspace{-3mm}
\caption{Eigenvalue-based precision and recall of: designed specifications w.r.t. their executions, (a) and (b), and discovered specifications w.r.t. the executions they were discovered from, (c) and (d).}
\label{fig:comparing:model:and:log}
\vspace{-2mm}
\end{figure*}

\autoref{fig:comparing:model:and:log} plots the measured eigenvalue-based precision and recall values.
\autoref{fig:comparing:model:and:log:5} and \autoref{fig:comparing:model:and:log:13a} show the values for, respectively, the Socket API and Hibernate specification w.r.t. the simulated logs.
Because all the logs are composed of executions of the specifications, all the recall values are equal to 1.0.
As can be seen from the plots, with the increase of the number of traces in logs, the precision values increase, which is consistent with the fact that the eigenvalue-based precision is monotone.
\autoref{fig:comparing:dmodels:and:log:5} and \autoref{fig:comparing:dmodels:and:log:13a} show precision and recall values for the corresponding discovered specifications w.r.t. the simulated logs.
As can be observed from the plots, the recall values tend to decrease with the increase of the log size.
The fact that precision values tend to be 1.0 suggests that this particular configuration of the discovery technique constructs specifications that do not generalize beyond the behaviour seen in the logs.

\begin{figure*}[h]
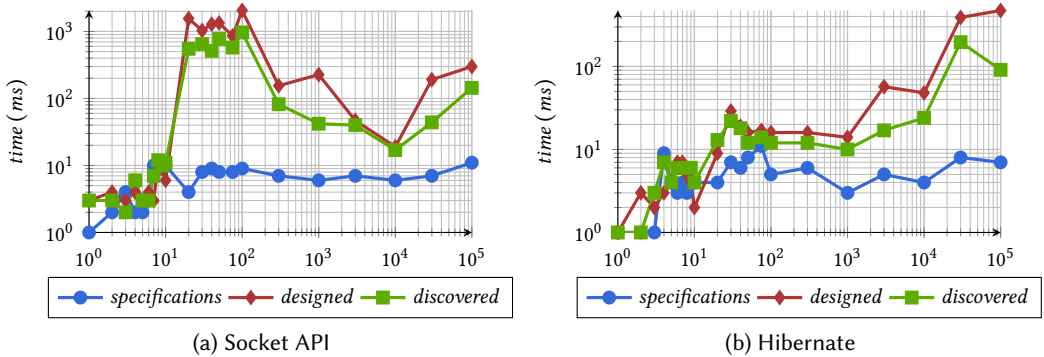

\vspace{-3mm}
\centering
\subfloat[Socket 
API]{\plottime{mybluecolor}{myredcolor}{mygreencolor}{$\mathit{specifications}$}
	{$\mathit{designed}$}{$\mathit{discovered}$}{1}{2}{3}
\label{fig:time:5}}
\subfloat[Hibernate]{\plottime{mybluecolor}{myredcolor}{mygreencolor}{$\mathit{specifications}$}{$\mathit{designed}$}{$\mathit{discovered}$}{7}{8}{9}
\label{fig:time:11}}
\vspace{-3mm}
\caption{Wallclock-time of computing the eigenvalue-based quotients on Ryzen 5 2600X with 64GB of RAM.}
\label{fig:time}
\vspace{-2mm}
\end{figure*}

\autoref{fig:time} plots times of computing the eigenvalue-based quotients reported in \autoref{fig:comparing:specifications} and \autoref{fig:comparing:model:and:log}; note the use of logarithmic scales for both axis.
Each plotted value reports the overall time of computing two quotients for the corresponding event log shown on the x-axis.
The blue circles denote the times of computing the coverage quotients for \emph{specifications} from \autoref{fig:comparing:specifications}.
The red diamonds show the times of computing precision and recall for the \emph{designed} specifications reported in \autoref{fig:comparing:model:and:log:5} and \autoref{fig:comparing:model:and:log:13a}.
Finally, the green squares report the times of computing precision and recall for the \emph{discovered} specifications reported in \autoref{fig:comparing:dmodels:and:log:5} and \autoref{fig:comparing:dmodels:and:log:13a}.

The experimental results, both on the real-world and synthetic datasets, tell us that the size of the input is not enough to determine the runtime of the method.
Instead, the rate of convergence of a largest eigenvalue computation depends on other properties of the adjacency matrices of the underlying automata, e.g., on the difference between the largest and the second-largest eigenvalue.
We conclude the scalability experiments with the insight that our current implementation of the method shows variability in performance depending on the convergence of the underlying eigenvalue computation.

\section{Related Work}
\label{sec:related_work}

The comparison of behaviours has played a major role in the verification of software and hardware artefacts across several areas of computer science and software engineering, including the theory of concurrent systems
\cite{GorrieriVersari/2015:IntroductiontoConcurrencyTheory-TSandCCS},
reactive systems \cite{Clarke.etal01:ModelCheckingBook},
and agent programming 
\cite{DeGiacomo.etal/AI2016:Agentplanningprograms},
to mention but a few.
\autoref{sec:related_work:behavioural_equivalence} outlines noticeable
notions of behavioural equivalence and behavioural comparison, including inheritance and similarity.
Then, \autoref{sec:related_work:process_mining} describes the evolution of the
precision and recall measures for behavioural comparison in the field of
process mining, along with highlights on commonalities and dissimilarities to our
approach.
Finally, \autoref{sec:related_work:software_engineering} reports on previous research on behavioural comparison in software engineering, again emphasising the similarities and differences with our technique.

\subsection{Behavioural Equivalence}\label{sec:related_work:behavioural_equivalence}
In the context of dynamic systems, there are several notions of behavioural equivalence, which are broadly classified into two categories: equivalences that are based on the interleaving semantics and those based on the true concurrency semantics~\cite{Glabbeek.Goltz/AI2001:RefinementofActionsandEquivalenceNotionsforConcurrentSystems}.
We remark that the systems under analysis in this paper fall under the class of
finite-state, assume the presence of final/accepting states, and operate with
interleaving semantics.
Probably the most important behavioural equivalence between two systems of
computation in this context is the one that guarantees that any step performed
in one system can be mimicked by the other one, and vice
versa~\cite{Demri/2016:TemporalLogicsinCSFiniteStateSystems}. This idea is the
basis for the notion of
\emph{bisimulation}
~\cite{Milner/1982:ACalculusofCommunicatingSystems}.
On rooted labelled transition systems (a super-class of the systems we
analyse), bisimulation imposes that from the initial state onward, possible
actions must coincide between the systems and inductively lead to states that
are bisimilar as well.
\emph{Weak bisimulation}~\cite{Milner/1982:ACalculusofCommunicatingSystems}
relaxes bisimulation in that it considers only observable actions, i.e., it is
permitted that systems guarantee bisimulation on non-$\tau$ transitions only,
as $\tau$ transitions can be added as prefix- or suffix-moves to that extent.
\emph{Branching bisimulation} enforces weak bisimulation by requiring that the same set of choices is offered before and after each unobservable action~\cite{Glabbeek/CONCUR1993:TheLinearTimeBranchingTimeSpectrumII}.

Bisimulation exerts less strict conditions than graph isomorphism, which is a
bijection between all states preserving transitions. However, it is also more
specific than \emph{trace equivalence}, solely ascertaining that observable
actions match, thus being insensitive to non-determinism, internal actions,
choices, and
deadlocks~\cite{GorrieriVersari/2015:IntroductiontoConcurrencyTheory-TSandCCS}.
\emph{Completed} trace equivalence adds the condition that, if systems have
sink states from which no further action is possible, they must be reachable in
systems by replaying the same traces.
Our research benefits from the multiple notions of behavioural equivalence and
investigations conducted on the matter so far, yet it abstracts from the
decision problem on the matching of behaviours and rather aims at assessing
\emph{how much} the behaviour of a first system is extended by a second one.

\citet{KunzeWeske/2016:BehaviouralModelsforBusinessProcesses} declare not only behavioural equivalence, but also behavioural similarity and inheritance, as main challenges pertaining to behavioural comparison.
In particular, the authors introduce a property for the latter, namely
\emph{trace inheritance}, which is satisfied only if the language of a system is
included in the language of another system at the same level of abstraction.
In the light of that definition, our research thus focuses on behavioural
inheritance~\cite{Basten/1998:SystemDesignWithPetriNetsAndProcessAlgebra}, and
specifically trace inheritance, between dynamic systems. However, we aim at
providing a measure assessing in how far languages extend one another, rather
than checking whether the property holds true or not.
This quantitative aspect typically pertains more to behavioural similarity.
To measure it, applying na\"ive approaches based on set-similarity measures
such as the Jaccard coefficient~\cite{Deza.Deza/2006:DictionaryofDistances} to
the set of systems' traces proves infeasible: Loops lead to trace sets of
infinite cardinality.

To overcome that problem, approaches to behaviour similarity were introduced that restricted the analysis to local relations between traces' events~\cite{Kunze/PhD2013:Searchingbusinessprocessmodelsbyexample}.
Noticeable examples include the $n$-gram
similarity~\cite{Mahleko.etal/EEE2005:ProcessAnnotatedServiceDiscovery-n-gram-basedindex},
 comparing systems by the shared allowed $n$-long sub-sequences in systems'
respective traces. Despite the efficiency of the solution, the issue is that even if $n$-grams coincide, not necessarily do the traces as
well. Nevertheless, the best results are reportedly achieved with the least
strict parameter, namely $n=2$. Later on, behavioural profiles similarity was
introduced in~\cite{DBLP:journals/sosym/0001WW15}.
The idea is to compare ``footprints'' of systems, obtained by matrices
connecting pairs of event labels with mutually exclusive relations. Those
relations are exclusiveness, strict order, and interleaving order, i.e., the
fundamental relations of behavioural profiles as of~\cite{WeidlichMW11}.
Despite being semantically rich, 
\citet{PolyvyanyyADG16} show that the expressive power of behavioural profiles is strictly less than regular languages, thus entailing that they cannot be used to decide trace equivalence of finite state automata.
Our approach abstracts from the local perspective on traces or relations between events in that it resorts on the topological entropy to compare the variability of languages.
We reflect the comparison of dynamic systems into precision and recall.

\subsection{Precision and Recall in Process Mining}\label{sec:related_work:process_mining}
Process mining aims at extracting knowledge about
processes from the digital data stored by organisations' IT systems~\cite{Dumas.etal/2018:FundamentalsofBPM}.
Process mining is adopted to discover new facts, including process
specifications themselves that were not documented before, compare the expected
process behaviour with reported reality and detect deviations between the
former and the latter~\cite{Aalst16}.
It shows thus the inherent aim of finding and assessing the match between the
behaviours of a dynamic system, in terms of to-be process specifications versus
as-is
process data. Therefore, the identification of quotients that allow for a
comparative measurement of behaviours naturally suits the matter.
In particular, 
\citet{Buijs.etal/IJCIS2014:QualityDimensionsinProcessDiscovery} identify (replay) fitness, precision, generalisation, and simplicity as
the four main quality dimensions for assessing the quality of process mining results~\cite{JanssenswillenDJD17}.

A first precision measure called ``behavioural appropriateness'' is introduced
in the seminal work of \citet{RozinatA08IS} as the degree of how much behaviour is permitted by the
specification although not recorded in the log.
The \emph{simple behavioural appropriateness} builds on the observation that an increase of alternatives or parallelism entails a higher number of enabled transitions during log replay, while the \emph{advanced behavioural appropriateness} uses long-distance precedence dependencies between pairs of activities. 
In this way, it is higher when sometimes-forward and sometimes-backward
relation pairs shared between specification and log approximate the total
amount of the specification. Conversely, it is lower if the specification
allows for more variability. The assumption of total fitness of the log entails
that the log cannot show more variability than the specification.
Our approach also compares the availability of actions at given states, but abstracts from the exact replay of traces by considering the entropy of the languages.

The ETConformance approach avoids the complete exploration of the specification
behaviour by traversal of the specification to solely reflect the traces
recorded in the log~\cite{Munoz-Gama.Carmona/CIDM2011:ETConformanceExt:StabilityConfidenceSeverity}.
To that extent, a finite (acyclic) rooted deterministic labelled transition system named \emph{prefix automaton} is generated by folding traces based on prefix trace-equivalence of the generated states. The assumption of total fitness entails that the set of available transitions contains the ones permitted by the prefix automaton.
The locality of the approach allows for efficient computation, with the downside that only behaviour close to the log is considered.
Similarly, our approach assesses precision by quantifying the behavioural differences among states of a finite-state rooted labelled transitions system.
However, it abstracts from the recorded runs of the involved specifications.
Remarkably, \citet{Munoz-Gama.Carmona/CIDM2011:ETConformanceExt:StabilityConfidenceSeverity} also introduce advanced diagnostic measures to assess the severity of imprecisions and their stability factor with respect to small perturbations in the log.

An approach combining the concept of prefix automaton with the one of \emph{alignments}~\cite{AdriansyahDA11EDOC} is proposed by \citet{AalstAD12WIDM} to deal with non-entirely fitting logs.
The proposed \emph{alignment-based precision} is the arithmetic mean over all
events in the log of the ratio between the activities that were allowed by the
specification and the ones that were actually executed as per the prefix
automaton, given the replay history.
\citet{Adriansyah.etal/ISEM2015:OneAndBestAlignPrecision} propose different precision measures based on the nature of the alignments to be considered. The underlying structure remains a prefix automaton 
as in \cite{Munoz-Gama.Carmona/CIDM2011:ETConformanceExt:StabilityConfidenceSeverity}, here augmented by associating weights to states.
As in the approaches of
\cite{AalstAD12WIDM}~and~\cite{Adriansyah.etal/ISEM2015:OneAndBestAlignPrecision},
 the precision measure proposed in this paper does not take into account
diverging behaviours. To that extent, the log repair given by alignments could
be beneficial to a pre-processing phase. Because our solution resorts on the
entropy of specifications' languages, it abstracts from the replay and counting
of events.

More recently, \citet{Leemans2016} introduced precision and recall measures to
compare the behaviour of specifications or logs, requiring a finite state
automaton as
the underlying structure for a state-to-state comparison
as in~\cite{Adriansyah.etal/ISEM2015:OneAndBestAlignPrecision,Munoz-Gama.Carmona/CIDM2011:ETConformanceExt:StabilityConfidenceSeverity}.
To cope with the high computational effort required by the intersection
operations, a projection of both specifications is pre-computed for every
subset of $k$ actions in the joint alphabet. Resulting automata contained
silent transitions and presented non-determinism. The resulting \emph{Projected
Conformance Checking (PCC) precision} and a corresponding recall measure build
then on $k$-subsets projections.
As in \cite{Leemans2016}, we benefit from minimisation of the underlying structure and provide dual definitions for precision and recall. However, the computation of measures based on eigenvalues does not require the approximation via $k$-projections.

The anti-alignment based precision is defined by \citet{DongenCC16} using the
concept of anti-alignment first proposed in \cite{ChatainC16}. An
anti-alignment is a finite trace of a given length which is accepted by the
process specification, yet not in the log and sufficiently distant from any
trace therein (where the trace distance can be computed by using edit
distance~\cite{Levenshtein/SPD1966:EditDistance}, e.g.).
To assess precision, every distinct trace is removed from the log and an anti-alignment of equal length is generated with maximum distance. These are averaged. %
Likewise, we reason on language properties of analysed specifications, thus
abstracting from the number of occurrences of a trace. However, our approach
does not require the iterative scan and comparison of specifications excluding
parts of the behaviour, thus saving on computation time.

The Artificially Generated Negative Events technique (AGNEs) discovers process
specifications out of logs enriched with artificially injected negative
events~\cite{Goedertier.etal/JoMLR2009:ProcessDiscoveryArtificialNegativeEvents}.
The assumption is that the log includes the complete set of behavioural
patterns, which means that events can only be missing in a log because they are
not permitted by the process. The notion of \emph{recall} can then be defined
as the rate of true positives over all events classified as positive, and
\emph{specificity} accordingly. Before the computation, a preliminary reduction
of matching event sequences to single traces is conducted such that traces do
not add up to the overall amount.
Our definitions of precision and recall are also dual and do not depend on the number of occurrences of the same trace.
However, no artificial injection of noise is required in our approach, thus reducing the bias that the alteration of the input behaviour with negative information may cause.

To evaluate their discovery algorithm, namely the Heuristic Miner, 
\citet{Weijters2006} introduce the so-called Parsing Measure (PM), which is
based on the fraction of correctly parsed traces over all traces in the input
log. 
As a derivative, the Continued Parsing Measure (CPM) provides a more
fine-granular analysis, at the price of being bound to the specification of the
underlying Heuristic Miner.
Our notion of recall for a specification is also based on the measuring of the
part of language not covering another behaviour.
Noticeably, PM and CPM weigh the amount of incorrectly parsed traces, thus quantitatively assessing to which extent the divergences occur in the event log.
Owing to our level of abstraction, we do not account for this assessment.
However, the measure we propose is less dependent on the recorded traces and is not based on the count of events.

The fitness measure proposed by 
\citet{RozinatA08IS} counts the number of tokens consumed and produced during
the replay of traces over the Petri net specification, and puts them into
relation with
missing tokens and tokens remaining after completion. It extends a simpler
measure computed as the ratio of traces causing missing or remaining tokens
defined in the same paper and named \emph{proper completion} in
\cite{JanssenswillenDJD17}. Another token-based fitness measure, used in
genetic process mining, 
accounts for trace frequency~\cite{Medeiros.etal/2007:GeneticProcessMining}.
In contrast to \cite{RozinatA08IS,Medeiros.etal/2007:GeneticProcessMining}, we aim at defining measures that are
not tailored to specific behaviour specification language, thus we do not rely
on Petri net semantics to define recall.

The concept of alignment-based fitness introduced by \citet{AalstAD12WIDM}
relies on a cost function to be specified by the user, indicating the penalty
for non-synchronous moves in the replay of traces on the specification. Fitness
is then computed for every trace as the total cost of the optimal alignment,
divided by a worst-case alignment, indicated as the one consisting of moves in
the trace for every event, followed by moves in the specification from the
start to the end of a shortest run. Log fitness is then calculated by averaging
the trace fitness values over all traces.
Alignments are a valuable means to make the approach independent on the
specification language, as in the rationale of our investigation. Our technique
does not allow the user to indicate costs. Providing this feature in our
approach is an intriguing problem that could be addressed in future work.
On the other hand, our approach does not resort on the computationally
expensive finding of optimal runs on the input specifications.

We remark that especially the approaches described in
\cite{AalstAD12WIDM,Adriansyah.etal/ISEM2015:OneAndBestAlignPrecision,Leemans2016,RozinatA08IS}
 not only propose precision and recall measures and algorithms for their
computation, but provide also techniques to illustrate where and in how far
deviations occur between the log and the specification.
The integration of those powerful diagnostic tools with our approach delineates interesting plans for future research.

To conclude, 
\cite{TaxLSFA17} recently defined five requirements (there named \emph{axioms})
that a precision measure should guarantee, in a strive for the general
definition of fundamental properties that should be satisfied by process mining
quality measures. The authors show that neither of aforementioned simple
behavioural appropriateness~\cite{RozinatA08IS}, advanced behavioural
appropriateness~\cite{RozinatA08IS}, ETC
precision~\cite{Munoz-Gama.Carmona/CIDM2011:ETConformanceExt:StabilityConfidenceSeverity},
AGNEs specificity~\cite{Goedertier.etal/JoMLR2009:ProcessDiscoveryArtificialNegativeEvents}, or PCC precision~\cite{Leemans2016} comply with their requirements for precision.
By design, our approach fulfils all those requirements instead, as shown in \autoref{sec:precision_recall_evaluation}.

\subsection{Behavioural Comparison in Software Engineering}
\label{sec:related_work:software_engineering}
In software engineering, a noticeable body of literature on automaton-based specification mining~\cite{Ammons2002,Lo.etal/2011:MiningSoftwareSpecifications} have proposed highly relevant contributions towards the behavioural comparison of state machines.

Javert~\cite{Gabel.Su/FSE2008:Javert} generates complex system specifications stemming from mined patterns. To that end, the technique applies sound composition rules of branching and sequencing on discovered simple patterns, thus achieving good scalability. Similarly to our solution, Javert resorts on automata theory for the composition steps and the representation of models. Our approach thus complements Javert in that it can measure the precision and recall of those returned models.

\citet{Shoham.etal/IEEETSE2008:StaticSpecificationMiningUsingAutomata} adopt an automata-based approach to automatically mine the specification of client interactions with APIs for object-oriented libraries. Their approach resorts on the notion of quotient automata to abstract on the representation of behaviour through an equivalence relation over states. Notice that the notion of behavioural quotient we propose applies to languages for obtaining a measurable comparison of systems behaviour regardless of their model's structure. Our language quotient framework is thus separate and integrable with the technique of \citet{Shoham.etal/IEEETSE2008:StaticSpecificationMiningUsingAutomata}, which could be employed to take advantage of their effective removal of spurious patterns.

\citet{Lo2006} propose a framework called QUARK (QUality Assurance framewoRK) for empirically assessing the automata generated by different miners.
Their assumption is that two models have to be compared: one reference and one
reverse-engineered from API interactions. This context is similar to ours in
that we also compare a reference process specification with another behavioural
abstraction, in our case stemmed from a set of execution traces of a process.
In their approach, they compute accuracy in terms of trace similarity. 
They first collect two samples of randomly generated traces, one per model. The precision is the proportion of samples generated by the reverse-engineered model that are accepted by the reference automaton. Dually, the recall is the proportion of traces that are generated by the reference automaton, and are accepted by the reverse-engineered one.
Our approach moves in the opposite direction: we abstract from traces and
compare systems, rather than comparing traces generated by the systems.
Remarkably, \citet{Lo2006} also propose measures that deal with probabilistic finite automata, based upon the Hidden Markov Models comparison.
Their study suggests the extension of our approach toward the analysis of probabilistic models as an opportunity for future research.

The use of simulated traces for system comparison, first reported in
\cite{Lang.etal/ICGI1998:DFALearningCompetition} and applied in
QUARK~\cite{Lo2006}, has been later criticised by \citet{Walkinshaw2008}.
A problem is that it is virtually impossible to cover the whole behaviour of a system by random walks.
This problem is of high severity especially because some faulty executions 
might remain unexplored by a random sample, which is of high relevance in 
software testing~\cite{Walkinshaw.etal/FM2009,Weyuker/ATPLS1983}.
To address this bias, \citet{Walkinshaw.etal/FM2009} propose an adaptation of 
the original Vasilevski/Chow 
W-Method~\cite{Chow/IEEETSE1978:TestingSoftwareDesignModeledbyFSMs,Bogdanov.etal/FAC2006:TestingMethodsforXMachinesReview}.
 Their technique is aimed at generating test sets that cover all 
distinguishable runs of the model.
Furthermore, they refine the notions of precision and recall to account for not only the traces that are mutually accepted by the compared models, but also to inspect the capability of the two to reject traces that are not compliant with the target behaviour.
In our context, to-be-rejected traces are not considered as we assume the log to stem from registered correct system runs.
However, we see in this aspect an endeavour for future work: an extension of our language-quotient based approach that accounts for the semantic discrimination of runs that ended up in positive outcomes from those that do not, similarly to what was done by \citet{PoncedeLeon.etal/InfSci2018:Incorporatingnegativeinformationtoprocessdiscoveryofcomplexsystems} and \citet{Chesani.etal/JPNOMC2009:ExploitingInductiveLogic}.

\citeauthor{Walkinshaw2013} extend their seminal work~\cite{Walkinshaw2008} in two directions.
First, they expand the comparison measures with classical data mining ones such as specificity and balanced classification rate.
Second, they introduce the LTSDiff algorithm, which compares models under a structural perspective, rather than a behavioural one.
In this paper, we do not consider the structural similarity, thus being
model-agnostic and not imposing requirements on the determinism or minimality
of input systems.
However, our technique could be improved by integrating the cognitive-like, iterative approach of the LTSDiff algorithm, based on an intermediate results expansion starting from landmarks~\cite{Sorrows.Hirtle/COSIT1999:TheNatureofLandmarksforRealandElectronicSpaces} (i.e., matching subsets of the inputs).

\citet{Quante.Koschke/WCRE2007:DynamicProtocolRecovery} first consider a measure for model comparison taking into account the language of involved automata without the analysis of generated traces.
They devise to that extent an approach similar to that of edit distance.
A minimised union automaton is first created between the input ones.
Thereupon, a concurrent synchronous run is executed on each of the models and the union automaton.
It determines the number of edits, that is, the transitions to be removed from the union (never traversed) or added to the input model (unfolded self-loops).
The final measure is computed by averaging the distances in terms of edits of the models from the union automaton.
Our approach revolves around language comparison based on the analysis of automata as well.
However, it discriminates between precision and recall, thus giving a more precise picture of the accuracy of the mined model with respect to the reference of the log.

\citet{Pradel.etal/ICSM2010:FrameworkForEvaluationOfSpecificationMinersFSMs}
use a variant of the $k$-tails
algorithm~\cite{Biermann.Feldman/IEEETC1972:OntheSynthesisofFSMsfromSamplesofTheirBehavior} to compare mined and reference models. To that extent, they first generate
the union of the finite automata given as input models. Then, they adapt the
$k$-tails algorithm to approximate the matching of those states from which
common (sub)sequences of length $k$ can be generated. Such states are then
merged. Precision is computed based on the number of shared transitions between
the mined model and the intersection of the reference model with the automaton
subject to $k$-tails merging. Recall is computed analogously but switching
mined and reference model.
The usage of $k$-tails to merge states allows for the processing of models mined from noisy or incomplete traces.
On the other hand, the fact that matches are not exact and subject to a proper choice of $k$ may lead to an inaccuracy of results, as emphasised by~\citet{Walkinshaw2013}.
As in
\cite{Pradel.etal/ICSM2010:FrameworkForEvaluationOfSpecificationMinersFSMs},
our approach considers a language abstraction of systems for comparison
purposes, without generating trace sets.
In contrast to it, we do not resort to structural approximations over the input
specifications.
On the one hand, it favours accuracy.
On the other hand, an adaptation of our approach to account for noise, as in \cite{Pradel.etal/ICSM2010:FrameworkForEvaluationOfSpecificationMinersFSMs}, is an interesting direction for future work.

Interesting research avenues for future work stem from the extension of language measures to cater for more expressive models than automata-based behaviours labelled by the sole activity name.
\citet{Berg.etal/FASE2006:RegularInferenceforStateMachineswithParameters} present an algorithm that extends the transition labels of inferred automata with propositional guards on function parameter values, based on queries over the observed runs of protocol implementations.
Later, \citet{Lorenzoli.etal/ICSE2008:AutomaticGenerationofSwBehavioralModels} with GK-tail and   \citet{Walkinshaw.etal/ESE2016:InferringExtendedFSMModelsfromSwExecutions} with MINT (Model Inference Technique) propose techniques to infer Extended Finite State Machines (EFMSs), namely automata with guards on data stored in the program memory of the program, from a set of program traces.
\citet{Emam.Miller/ACMTOSEM2018:ExtendedProbabilisticFSAfromSwExecutions} improve on the existing EFMS inference algorithms with a stochastic-based approach to include in the discovered models of behaviour the probabilities that determine the likelihood of transitions.
The techniques proposed by \citet{Narayan.etal/ACMTECS2018:MiningTimedRegularSpecificationsfromSystemTraces} discover behavioural rules based on Timed Regular Expressions (TREs), which are equivalent to timed automata~\cite{Asarin.etal/JA2002:TimedRegularExpressions}, to cater for constraints related to real-time.
\citet{Krismayer.etal/CAiSE2019:ConstraintMiningCyberPhysicalSystems} illustrate a technique to mine constraints out of event logs that store the information of software systems operating in the cyber-physical domain. The analyzed constraints express rules on the sequence of actions, exert limitations on time spans, and predicate on attribute values of the events.
These works inspire interesting future endeavours for our research to measure precision and recall of models of behaviour including data and time aspects. 
\section{Discussions}
\label{sec:discussions}

This section summarizes the main results of this work and the lessons we learned on the way to obtaining them (\autoref{sec:discussions:lessons}),
discusses threats that could have influenced the validity of the reported conclusions (\autoref{sec:discussions:threats}), and
suggests how the presented results may contribute to software engineering practices (\autoref{sec:discussions:se:practice}).

\subsection{Results and Learned Lessons}
\label{sec:discussions:lessons}

For over a decade, through the design of various measures and analytics, the process mining community shaped the intuition underpinning the comparison of a specification of a dynamic system with its executions.
Intuitively, a specification should allow for the behaviour seen in the executions and forbid other behaviour~\cite{Aalst16}.
It is only recently that this intuition started to take a concrete form in terms of formal properties that such comparison measures should satisfy~\cite{TaxLSFA17,Aalst18a}.
A repertoire of properties a given measure satisfies can then be seen as a proxy to its usefulness, \ie if a practitioner is interested in certain properties she should pick and use a measure that satisfies them.
The work reported in~\cite{TaxLSFA17} was the first attempt to propose such properties.
On several informal occasions, the properties from~\cite{TaxLSFA17} were criticized for being somewhat na\"ive.
Indeed, they can be satisfied by a measure that, for example, returns zero for any input log and the most permissive specification, i.e., the one that accepts any word, and otherwise returns some constant greater than zero and less than or equal to one.
Obviously, such a measure is not particularly useful.

One issue with the properties from~\cite{TaxLSFA17} is that two specifications, one of which exhibits strictly more behaviour than the other, are allowed to have the same precision value with a given log.
In this work, we strengthened the properties from~\cite{TaxLSFA17} to require a less permissive specification to be more precise with respect to the log (see \lemmaname~\ref{lem:fixed:num} and \lemmaname~\ref{lem:fixed:den}).
As this requirement introduces an additional restriction, every measure that satisfies our properties is guaranteed to satisfy the corresponding less restrictive properties from~\cite{TaxLSFA17}.
Finally, all the other properties from~\cite{TaxLSFA17} are trivially, by definition, satisfied by every precision measure that follows \definitionname~\ref{def:pm:precision}.

In~\cite{Aalst18a}, 21 properties for conformance measures are proposed.
Among those properties, two address both recall and precision measures, five are specifically concerned with recall measures, while six address precision measures.
The properties for precision measures aim to diversify and strengthen the properties from~\cite{TaxLSFA17}.
Recently, in~\cite{Syring2019}, it was shown that all the precision and recall properties from~\cite{Aalst18a} hold for the precision and recall measures presented in~\autoref{sec:precision_recall}.
For example, \propsname~5~and~8 in~\cite{Aalst18a} follow immediately from \lemmaname~\ref{lem:fixed:num} and the fact that a language measure is deterministic (see~\autoref{sec:framework_definition}), while \propsname~3~and~9 in~\cite{Aalst18a} follow immediately from \lemmaname~\ref{lem:fixed:den} and the definition of a language measure.

As of today (December 2019), there is no precision measure, other than the quotient (\definitionname~\ref{def:pm:precision}) instantiated with the short-circuit measure (\definitionname~\ref{def:short:circuit:measure}) induced by the eigenvalue measure (\autoref{sec:framework_instantiations}), that
satisfies the strict properties captured in \lemmaname~\ref{lem:fixed:num} and \lemmaname~\ref{lem:fixed:den}, and all the properties for precision presented in~\cite{TaxLSFA17,Syring2019}.
Note that \lemmaname~\ref{lem:fixed:num} and \lemmaname~\ref{lem:fixed:den} address both finite and infinite languages.

\subsection{Threats to Validity}
\label{sec:discussions:threats}

An important concern about an experiment is how valid its results are~\cite{Wohlin2012}. Note that threats to validity relate to our empirical analysis--formal properties of our measures are not subject to these threats.
According to~\cite{cook1979}, there are four types of threats to the validity of experimental results: internal, construct, conclusion, and external validity.
Next, we discuss several identified aspects that threaten the construct and conclusion validity of the results of our experiments reported in~\autoref{sec:precision_recall_evaluation}.
Aspects that threaten construct validity refer to the extent to which the experiment setting reflects the phenomenon that is studied.
Aspects that threaten conclusion validity relate to the ability to make correct conclusions about the observed outcomes in response to the treatments of the experiment~\cite{Wohlin2012}.

With respect to \emph{construct validity}, we first focus on the threats of
\emph{incomplete} selections of subjects and their \emph{random heterogeneity}.
Our selection of precision and recall measures for the experiment was initiated
with the six precision measures studied in~\cite{TaxLSFA17} and, then, extended
to nine precision and six recall measures,
cf.~\autoref{tab:approaches:labelsandrefs:precision}
and~\autoref{tab:approaches:labelsandrefs:recall}.
The selection was primarily driven by the availability of open-source implementations of the measures in ProM and CoBeFra frameworks~\cite{DBLP:conf/cidm/BrouckeWVB13} in 2017.
Hence, the selection of the measures for experimentation may not be complete.
In the recent study mentioned above, namely in~\cite{Syring2019}, eight recall and eleven precision measures were evaluated.
One of these eight recall measures is the eigenvalue-based recall presented in this work.
From the remaining seven recall measures, six are also evaluated in \autoref{sec:precision_recall_evaluation}; the baseline recall measure presented in~\cite{Syring2019} is equivalent to proper completion measure.
Hence, one recall measure was studied in~\cite{Syring2019} but not evaluated in~\autoref{sec:precision_recall_evaluation}, viz. causal footprint recall~\cite{Aalst16}.
Note, however, that causal footprint recall was shown in~\cite{Syring2019} to
fulfil only four out of seven recall-related properties from~\cite{Aalst18a},
whereas our recall measure, as shown in~\cite{Syring2019}, fulfils all the
seven properties.
Out of eleven precision measures evaluated in~\cite{Syring2019}, one is the eigenvalue-based precision presented in this work, while seven are also evaluated in~\autoref{sec:precision_recall_evaluation}.
The baseline precision measure from~\cite{Syring2019}, not evaluated in this work, is undefined for specifications that encode infinite collections of executions and, thus, can be seen as a theoretical baseline measure with a rather limited practical applicability.
Furthermore, we did not study behavioural precision~\cite{Weerdt2011} and weighted negative event precision~\cite{Broucke2014}, which both aim to improve the measure from~\cite{Goedertier.etal/JoMLR2009:ProcessDiscoveryArtificialNegativeEvents} evaluated in~\autoref{sec:precision_recall_evaluation}.
However, in~\cite{Syring2019}, all the three measures
from~\cite{Goedertier.etal/JoMLR2009:ProcessDiscoveryArtificialNegativeEvents,Weerdt2011,Broucke2014}
 were demonstrated to 
violate four out of eight properties for precision measures, which suggests that the
measures are qualitatively similar.
Note that our precision measure, as shown in~\cite{Syring2019}, fulfils all the eight properties.

We further need to consider a potentially \emph{restricted generalizability
across constructs}, since several issues with measures of our comparison set may not
render it useless. Also, the choice of properties to consider may be subject to discussion.
While these threats cannot be discarded in their entirety since their
definition and selection follows conceptual arguments, we observe that our definition of properties is consistent with those defined by other research~\cite{Syring2019,TaxLSFA17}.
Also, we acknowledge that violating a property 
does not mean that it will be violated frequently or with high severity in a specific set of application
scenarios.

An important threat to \emph{conclusion validity} related to 
\emph{`fishing'} for a specific result, and indeed, a biased selection of
the measures to compare against would be problematic. However, our experiments
considered a large collection of precision and recall measures that are
commonly used and which also rely on different formal foundations. This restricts
the potential impact of this threat. Considering \emph{reliability of measures}
and
potentially \emph{low statistical power}, we acknowledge that we proved the
violation
of certain properties through counterexamples, relying on the third-party
implementations of the evaluated precision and recall measures. While we
observed that not all of the tested synthetic examples lead to the respective
violations, our main claims relate to the formal guarantees that our measure provides and which other measures miss. 
This is a formal argument that is not affected by statistical considerations. 
Moreover, our datasets are limited to
the BPIC logs and specifications synthesized with one discovery algorithm.
In the model-to-model comparisons, we have been limited to two
designed models and a small set of discovered specifications based on one
discovery technique. While we see no evidence for one of these aspects
affecting the conclusion validity, they constitute a certain threat. Lastly, to
exclude \emph{random irrelevancies} in the experimental setting, we ran our
experiments for several times to record average execution times, verifying that
the same outcome is observed.

\subsection{Software Engineering Practice}
\label{sec:discussions:se:practice}

Behavioural specifications like the ones used in this paper are extensively used in practice for problem solving (95\%) and documentation (91\%)~\citep{hutchinson2014model}, with visual use case models (39\%) and business process models (23\%) being among the most popular ones~\citep{wagner2019status}. 
The behavioural comparison of representations of dynamic systems is at the
core of many software engineering techniques. Our measures, therefore, have a
potential to influence software engineering practice. In the remainder,
we discuss the implications for several exemplary areas, such as software
configuration management, model-based software engineering, and software
testing.

Software configuration management (SCM)~\cite{leon2015software}
comprises models and methods to track,
organize, and control the evolution of the artefacts involved in the
development of a software system. It is motivated by changes in the
requirements to address, the people involved in development, the policies and
rules to obey, or a project's schedule. The behavioural
quotients defined in our work may support several of common SCM practices once
configuration items (CIs) such as source code modules, test cases, and
requirements specifications have been identified. For instance, SCM requires
the definition of baselines, formally established versions of CIs that
structure the progress of a development project. An example is the functional
baseline that describes an item's functional, interoperability, and
interface characteristics~\cite{keyes2004software}. In a functional
configuration audit, as part of SCM, behavioural quotients can enable an assessment
of the degree to which the baseline has been reached. Moreover, to assess the
impact of change requests on CIs that capture behavioural information, such as
UML activity diagrams or source code fragments, behavioural quotients can be used to quantify
the impact of the respective request.

Turning to specific software development methodologies, we consider approaches
for model-based software
engineering (MBSE)~\cite{DBLP:series/synthesis/2012Brambilla}. In essence, they aim at
structuring the development process around abstract models and automated code
generation through model transformations. However, current MBSE practice faces
challenges related to the maintenance of code generators that transform models
into executable code, concerning the design of domain-specific languages
(DSLs), and related to the integration within agile development
projects~\cite{DBLP:books/sp/18/Kautz0R18}. Behavioural quotients may support
initiatives to overcome these challenges by assessing the behavioural
difference of models to identify required changes in code generators, by
comparing instances defined in DSLs to assess their commonalities for
consolidation, and by providing notions of model consistency to enable the
identification of the impact of frequent changes of models.

As a final example, we refer to notions that may guide the definition of test
cases for a particular system~\cite{myers2011art}.
Considering approaches for white-box testing at
the level of functional units, coverage is an important quality
criterion~\cite{Berner2007,Tuya2016}. Behavioural quotients may be employed to
assess the coverage achieved by a test suite, where the abstraction employed in
the definition of the languages over which the quotients are computed enables
the realization of various coverage criteria, such as function-based or
branching-based coverage. As already discussed
in~\autoref{sec:related_work:software_engineering}, quotients may also be
considered as a basis to quantify test results, once they are lifted to a model
that distinguishes accepted and rejected test runs. By employing coarse-grained
abstractions that hide the internals of
units, in turn, this approach is also useful in an assessment of the results
obtained through black-box testing.

\section{Conclusion}
\label{sec:conclusion}

This article proposed behavioural quotients as a means to relate the behaviours of dynamic systems.
A quotient takes a language measure as a parameter, which is responsible for mapping the system's behaviour onto the numerical domain for further comparisons with other behaviours.
Three example language measures are put forward in the article: one over finite, one over irreducible regular languages, and one over regular languages.
The language measure over regular languages is based on the notion of topological entropy. 
It is used to instantiate behavioral quotients into language coverage, precision, and recall measures for software engineering and process mining.
The reported evaluation results demonstrated that the proposed quotients can be computed in a reasonable time and qualitatively outperform (based on the property of the monotonicity) all the existing measures for precision in process mining.

Future work on behavioural quotients will aim at extending and improving them in several ways.
First of all, behavioural quotients can be extended to behavioural representations of dynamic systems other than their languages, \eg behavioural profiles~\cite{WeidlichMW11,PolyvyanyyADG16}, declarative models~\cite{Aalst.etal/CSRD09:DeclarativeWFsBalancing,DiCiccio.Mecella/ACMTMIS2015:DiscoveryDeclarativeControl}, and hybrid representations~\cite{Maggi.etal/BPM2014:AutomatedDiscoveryHybrid,DeSmedt.etal/DSS2015:FusionMinerProcess} in light of their underlying expressibility as finite-state automata \cite{PolyvyanyyADG16,DiCiccio.etal/IS2017:ResolvingInconsistenciesRedundanciesDeclare,DeSmedt.etal/BPI2016:ModelCheckingofMixedParadigmDiscovery}. 
Furthermore, one can propose new language measures for instantiating behavioural quotients and study interpretations and computational complexities of these measures.
Moreover, language quotients can be improved to account for multiplicity and similarity of words.
The quotients proposed in this article abstract from multiplicities of words and consider words as being distinct even if they differ only in a single symbol.
Also, the works on the automated inference of models of behaviour including data attributes, time, and probabilities in their transitions~\cite{Lorenzoli.etal/ICSE2008:AutomaticGenerationofSwBehavioralModels,Walkinshaw.etal/ESE2016:InferringExtendedFSMModelsfromSwExecutions,Narayan.etal/ACMTECS2018:MiningTimedRegularSpecificationsfromSystemTraces,Krismayer.etal/CAiSE2019:ConstraintMiningCyberPhysicalSystems,Emam.Miller/ACMTOSEM2018:ExtendedProbabilisticFSAfromSwExecutions} inspire an interesting future avenue for our research, \ie to measure precision and recall of such extended models.
Another context in which one can investigate the applicability and adaptation of our approach is that of the models of behaviour expressing distributed systems invariants~\cite{Grant.etal/ICSE2018:Inferringandassertingdistributedsysteminvariants}.
Finally, one can design new quality measures that relate arbitrary numbers of behaviours (not just behaviours of a specification and its execution log), \eg to establish a basis for comparing results of various process querying methods~\cite{PolyvyanyyOBA17}, models of behaviour that summarise traces at varying levels of abstraction~\cite{Hamou-Lhadj.Lethbridge/ICPC2006:SummarizingTracesSoftwareSystem}, and different behavioural representations~\cite{Prescher.etal/SIMPDA2014:FromDeclarativeToImperative}.

The recent observation that all the state-of-the-art precision measures in process mining fail to satisfy some basic desired properties~\cite{TaxLSFA17}, initiated a discussion on what properties should the standard quality measures, like precision, recall, and generalization, possess~\cite{Aalst18a}. 
The precision and recall defined as language quotients, such as the entropy-based measures, satisfy all the properties proposed in~\cite{TaxLSFA17,Aalst18a} (see \autoref{sec:precision_recall}).
This result is due to the fact that these measures are defined as ratios over language measures.
Consequently, they satisfy the properties of non-negativity, have null sets, and are strictly monotone (see~\cite{tao2013introduction} for details on the standard properties of measures).
Therefore, we propose to shift the focus of the discussion from the desired properties of the quality measures to the desired properties of measures over languages that are used to define them.
For example, it is interesting to study if an additional requirement of \emph{additivity} or \emph{sub-additivity} over a language measure used to instantiate precision and recall quotients leads to their useful properties.

The monotonicity property allows comparing measured values over behaviours, but not reasoning over their absolute values.
Indeed, the difference in the measured values over two languages in the containment relation has no particular meaning.
Also, it is not established which concrete precision values denote precise or imprecise models with respect to a given event log.
Future works will tackle these problems in dialogue with domain experts.

The devised behavioural quotients were tested using real-world and synthetic logs of IT systems that govern the execution of business processes and synthetic logs of software specifications.
In \autoref{sec:discussions:se:practice}, we discussed the use of behavioural quotients for software configuration management, model-based software engineering, and software testing.
However, future work will need to 
adapt to those use cases the behavioural quotients we present in this work, 
as yet unforeseen 
obstacles may arise in their immediate adoption in software engineering practices.


\begin{acks}
Artem Polyvyanyy was partly supported by the Australian Research Council Discovery Project DP180102839.
Artem Polyvyanyy and Matthias Weidlich are grateful for the support by the Universities Australia (UA) and the German Academic Exchange Service (DAAD) as part of the Joint Research Co-operation Scheme.
The work of Claudio Di Ciccio and Jan Mendling received funding from the EU H2020 programme under the MSCA-RISE agreement 645751 ({RISE\_BPM}) and the Austrian Research Promotion Agency (FFG) grant 861213 (CitySPIN).
Claudio Di Ciccio was partly supported by the MIUR under grant ``Dipartimenti di eccellenza 2018-2022'' of the Department of Computer Science at Sapienza University of Rome.
We would like to thank Anna Kalenkova for her review of our manuscript and comments that helped to improve it.

\end{acks}

\bibliographystyle{ACM-Reference-Format}
\bibliography{bibliography.min}

\end{document}